\newcommand{\linia}{\rule{\linewidth}{0.4mm}}
\renewcommand{\maketitle}{ 

    \begin{titlepage}

    \begin{center}

    {\Large Institute of Mathematics \\
		 Faculty of Mathematics and Computer Science\\
		Jagiellonian University\\}

    \end{center}

    \vspace{3.5cm}

		\begin{center}

    \LARGE Master Thesis

    \end{center}
    
    \noindent\linia

    \begin{center}

      \huge \textsc{\@title}

     \end{center}

    \noindent \linia

    \vspace{0.5cm}
		
		\begin{center}
		
		\LARGE \@author \par

		\end{center}

    \vspace{0.1cm}

		\begin{center}
		
    {\Large Supervisor: Dominik Kwietniak}\\
    
		\end{center}
		
    \vspace*{\stretch{6}}
		
		\vspace{2cm}
    
    \begin{center}
    
    \begin{figure}[!h]
    \includegraphics[height=4cm]{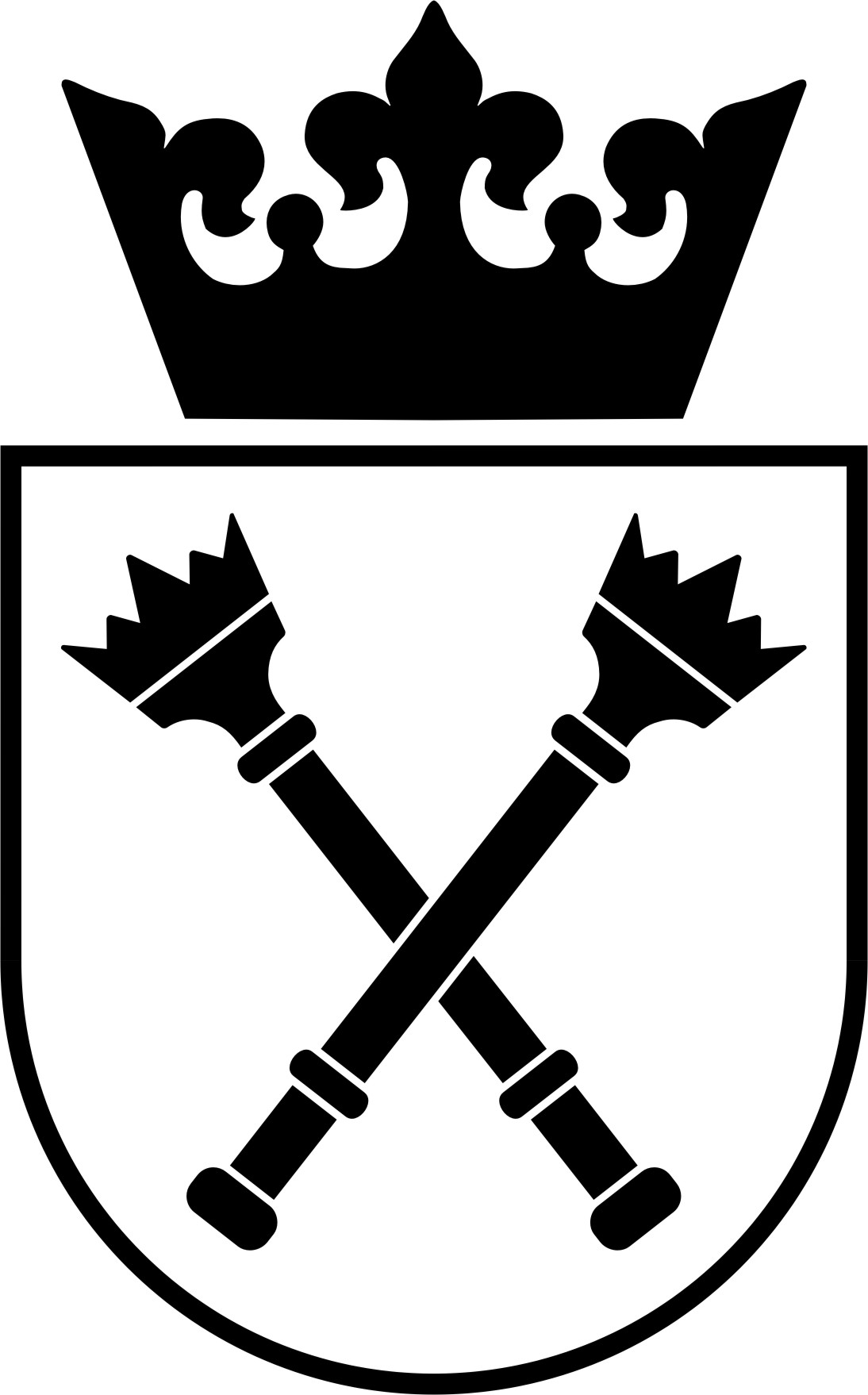}
    \end{figure}

    \Large Cracow, 2013

    \end{center}

  \end{titlepage}%

}
\title{Mathematical models for epidemic spreading \\ on complex networks}
\author{Wojciech Ganczarek}
\begin{document}
\selectlanguage{english}
\renewcommand*{\figurename}{FIG.}
\newtheorem{thm}{Theorem}
\newtheorem{lemma}{Lemma}
\newtheorem{df}{Definition}
\newtheorem{prop}{Property}
\newtheorem{remark}{Remark}
\newtheorem{ex}{Example}
\newfloatcommand{capbtabbox}{table}[][\FBwidth]

\maketitle

\chapter*{\centering Abstract}
\addcontentsline{toc}{chapter}{Abstract}
We propose a model for epidemic spreading on a finite complex network with a restriction to at most one contamination per time step. Because of a highly discrete character of the process, the analysis cannot use the continous approximation, widely exploited for most of the models. Using discrete approach we investigate the epidemic threshold and the quasi-stationary distribution. The main result is a theorem about mixing time for the process, which scales like logarithm of the network size and which is proportional to the inverse of the distance from the epidemic threshold. In order to present the model in the full context, we review modern approach to epidemic spreading modeling based on complex networks and present necessary information about random networks, discrete-time Markov chains and their quasi-stationary distributions. 

\vspace{3.5cm}
\noindent \linia 
\vspace{0.5cm}

\selectlanguage{polish}
\begingroup
\let\clearpage\relax
\chapter*{\centering Streszczenie}
W niniejszej pracy proponuję model rozprzestrzeniania się epidemii na skończonej sieci złożonej z ograniczeniem do co najwyżej jednego zarażenia na krok czasowy. Ze względu na typowo dyskretny charakter procesu, w~analizie nie można zastosować przybliżenia ciągłego, które używa się w odniesieniu do większości modeli. Stosując podejście dyskretne, badam próg epidemii i rozkład kwazi-stacjonarny. Głównym rezultatem pracy jest twierdzenie dotyczące czasu mieszania dla omawianego procesu, który rośnie logarytmicznie z rozmiarem sieci i jest odwrotnie proporcjonalny do odległości od progu epidemii. W celu przestawienia proponowanego modelu w pełnym kontekście, omawiam współczesne podejście do modelowania epidemii na sieci złożonej i prezentuję konieczne informacje dotyczące sieci losowych, łańcuchów Markowa o czasie dyskretnym oraz ich rozkładów kwazi-stacjonarnych. 

\endgroup
\selectlanguage{english}

  \tableofcontents

\chapter*{Preface}
\addcontentsline{toc}{chapter}{Preface}
The problem of epidemic spreading has always been an important issue in the society. Especially in the past, the human race suffered a lot because of highly contagious diseases. One of the most significant examples is the Black Death, the pandemic of bubonic plague that killed 1/4 of the 100 million population of Europe in 14th century. Mathematical approach to this problem has also quite long history, most probably initiated by D.Bernoulli in 1760 \cite{may}. However, we can also think of epidemic spreading among plants, animals or computers. In the 20th century the problem was studied intensively: scientists reduced this highly complicated issue to sets of differential equations and a few parameters like rate of contact between susceptible and infectious individuals, and variables like density of susceptible people etc. A broad review of the literature associated with classical approach to mathematical modeling can be found e.g. in \cite{may,bailey}. 

The basic questions of mathematical modeling of epidemic spreading are closely related to the ones we ask while solving problems of the public health. That is: in what kind of conditions we can expect epidemic outbreak? How many people will be infected afterwards? How much time does it take to stabilize this process? All of those questions are extremely hard to answer in reality. There are so many factors that affect epidemic spreading: the way of contamination of a particular virus/bacteria, the frequency and character of contacts between people, duration of the disease, resistance or immunity of particular individuals or, for instance, the weather. This is why solving epidemiological problems involves researchers from many disciplines, such as biology, computer science, social sciences, physics and, last but not least, mathematics.

Nowadays, along with the advent of the complex system science (see \cite{NewmanSIAM} for a review), there has been a vivid development of the mathematical modeling for epidemic spreading based on complex networks. The most apparent, and at the same time the most significant change, is that in classical approach mentioned above each individual could contact and contaminate any other -- which is not the most realistic assumption. Now, when network structure comes into play, the contacts are restricted to neighboring nodes only, which for instance in the context of social networks can be thought of as acquaintances. The structure of connections between individuals appears to affect dynamics of the process significantly, what we will see especially in the case of scale-free networks.

Networks seem to be an omnipotent framework for solving different many-body problems from almost all the branches of science: epidemic, gossip or in general: information spreading, technological networks problems (for instance: telephone, internet or power grid), neural, ecological or biochemical networks, and much more, mainly related to society, biology and computer science. 
Giving promising results, the network-related approach appear to be the right way of investigating complex systems and will surely play an important role in the contemporary science. Especially because the need of solutions for complex system rises, and as Stephen Hawking said: {\sl ``I think the next century will be the century of complexity''}. Also the behavior of a network itself is an interesting field of research: formation of networks, network search or privicy has been widely studied in last 10--20 years (see \cite{BarabasiRMP,Networks} for a review). In a sense, the investigation towards epidemic spreading goes now on in two ways: one of them is to understand the structure of social networks, and the second: development of mathematical models that drives dynamics on them.

The aim of this dissertation is to provide an introduction of the modern approach to epidemic spreading in the context of complex networks and stochastic processes. Moreover, we propose a new way of tackling epidemiological problems by a single infection epidemic spreading model. Obviously, this elaboration is far from being complete, but an interested reader can acces the rapidly developing realm of complex system by the wide range of references that we provide. The text is organized as follows. In Sec.~\ref{introduction} we give a review of models for random networks, which are used to mimic the real ones in epidemic spreading and other kinds of modeling. In Sec.~\ref{markov} some rudiments of stochastic processes and Markov chains are presented. Moreover, we also mention some facts about quasi-stationary distributions, which are useful to describe long time behavior of epidemies. Sec.~\ref{models} is dedicated to modern models for epidemic spreading, while Sec.~\ref{abm} is focused on the model that we propose.
\chapter{Network models}
\label{introduction}
Scientists representing a broad range of fields -- from biology and chemistry, telecommunication and electric engineering, up to sociology, economy and political sciences -- deal with higlhy complicated, multicomponent, interacting systems. It could be a food web, a metabolic system, a telephone network or power grid, the problem of gossip spreading, crisis on financial market or changes of people's decisions during parliamentary elections. All of the examples above, although associeted with totally different phenomena, share at least one common feature: they are unbelievebly complex. Think of an example of parliamentary elections. What does the decision of a particular voter depend on? Party, beliefs, religion, financial situation, appearance of candidates... Then add interactions: people talk with each other, they try to convince their neighbors. Moreover: all of them are influenced by media. And then finally, in the day of elections, the wheather breaks down and our voter stays home and does not vote at 
all. How anyone can predict the effect of common behavior of millions of such voters? That really sounds impossible. The point is to divide the problem into a group of managable tasks. One for sure would be, sticking to our elections example, to investigate people's opinions. Then, to classify their {\sl interactions}, i.e. to distinguish different ways of passing information. Then, and this is the topic of the present chapter, we have to know the {\sl pattern} of these interactions. We know what kind of information can be spread out (''Vote for this and that guy!''), we know how this information can be spread out (e.g. during a long evening in a local pub) but, what is probably the most important thing if we would like to treat the problem globally, we do not know the routes that a piece information can pass. The search of pattern of interaction is basically a reduction of a whole system to a simple network: to points connected by some links. In the example of voting it could be a social network, i.e. a 
network 
of people, who are connected between each other by relationships, common workplaces or, in general, any feature that makes it possible to transfer information between them. It turns out, that the structure of the network -- the {\sl pattern of interactions} -- plays often a crucial role in dynamics of processes going on in many different system, e.g. those listed above: food webs, power grids, financial markets etc. For instance, social networks, because of their structure, appear to be very weak in the face of epidemy, what we describe in details in Chapter \ref{models}. This is why we need to understand the structures and patterns that govern the networks. 

\begin{figure}[!ht]
  \includegraphics[height=8cm]{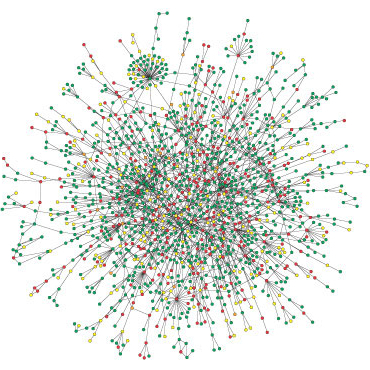}
  \label{fig:graph}
  \caption{An example of network representation of a complex system: a protein-protein interaction network for baker's yeast ({\sl Saccharomyces cerevisiae}). Picture taken from \cite{protein}.}
\end{figure}

Collecting detailed data about all networks we would need, is not always an easy task and quite often it is somehow expensive. Using available datasets people try to construct models of networks that mimic real systems in order to understand the structure and mechanisms that govern evolution of these structures. Real and model networks are used to predict different phenomena, for instance: epidemic spreading on a social network. In the present chapter we describe three of such network models: Erd\"os-Rényi random graph, Small Worlds and Scale-free networks. 

An outstanding, broad review of complex system phenomena can be found in \cite{ball}. In particular, a good prediction of vote distribution in Brazilian elections is given in \cite{vote}.

\section{Graph theory preliminaries}
\label{graph}
Before we describe network models, we need some definitions from the graph theory. We represent given network by a graph, which, in the simplest case, is just a set of nodes and connections between them. The structure of graphs can be, however, much richer -- there are also weighted or directed graphs, which we introduce in futher sections. As graph theory serves us as a framework for processes on complex networks, we try to present it in this context by giving some real-world examples. We start with a definition of a graph, its subgraph and the most basic relation between two nodes: neighborhood. In this elaboration we restrict our considerations to finite graphs. The terminology follows \cite{gt,gt2}.

\begin{df} A graph $G$ is a pair $G=(V,E)$, where $V$ is a non-empty finite set and $E$ is a subset of the set of unordered pairs of elements of $V$. The elements of E are called {\sl edges} (or: links, bonds), while the elements of $V$ are called {\sl vertices} (or: nodes, sites). A graph $G'=(V',E')$ is a subgraph of the graph $G=(V,E)$ if $V'\subseteq V$ and $E' \subseteq E$. The order of a graph $G=(V,E)$, denoted by $|G|$, is the cardinality of the set V.
\end{df}

\begin{df} If $\{i,j\}\in E$ we say that $i,j$ are neighbors (or: that they are adjacent). The neighborhood of a vertex $v$ is the set of vertices adjacent to $v$, we denote it by $N(v)$. The degree of a vertex, denoted by $deg(v)$, is the number of its neighbors.  
\end{df}

\begin{remark}
In a graph $G$, the sum of the degrees of the vertices is equal to twice the number of edges. 
The maximum number of edges for a graph $G=(V,E)$ is $\binom{N}{2}$, where $N=|G|$.
\end{remark}

In various realizations of networks we observe links, that point from one node to another, but do not point back. A real-world example is a food web ({\sl ''what eats what''}), where in particular storks eat snails, but not {\sl vice versa}.

\begin{df} A directed graph $G$ is a pair $G=(V,\vec{E})$, where $V$ is a non-empty finite set and $\vec{E}$ is a subset of the set of ordered pairs of elements of $V$. 
\end{df}

In the following we adopt a convention, that by ``graph'' we mean this kind of graph (undirected, directed or both) that makes sense in a given context. The same applies to $\vec{E}$, which for undirected graph mean the same as $E$. A useful representation of graphs, particularly convinient for algebraic manipulations, is given by the notion of {\sl adjacency matrix}. It is however worth stressing, that in large scale computations it is much more efficient to work on edge lists, i.e. sequences of pairs of vertices that are connected. 

\begin{df}
For a given graph $G=(V,\vec{E})$ of the order N the adjacencey matrix $\mathbb{A}=[a_{ij}]_{N\times N}$ is an $N\times N$ matrix defined by:
$$a_{ij}=\left\{ \begin{array}{ll}
                  1, & \textrm{if } (i,j)\in \vec{E}, \\
		   0, & \textrm{if } (i,j)\notin \vec{E}. 
                 \end{array} \right. 
$$ 
\end{df}

\begin{remark}
Adjacency matrix of an undirected graph is symmetric.  
\end{remark}

Consider now the following example from social networks: a 5 years old son and his father know each other, but opinions of father influence his son much more than he is influenced by his son's opinions (at least: usually). In this case a graph made of simple nodes and links only would not be an accurate picture of reality. We need here to put some {\sl weights} to represent the power of influence between people.

\begin{df}
Given a graph $G=(V,\vec{E})$, a {\sl weight function} is a function $\phi$ that maps the edges of $G$ to the non-negative real numbers, i.e.:
$$\phi:\vec{E}\to \mathbb{R}_+.$$
A pair $(G,\phi)$ is called a {\sl weighted graph}.
\end{df}

The crucial notion for epidemic spreading on graphs is the {\sl connectivity}. Below we introduce the notions of a path, connectivity between nodes and finally: connectivity of a graph.

\begin{df}
A path $P_{i_0,i_n}$ of the length $n$ from $i_0$ to $i_n$ in a graph $G=(V,E)$, is a sequence of $n+1$ vertices $(i_0,i_1,...,i_n)$ such that $\{i_k,i_{k+1}\}\in E$ for all $k=0,...,n-1$. A directed path $\vec{P}_{i_0,i_n}$ of the length $n$ from $i_0$ to $i_n$ in a directed graph $G=(V,\vec{E})$, is a sequence of $n+1$ vertices $(i_0,i_1,...,i_n)$ such that $(i_k,i_{k+1})\in \vec{E}$ for all $k=0,...,n-1$. If for a (directed) path we have $i_0=i_n$, when this path is called a (directed) loop.
\end{df}

\begin{df}
Nodes $i,j$ of a graph $G=(V,E)$ are said to be {\sl connected} if there exists a path from $i$ to $j$, we write then $i\leftrightarrow j$. Consider nodes $i,j$ of a graph $G=(V,\vec{E})$. Node $j$ is said to be reachable from $i$ if there exists a directed path $\vec{P}_{i,j}$ between them, we write then $i\rightarrow j$. 
\end{df}

\begin{df}
We say that $G=(V,E)$ is a connected graph if there exists a path between each pair of nodes of this graph, i.e. for all $i,j$ there is $i\leftrightarrow j$.  \\
\indent
We say that a directed graph $G=(V,\vec{E})$ is strongly connected if there exists a directed path between each ordered pair of nodes, i.e. for all $i,j$ there is $i\rightarrow j$. Furthermore, we say that a directed graph is weakly connected if by replacing all of its directed edges by undirected edges we get a connected undirected graph.
\end{df}

Given a graph it is {\sl a priori} not embedded in a metric space. Considering some real-world networks we can find a natural embedding for them, e.g. in case of a graph of airports or metro stations. In some cases it is also possible to search for so-called hidden metric spaces (see e.g. \cite{hid1,hid2}). But the most basic quantities that measure distances on a graph and its size are the {\sl shortest path} and {\sl diameter}.

\begin{df}
A geodesic path is the shortest path between two vertices.
\end{df}

\begin{df}
The diameter of a graph is the length of the longest geodesic path between any pair of connected vertices in the network.
\end{df}

There is yet another important quantity that characterizes a graph, called conductance \cite{Con}. It shows how well-connected a given graph is. Note first, that any graph $G$ defined by the adjacency matrix $\mathbb{A}$ can be equivalently described by a stochastic matrix $P=[p_{ij}]_{N\times N}$ defined as $p_{ij}=\frac{a_{ij}}{\sum_{j=1}^N a_{ij}}$ if $\sum_{j=1}^N a_{ij}>0$, or $0$ otherwise.

\begin{df} 
Conductance of a given graph $G$ described by a stochastic matrix $\{p_{ij}\}$ is:
\begin{equation}
\Phi (P)=\min_{S\subset V}\frac{\sum\limits_{j\in S,\,k\notin S}p_{jk}}{min\{|S|,|V-S|\}}.  
\end{equation}
\label{df:con}
\end{df}

After this handful of definitions, we provide two examples of graph.

\begin{ex}
Let us consider an example of a graph $G=(V,E)$ of the order 4, depicted in Fig.\ref{fig:graph}.
This graph is directed, strongly connected and unweighted. Using labels from the picture we 
can write \linebreak $E=\{(1,2),(2,1),(2,4),(3,2),(3,4),(4,3)\}$. The adjacency matrix of G is:
$$\mathbb{A}=\left(\begin{tabular}{cccc}
 0 & 1 & 0 & 0 \\
 1 & 0 & 0 & 1 \\
 0 & 1 & 0 & 1 \\
 0 & 0 & 1 & 0
\end{tabular}\right) .$$

\begin{figure}
  \includegraphics[height=4cm]{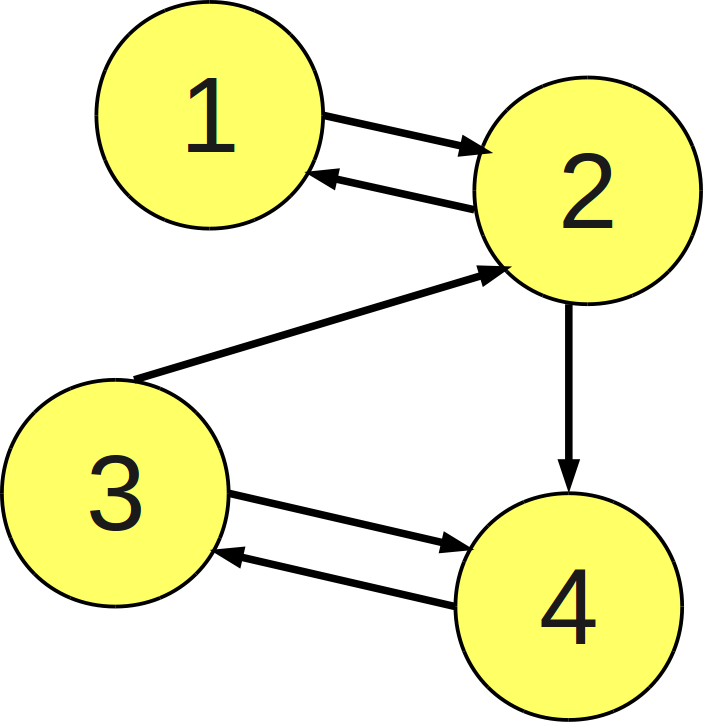}
  \caption{An example of directed and strongly connected graph.}
    \label{fig:graph}
\end{figure}

\end{ex}

Another example is a complete graph.

\begin{df}
We say that a graph $G$ is a complete (or full) graph if each node is a neighbor of all other nodes.
\end{df}

A typical feature of real network is clustering, for instance: people tend to form groups of friends. In such a group they know each other well -- we would have thus a lot of nodes between vertices of the subgraph representing this group -- but they are usually poorly connected with the people outside the group which we consider. We would like to develop a quantity to measure such a feature. For this aim we employ a factor called \textit{clustering coefficient} $C$ \cite{Networks}. By $C$ we can quantify \textit{transitivity} of a network, namely the feature that if vertices $i$, $j$ are adjacent, and $j$, $k$ are adjacent too, then it follows that $i,k$ are adjacent as well. In the sociological context we can say: {\textit ``The friend of my friend is also my friend''}. Let us translate it into graph-related notions: consider a set of nodes $\{i,\,j,\,k\}$. If $i,j$ and $j,k$ are mutual neighbors, we have a path of the lenght 2. If, additionally, $i,k$ are neighbors, then we have a \textit{loop} of the 
length 3 (triangle). Thus the idea will be to compare the number of loops of the lenght three with all the paths of length two.
\begin{df}
We define clustering coefficient C as:
\begin{equation}
   C=\frac{\#\{\textrm{triangles}\}\times 6}{\#\{ \textrm{directed paths of length 2}\}}.
\end{equation}
   \label{df:cc2}
\end{df}
The factor $6$ arises because a triangle $ijk$ contains $6$ paths of length 2: $ijk$, $jki$, $kij$ plus 3 paths going the opposite direction. There are also different ways of defining $C$ (not always equivalent), see e.g. in \cite{Networks}.

Adjacency matrix gives us full information about a graph. If we however would like to merge graphs in some classes, we need more general notions to describe them. For instance, one can imagine an ensamble of graphs with different adjacency matrices, but all of them having node degrees given by the same probability distribution. 

\begin{df}
The degree distribution is a probability distribution of degrees given by $\mathbb{P}(k)$ the fraction of nodes of a network with degree $k$, $\mathbb{P}(k)=\frac{N_k}{N}$.
\end{df}
That is, we consider the degree of a random vertex as a random variable. Therefore, in particular $\mathbb{E}(k)=\sum_k k\mathbb{P}(k)$ denotes the expected degree of a randomly chosen graph. We usually expect graphs with the same degree distribution to {\sl behave} similarly, e.g. we would anticipate an epidemy to spread on them in the same way. 

The degree distribution is however not the whole information about a graph. Below we introduce the notion of {\sl degree correlation matrix}, which comprises the probabilities that an edge points {\sl from} a vertex of degree $k$ {\sl to} a vertex of degree $k'$. Let us denote this variable by conditional probability $\mathbb{P}(k'|k)$. In the simplest case, e.g. for Erd\"os-Rényi random graphs (see Sec. \ref{er}) in the limit of size of a graph $|G|$ going to infinity, probabilities $\mathbb{P}(k'|k)$ do not depend on $k$ \cite{gnp-nocor}. We say then that such graphs exhibit \emph{no degree correlation}. As the exact formula for this conditional probability is going to be useful for us later on, we work out now this result.

Consider a graph $G=(E,V)$ of the size $N$ with no degree correlation and denote by $E_{kk'}$ the number of edges pointing from the set of vertices with degree $k$ to the set of vertices with degree $k'$. We stick to the case of undirected graph, and so $E_{kk'}$ is symmetric. The diagonal elements $E_{kk}$ will be thus equal to twice the number of connections within one class. The number of all edges departing from the set of vertices with degree $k$ is:
\begin{equation}
\sum_{k'}E_{kk'}=kN_k, 
\end{equation}
where $N_k$ is the total number of vertices of degree $k$. The conditional probability $\mathbb{P}(k'|k)$ is thus:
\begin{equation}
\mathbb{P}(k'|k)=\frac{E_{kk'}}{kN_k}.
\label{eq:a321}
\end{equation}
The sum of all the vertices' degrees is equal to twice the number of edges in the whole graph:
\begin{equation}
\sum_{k,k'}E_{kk'}=\sum_{k}k\frac{N_k}{N}N=\sum_{k}k\mathbb{P}(k)N=\mathbb{E}(k)N=2|E|. 
\label{eq:a123}
\end{equation}
The identity (\ref{eq:a123}) provides us the definition of the joint degree distribution:
\begin{equation}
\mathbb{P}(k,k')=\frac{E_{kk'}}{\mathbb{E}(k)N}.
\end{equation}
According to the last equation we transform (\ref{eq:a321}) into:
\begin{equation}
\mathbb{P}(k'|k)=\frac{\mathbb{E}(k)\mathbb{P}(k,k')}{k\mathbb{P}(k)}. 
\end{equation}
Finally, due to symmetry of $\mathbb{P}(k,k')$ in arguments, we get the so-called {\sl detailed balance condition}:
\begin{equation}
 k\mathbb{P}(k'|k)\mathbb{P}(k)=k'\mathbb{P}(k|k')\mathbb{P}(k').
 \label{eq:db}
\end{equation}
Now, as we know that for uncorrelated $\mathbb{P}(k'|k)$ does not depend on $k$, we may write $\sum_k k\mathbb{P}(k'|k)\mathbb{P}(k)=\mathbb{P}(k'|k)\mathbb{E}(k)$. Applying normalization condition $\sum_k \mathbb{P}(k|k')=1$ on the last equation gives:
\begin{equation}
\mathbb{P}(k'|k)=\frac{k'\mathbb{P}(k')}{\mathbb{E}(k)}. 
\label{eq:uncor}
\end{equation}
Note, that this simple formula was possible to get only thanks to the fact that $\mathbb{P}(k'|k)$ is independent of $k$. We obtained here the conditional probability $\mathbb{P}(k'|k)$ for the case of uncorrelated graph.

The type of the distribution $\mathbb{P}(k)$ that describes a network has a crucial impact on the dynamics. For practical issues associated with network analysis we distinguish two main network classes: {\sl heterogeneous} and {\sl homogeneous}, which roughly correspond to {\sl heavy-tailed} and, say, light-tailed degree distributions. 

A good example of homogeneous network is Erd\"os-Rényi random graph, described in the next section. In this type of graph the degree distribution is Bernoulli, or, in the limit of large size, Poisson distribution. In these distributions the average value represents, in a way, typical degree: values that are far away from the average are extremely rare.

This is, however, usually not the case in nature. Real networks often exhibit existence of so-called {\sl hubs}, i.e. nodes with exceptionally large number of neighbors. Their appereance can for instance manifest popularity of particular people (in social network) or be the result of optimization, as in the airport network. We have namely a few hub-airports like Atlanta, London, Frankfurt or Tokio, with dozens of connections and plenty of small airports with just a few connections. It works the same for scientific collaboration network \cite{NewGirv,simonsen}. Newman and Girvan \cite{NewGirv} created a network with nodes corresponding to scientists dealing with complex network problems, and put links between each two of them, who have at least one common paper on e-print archive arxiv.org . In Fig. \ref{fig:scinet} one can easily distinguish head-researchers, like Newman, Vespignani or Barab\'asi, who serve as hubs in this network. Moreover, what may be interesting for a Polish reader: in the right-up 
corner of the network we can observe the group of prof. Ho\l{}yst from Warsaw University of Technology.

\begin{figure}
\includegraphics[height=11cm]{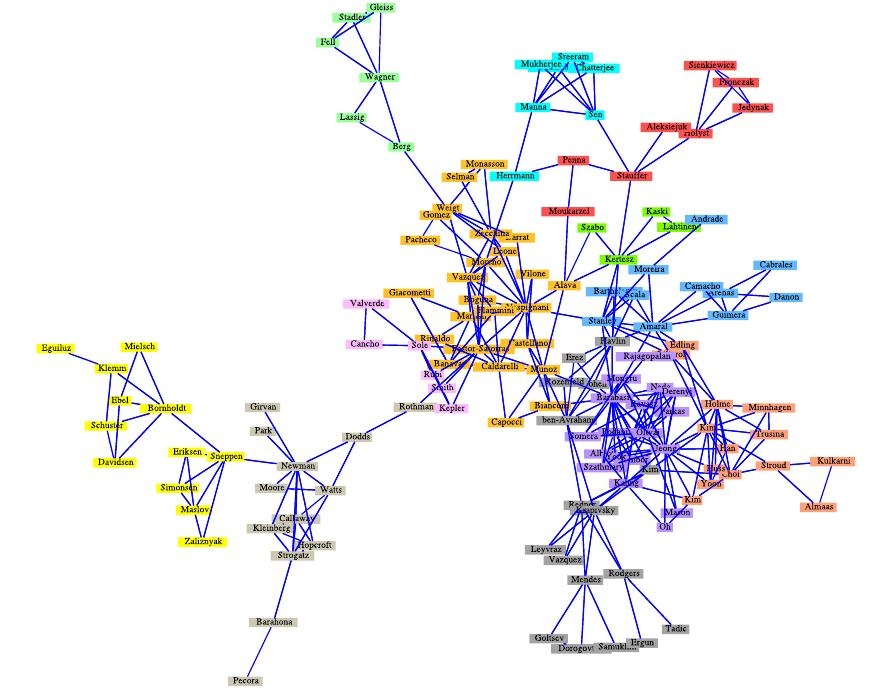}
 \caption{Scientific collaboration network: nodes represent researchers, links between two of them correspond to existence of their common paper. Picture taken from \cite{simonsen}.}
 \label{fig:scinet}
\end{figure}

The existence of highly connected nodes makes the type of the distribution significantly different form the homogeneous one: now the values which are far away from the average are non-negligible, and actually the average value does not tell us much about the distribution. Moreover, in many cases it causes heavy-tailed distributions, as described in \cite{pa}.
Intuitively, heavy-tails in distributions mean what we have just described: relatively (comparing to, for instance, normal distribution) large probabilities of values far from the average. Let us however introduce a strict definition of these objects \cite{heavy}.

\begin{df}
A distribution $F$ is said to be heavy-tailed if
$$\int_\mathbb{R} e^{\lambda x} F(dx)=\infty \textrm{ for all }\lambda>0.$$
Otherwise it is called light-tailed, i.e. if
$$\int_\mathbb{R} e^{\lambda x} F(dx)<\infty \textrm{ for some }\lambda>0.$$
\end{df}
So in particular for any light-tailed distribution $F$ on the positive half-line all moments are finite; 
$\int_0^\infty x^k F(dx)<\infty$ for all $k>0$.

Of course the strict definitions are in a sense useless in the real world problems: we will never get any infinity out of real data, as real network, although sometimes huge, are always finite. We have got, however, an intuition of what a heavy-tailed distribution is and how to examine our model of artificial networks. Prominent examples of heavy- and light-tailed distributions are power law and Poisson distributions respectively.

Coming back to the real world examples: take a look on Fig. \ref{fig:loglogi} where we put double logarithmic plots of degree distribution of four real networks: worldwide airport network (links correspond to connections between them), actors (links -- starring in the same movie), Autonomous Systems of the Internet and WWW ($k_in$ -- number of in-coming hyperlinks). Note, that degrees of nodes spans through several orders of magnitude. They, moreover, look linear at least in some regions, what suggest that they can be properly described by power law degree distribution: we are going to come back to this problem later.

\begin{figure}
  \includegraphics[height=8cm]{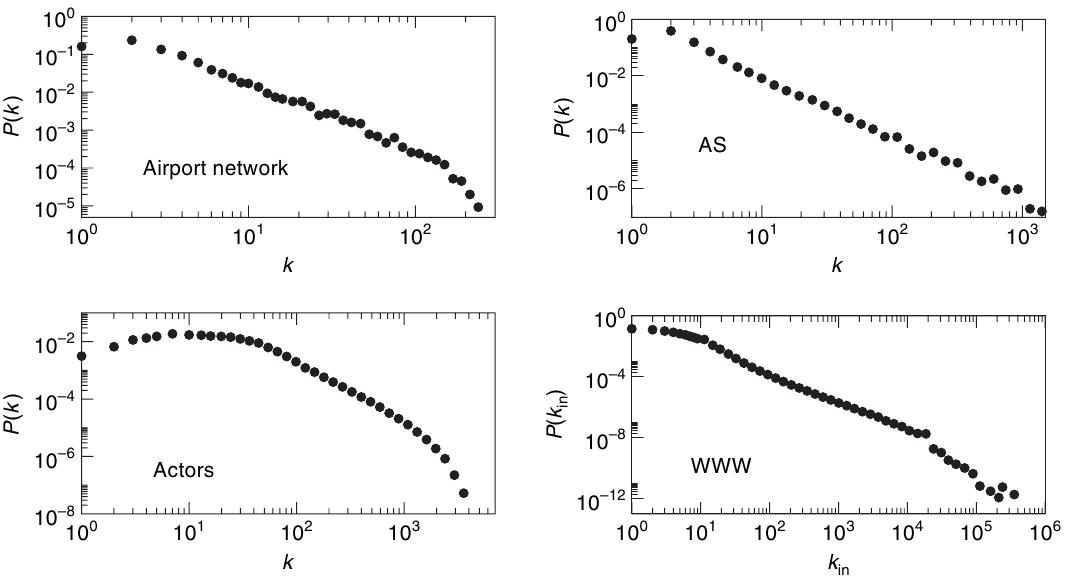}
  \caption{Degree distribution $\mathbb{P}(k)$ of four real-world networks in double logarithmic scale: worldwide airport
  network, actors, Autonomous Systems of the Internet and WWW. Picture taken from \cite{DynProc}.}
    \label{fig:loglogi}
\end{figure}

A heuristic way to distinguish heterogeneous and homogeneous type of degree distribution of a network is provided by {\sl heterogeneity parameter} defined as:
\begin{equation}
\kappa=\frac{\langle k^2 \rangle}{\langle k \rangle}, 
\label{eq:kappa}
\end{equation}
where $\langle . \rangle$ stands for an average value. For homogeneous networks we expect $\kappa\sim\langle k \rangle$, while for heterogeneous $\kappa\to\infty$. Again, in real world examples it is enough for ''detecting'' heterogeneity if $\kappa\gg\langle k \rangle$.


\section{Erd\"os-Rényi random graph}
\label{er}
The simplest model of social networks is the so-called Erd\"os-Rényi random graph \cite{ER}. Actually, we will be interested in a slightly different version of random graph proposed independently by Gilbert \cite{Gilbert} (nonetheless referred to as the Erd\"os-Rényi random graph). Although it appears not to be the most accurate model for real systems (see e.g. \cite{BarabasiRMP}), we will employ it as an example or special case. Conceptual simplicity of random graph allows often to achieve some exact solutions, not possible to get for general cases.\\  \indent 
As we will usually restrict considerations to connected graphs, besides the definition we will show that random graph are asymptotically almost surely connected. It was proven first by Erd\"os and Rényi \cite{ER2}, but we will follow the way presented in \cite{NOOC}. We also introduce the notion of properties that hold asymptotically almost surely.

\begin{df}
A random graph $G(n,p)$, where $n\in\mathbb{N}$, $p\in[0,1]$, is a graph with n vertices such that an edge between each pair of vertices exists independently with probability p.
\end{df}

\begin{df}
A property $\mathcal{P}$ holds asymptotically almost surely (a.a.s.) if the probability of this event goes to 1 when $n\to\infty$.
Then equivalently, for the case of graphs, we can say that almost every graph has property $\mathcal{P}$.
\end{df}

\begin{thm}[Almost every G(n,p) is connected]
For constant p a graph G(n,p) is connected a.a.s. .  
\end{thm}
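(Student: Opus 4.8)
The plan is to show that the probability that $G(n,p)$ is \emph{disconnected} tends to $0$ as $n\to\infty$. The starting observation is that if a graph on $n$ vertices is disconnected, then it has a connected component $C$ with $1\le|C|\le n/2$, and since $C$ is a whole component there are no edges at all between $C$ and $V\setminus C$. Hence the event ``$G(n,p)$ is disconnected'' is contained in the event that \emph{there exists} a vertex set $S$ with $1\le|S|=k\le\lfloor n/2\rfloor$ having no edges to its complement. I would then bound the probability of this existential event by a union bound over all such $S$.

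For a fixed $S$ with $|S|=k$, the $k(n-k)$ potential edges between $S$ and $V\setminus S$ are each absent independently with probability $1-p$, so the probability that none of them is present equals $(1-p)^{k(n-k)}$. There are $\binom{n}{k}\le n^k$ sets of size $k$, and since $k\le n/2$ we have $n-k\ge n/2$, hence $(1-p)^{k(n-k)}\le (1-p)^{kn/2}$ (here I assume $0<p<1$; the case $p=1$ gives the complete graph, which is trivially connected). Combining these estimates,
\[
\mathbb{P}(G(n,p)\text{ disconnected})\ \le\ \sum_{k=1}^{\lfloor n/2\rfloor}\binom{n}{k}(1-p)^{k(n-k)}\ \le\ \sum_{k=1}^{\lfloor n/2\rfloor}\bigl(n(1-p)^{n/2}\bigr)^{k}.
\]

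To finish, set $a_n=n(1-p)^{n/2}$. Because $p$ is a fixed constant in $(0,1)$, we may write $(1-p)^{n/2}=e^{(n/2)\ln(1-p)}$, which decays exponentially and therefore dominates the polynomial factor $n$; thus $a_n\to 0$, and in particular $a_n<1/2$ for all large $n$. For such $n$ the right-hand side above is at most the geometric series $\sum_{k\ge1}a_n^k=a_n/(1-a_n)\le 2a_n\to 0$, so $\mathbb{P}(G(n,p)\text{ connected})\to 1$, which is exactly the a.a.s.\ statement.

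I do not expect a genuine obstacle here; the two things to get right are (i) reducing disconnectedness to the existence of a ``separated'' set of size at most $n/2$ — this is what yields $n-k\ge n/2$ and is essential for the exponent to beat the entropy term $\binom{n}{k}$ — and (ii) checking that the exponential decay of $(1-p)^{n/2}$ beats $n^k$ uniformly enough to sum the series; both are routine once the union bound is arranged as above. A remark worth including is that the same estimate actually shows the failure probability to be exponentially small in $n$, which is considerably stronger than the claimed convergence to $1$.
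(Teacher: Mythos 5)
Your proof is correct and follows essentially the same route as the paper's: a union bound over vertex sets of size at most $n/2$ with no edges to their complement, the estimates $\binom{n}{k}\le n^k$ and $(1-p)^{k(n-k)}\le\bigl((1-p)^{n/2}\bigr)^k$, and a geometric series that tends to $0$. Your explicit treatment of the case $p=1$ and the remark that the failure probability is exponentially small are minor refinements, not a different argument.
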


\begin{proof}
Assume $S,S'$ are two sets such that $V(G)=S\cup S'$. Let us fix $s=|S|$, then $|S'|=n-s$, thus:
$$\mathbb{P}\big(\{\textrm{S and S' are disconnected}\}\big)=(1-p)^{s(n-s)}.$$
We can upperbound the probability that $G(n,p)$ is disconnected by probabilities of existence of a partition of $V(G)$ into two subsets:
\begin{equation}
\mathbb{P}\big(\{\textrm{G(n,p) is disconnected}\}\big)=\mathbb{P}\big(\bigcup\limits_{S\cup S'=V(G)}\{\textrm{S and S' are disconnected}\}\big)\leq\sum\limits_{s=1}^{n/2}\binom{n}{s}(1-p)^{s(n-s)}.
\label{eq:con1}
\end{equation}
We sum here only up to $\frac{n}{2}$ as for $s$ greater than $\frac{n}{2}$ we would count the same partitions once again, but with roles of $S,S'$ exchanged. For binomial coefficients we have:
$$\binom{n}{s}=\frac{n!}{s!(n-s)!}=\frac{n\cdots(n-s+1)}{s!}\leq n^s.$$
Moreover, as $(1-p)<1$ and because we sum only up to $\frac{n}{2}$, we have $(1-p)^{n-s}\leq(1-p)^\frac{n}{2}$, hence:
\begin{equation}
\mathbb{P}\big(\{\textrm{G(n,p) is disconnected}\}\big)\leq\sum\limits_{s=1}^{n/2}\big(n(1-p)^{\frac{n}{2}}\big)^s.  
\label{eq:con2}
\end{equation}
Clearly for $n$ large enough we get $n(1-p)^\frac{n}{2}<1$, so we can further upper bound (\ref{eq:con2}) by a convergent geometric series:
\begin{equation}
\sum\limits_{s=1}^{n/2}\big(n(1-p)^\frac{n}{2}\big)^s\leq\sum\limits_{s=1}^{\infty}\big(n(1-p)^\frac{n}{2}\big)^s=\frac{n(1-p)^\frac{n}{2}}{1-n(1-p)^\frac{n}{2}}.
\label{eq:con3}
\end{equation}
Now we pass to the limit $n\to\infty$. Since $\lim\limits_{n\to\infty} n(1-p)^\frac{n}{2}=0$ we have:
\begin{equation}
\lim\limits_{n\to\infty}\frac{n(1-p)^\frac{n}{2}}{1-n(1-p)^\frac{n}{2}}=0,  
\end{equation}
thus consequently from equations (\ref{eq:con1})-(\ref{eq:con3}) we arrive with the result $\lim\limits_{n\to\infty}\mathbb{P}\big(\{\textrm{G(n,p) is disconnected}\}\big)=0$.

\end{proof}

In the following we would like to develop some characteristics of the $G(n,p)$ model that are interesting to compare with other network models and real-data graphs. Let us start with a simple calculation of mean number of edges and mean degree.

\begin{remark}
Expected value of mean number of edges $m$ in a random graph $G(n,p)$ is given by:
\begin{equation}
  \mathbb{E}(m)=\binom{n}{2} p.
\label{eq:meanm}
\end{equation}
\end{remark}
\begin{proof}
There are $\binom{n}{2}$ possible edges between $n$ vertices and each of them exists with the probability $p$. 
\end{proof}

Having in mind the latter result we can easily compute the mean degree of a $G(n,p)$-graph.
\begin{prop}
Expected value of the degree of a randomly chosen vertex in a random graph $G(n,p)$ is given by:
\begin{equation}
  \mathbb{E}(k)=(n-1)p.
\label{eq:meank}
\end{equation}
\end{prop}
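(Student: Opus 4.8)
The plan is to compute $\mathbb{E}(k)$ directly from the definition of the $G(n,p)$ model, exploiting the independence of the edges, rather than going through the degree distribution. Fix an arbitrary vertex $v$; by symmetry the distribution of $\deg(v)$ is the same for every vertex, so the expected degree of a randomly chosen vertex equals $\mathbb{E}(\deg(v))$ for this fixed $v$.

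First I would write $\deg(v)=\sum_{u\neq v} X_u$, where $X_u$ is the indicator random variable of the event that the edge $\{u,v\}$ is present. There are exactly $n-1$ such potential edges (one for each of the other vertices), and by the definition of $G(n,p)$ each is present independently with probability $p$, so $\mathbb{E}(X_u)=p$ for every $u\neq v$. Then by linearity of expectation,
\begin{equation}
\mathbb{E}(k)=\mathbb{E}(\deg(v))=\sum_{u\neq v}\mathbb{E}(X_u)=(n-1)p,
\end{equation}
which is the claimed identity.

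Alternatively, one could derive this from the previous Remark: the handshake lemma (stated earlier as a Remark) gives $\sum_{v} \deg(v) = 2m$, so $\mathbb{E}\big(\sum_v \deg(v)\big) = 2\,\mathbb{E}(m) = 2\binom{n}{2}p = n(n-1)p$; dividing by the $n$ vertices and using symmetry yields $\mathbb{E}(k) = (n-1)p$ again. I would probably mention this as a one-line remark but present the direct indicator argument as the main proof, since it is self-contained and makes the independence assumption explicit.

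There is no real obstacle here — the only thing to be careful about is the logical point that "expected degree of a randomly chosen vertex" coincides with the expectation of the degree of any single fixed vertex, which follows from the exchangeability (vertex-symmetry) of the $G(n,p)$ distribution; everything else is linearity of expectation and does not even require independence of the $X_u$.
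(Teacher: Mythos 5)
Your proof is correct, but your main argument takes a different route from the paper. The paper conditions on the total number of edges $m$: it uses the handshake identity to write the mean degree given $m$ as $\frac{2m}{n}$, and then applies the law of total expectation, $\mathbb{E}(k)=\sum_{m}\mathbb{E}(k\mid m)\mathbb{P}(m)=\frac{2}{n}\mathbb{E}(m)=\frac{2}{n}\binom{n}{2}p=(n-1)p$, leaning on the previously established formula $\mathbb{E}(m)=\binom{n}{2}p$. Your primary argument instead fixes a vertex $v$, decomposes $\deg(v)=\sum_{u\neq v}X_u$ into edge indicators, and applies linearity of expectation; this is more direct, entirely self-contained (it does not need the remark about $\mathbb{E}(m)$), and, as you correctly note, does not even use independence of the edges. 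Your parenthetical alternative via $\sum_v\deg(v)=2m$ is essentially the paper's argument in a slightly repackaged form (averaging over vertices rather than conditioning on $m$, but resting on the same two ingredients: the handshake lemma and $\mathbb{E}(m)$). Your observation that ``the degree of a randomly chosen vertex'' has the same expectation as the degree of any fixed vertex, by vertex-exchangeability of $G(n,p)$, is a point the paper glosses over and is worth making explicit.
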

\begin{proof}
The mean degree in a graph with $m$ edges is $\frac{2m}{n}$, as there are $2m$ edge stubs and $n$ vertices. 
Thus, using (\ref{eq:meanm}) and once again expected value of binomial distribution, we calculate:
$$\mathbb{E}(k)=\sum\limits_{m=0}^{\binom{n}{2}}\mathbb{E}(k\big| m)\mathbb{P}(m)=\sum\limits_{m=0}^{\binom{n}{2}}\frac{2m}{n}\mathbb{P}(m)=\frac{2}{n}\sum\limits_{m=0}^{\binom{n}{2}}m\mathbb{P}(m)=\frac{2}{n}\binom{n}{2}p=(n-1)p.$$
\end{proof}
Next property we are going to analyze is the degree distribution. This feature of $G(n,p)$ allows us to admit, that this is not the most accurate model for real-world networks, which have completely different distributions, see Sec. \ref{sf}.

\begin{prop}[$G(n,p)$ has a binomial degree distribution]
The probability for a vertex in $G(n,p)$ random graph to be connected with $k$ neighbors, where $0\leq k \leq n-1$, is:
\begin{equation}
p_k(n)=\binom{n-1}{k} p^k(1-p)^{n-1-k}.
\label{eq:dd} 
\end{equation}
\end{prop}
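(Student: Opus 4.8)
The plan is to recognize the degree of a fixed vertex as a sum of independent Bernoulli trials, one for each potentially incident edge, and then simply read off the binomial probability mass function.

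First I would fix an arbitrary vertex $v$. Since the construction of $G(n,p)$ treats all $n$ vertices symmetrically, the law of $deg(v)$ does not depend on the choice of $v$, so the phrase ``randomly chosen vertex'' in the statement is immaterial and it suffices to compute $\mathbb{P}(deg(v)=k)$ for this one fixed $v$. For each of the remaining $n-1$ vertices $u$, let $X_u$ be the indicator of the event that $\{u,v\}\in E$. By the very definition of the random graph $G(n,p)$, each $X_u$ takes the value $1$ with probability $p$ and $0$ with probability $1-p$, and the family $\{X_u : u\neq v\}$ is independent, because these correspond to $n-1$ distinct potential edges and distinct potential edges are included independently.

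Next I would write $deg(v)=\sum_{u\neq v} X_u$, a sum of $n-1$ independent $\mathrm{Bernoulli}(p)$ random variables, which is therefore $\mathrm{Binomial}(n-1,p)$-distributed. Concretely, the event $\{deg(v)=k\}$ is the disjoint union, over all $k$-element subsets $A\subseteq V\setminus\{v\}$, of the events ``exactly the edges $\{u,v\}$ with $u\in A$ are present and no other edge at $v$ is present''; each such event has probability $p^{k}(1-p)^{n-1-k}$ by independence, and there are $\binom{n-1}{k}$ of them, which yields $(\ref{eq:dd})$.

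There is essentially no obstacle here: the only point requiring (minimal) care is that the independence invoked is precisely the independence postulated in the definition of $G(n,p)$, and that restricting attention to a single vertex loses nothing since every vertex has the same marginal degree law. As a consistency check one can note that $\sum_{k=0}^{n-1} k\,p_k(n)=(n-1)p$ recovers the mean degree computed in $(\ref{eq:meank})$, and $\sum_{k=0}^{n-1} p_k(n)=1$ by the binomial theorem.
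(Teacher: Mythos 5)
Your proof is correct and follows essentially the same route as the paper: each of the $n-1$ potential edges at a fixed vertex is present independently with probability $p$, so a specific set of $k$ neighbors occurs with probability $p^k(1-p)^{n-1-k}$, and there are $\binom{n-1}{k}$ such sets. The extra framing via indicator variables and the symmetry remark are fine elaborations of the same argument.
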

\begin{proof}
Each vertex of a $G(n,p)$-graph can be connected with any of $n-1$ other nodes independently with probability p.  
Hence, a particular vertex is connected to a particular set of $k$ vertices with probability $p^k(1-p)^{n-1-k}$.
Moreover, there are $\binom{n-1}{k}$ ways of picking $k$ out of $n-1$ vertices, what yields the statement (\ref{eq:dd}).
\end{proof}

Often, by network analysis, we are interested in asymptotical behaviour of graph's features in the limit $n\to\infty$.
In the next remark of this section we point out that the degree distribution of $G(n,p)$ goes to Poisson distribution in the limit $n\to\infty$ for fixed $(n-1)p$. This is a general feature of binomial distribution, see e.g. \cite{StochMod}, thus we omit the proof of this fact.

\begin{remark}[In the large $n$ limit $G(n,p)$ has a Poisson degree distribution]
The total probability for a vertex in $G(n,p)$ random graph to be connected with $k$ neighbors in the limit $n\to\infty$ for fixed $(n-1)p$ is:
\begin{equation}
\lim_{n\to\infty}p_k(n)= e^{-d}\frac{d^k}{k!},
\label{eq:ddn} 
\end{equation}
where we denote by $d=p(n-1)$ the expected degree of a vertex.
\end{remark}

An important feature which appears in the limit $n\to\infty$ and which becomes crucial while considering $G(n,p)$ as a model of real networks follows from the next theorem, proven e.g. in \cite{NOOC}.

\begin{thm}
For constant p, a $G(n,p)$-graph is connected and has diameter 2 a.a.s. .
\label{thm:diam2} 
\end{thm}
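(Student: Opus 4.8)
The plan is to pin the diameter at exactly $2$ by proving two one-sided statements and combining them with the connectedness already established. First I would show that a.a.s. every pair of vertices has a common neighbour, which forces the diameter to be at most $2$. Then I would observe that a.a.s. $G(n,p)$ is not the complete graph, so (being connected by the previous theorem) its diameter is at least $2$. Throughout one should assume $0<p<1$, since $p=1$ gives the complete graph of diameter $1$ and $p=0$ the empty graph; ``constant $p$'' is understood in this range.

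For the upper bound, fix two distinct vertices $u,v$. For each of the remaining $n-2$ vertices $w$, the event ``$w$ is adjacent to both $u$ and $v$'' has probability $p^2$, and these $n-2$ events are mutually independent because they concern pairwise disjoint sets of potential edges. Hence the probability that $u$ and $v$ have no common neighbour equals $(1-p^2)^{n-2}$. A union bound over the $\binom{n}{2}$ pairs, together with $\binom{n}{2}\le n^2$, gives
\begin{equation}
\mathbb{P}\big(\{\textrm{some pair of vertices has no common neighbour}\}\big)\le n^2(1-p^2)^{n-2}.
\end{equation}
Since $0<1-p^2<1$, the exponential factor dominates the polynomial one and this bound tends to $0$ as $n\to\infty$. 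On the complementary event, any two distinct vertices are adjacent or share a neighbour, so the diameter is at most $2$.

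For the lower bound, $\mathbb{P}\big(\{G(n,p)\textrm{ is complete}\}\big)=p^{\binom{n}{2}}\to 0$ as $n\to\infty$, so a.a.s. there is a pair of non-adjacent vertices and the diameter is at least $2$. Intersecting the three a.a.s. events --- connectedness (from the previous theorem), ``every pair has a common neighbour'', and ``not complete'' --- yields that a.a.s. $G(n,p)$ is connected with diameter exactly $2$.

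The argument is short, so there is no serious obstacle; the only points that genuinely need care are the justification that the ``common neighbour'' events over the vertices $w$ are independent (because distinct $w$ involve disjoint potential edges) and the remark that the union bound over pairs $\{u,v\}$ needs no independence between those pairs. No second-moment estimate is required.
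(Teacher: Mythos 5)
Your proposal is correct. Note that the paper itself does not prove Theorem~\ref{thm:diam2}; it only states it and defers the proof to the cited lecture notes \cite{NOOC}, so there is no in-text argument to compare against. Your argument is the standard first-moment one and it is sound: the ``no common neighbour'' events for distinct witnesses $w$ are indeed independent because they concern disjoint sets of potential edges, the union bound $n^2(1-p^2)^{n-2}\to 0$ needs no independence across pairs $\{u,v\}$, and the complementary event (every pair adjacent or sharing a neighbour) simultaneously gives connectedness and diameter at most $2$, so you in fact recover the preceding connectivity theorem as a by-product rather than needing it as an input. Your observation that one must read ``constant $p$'' as $0<p<1$ is also right --- for $p=1$ the statement as written is false, since the complete graph has diameter $1$ --- and your completeness estimate $p^{\binom{n}{2}}\to 0$ correctly supplies the matching lower bound on the diameter. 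No gaps.
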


The last of the characteristics of $G(n,p)$ we are going to show is clustering coefficient $C$. This is straighforward though: in a random graph the probability that two vertices are connected is the same for all pairs and equals $p$, thus the probability that two friends of mine are also friends is $p$ as well.

\begin{remark}
For a $G(n,p)$ graph clustering coefficient $C=\frac{\mathbb{E}(k)}{n-1}$.  
\label{rem:clust}
\end{remark}

We already know that $\mathbb{E}(k)=p(n-1)$ (see (\ref{eq:meank})), so we could also just write $C=p$. However, the formula $C=\frac{\mathbb{E}(k)}{n-1}$ emphasises an important fact, that for fixed $\mathbb{E}(k)$ in the limit $n\to\infty$ clustering coefficient for $G(n,p)$ tends to zero. 

\section{Small worlds} 
\textit{It is a small world!} -- this phrase is not only a popular saying. Stanley Milgram, an experimental psychologist, showed in his famous experiment performed in the 1960s \cite{Mil1, Mil2}, how small our world actually is. Milgram sent 96 packages to random recipients in Omaha (Nebraska, US) with the instruction to attempt to send it forward to the particular person in Boston, a city situated over a thousand miles away from Omaha. The only way Milgram allowed them to do it was to transfer the parcel to the person who -- in their opinion -- could have bigger chance to know that guy from Boston. Eighteen parcels arrived to the final recipient -- that is already a remarkable achievement. Moreover, it came out that the average lenght of the completed paths was as small as 5.9 steps. This significant result yielded the popular "six degrees of separation" -- the idea that there are only six steps between us and anyone in the world. This is also the property we would like to mimic in mathematical models of 
social 
networks.

Various analysis of real networks (the world wide web, ecological networks, social networks etc., see e.g. \cite{Networks}) show that they character is defined by two main features: small world property and large clustering coefficient. From Theorem \ref{thm:diam2} we see that -- at least for graphs big enough -- we keep this property in $G(n,p)$ model. This is though not the case for clustering coefficient, which falls with $n$ to 0, as follows from the Remark \ref{rem:clust}. In 1998 Watts and Strogatz proposed \cite{SW} a solution for this problem: {\sl the small world model}, which captures both real networks features. 
In fact the orginal model \cite{SW} yields many problems for rigorous analysis. Therefore we will -- as it is usually done in literature -- focus on a slightly different model proposed by Newman and Watts in \cite{SW2}. The second model captures all the features important for imitating real networks. The difference between two versions is to be indicated later on. 

\begin{figure}
  \includegraphics[height=7cm]{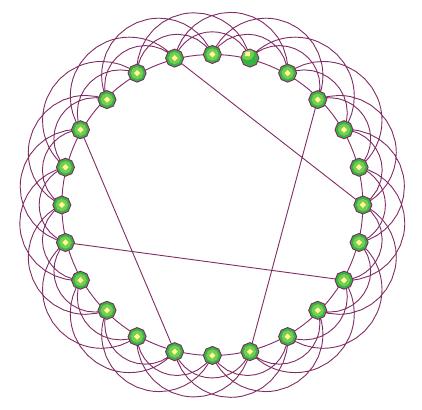}
  \caption{Small world model proposed by Newman and Watts in \cite{SW2}. Here the number of nodes n=24 and parameter c=6. Picture taken from \cite{SW3}.}
   \label{fig:sw}
\end{figure}

The idea is to perform an interpolation between random graph and a network with high clustering coefficient. To do so we start with a circle of connected nodes (see Fig. \ref{fig:sw}). Each node on the circle has a common edge not only with two neighbors, but with $c$ closest nodes in symmetric way ($c/2$ both on the left and right, $c$ is assumed to be an even integer). In this way we imitate the first property of real networks. Suppose this is the end of our construction and let us compute the clustering coefficient without \emph{shortcuts} (we will specify what are they in a moment). 

\begin{prop}
The clustering coefficient for the small world model without shortcuts is $C=\frac{3(c-2)}{4(c-1)}$.    
\end{prop}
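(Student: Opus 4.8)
The plan is to exploit the fact that the ring lattice looks the same from every node (vertex-transitivity): the global clustering coefficient $C$ equals the local clustering coefficient at any single vertex $v$, i.e.\ the ratio of the number of edges among the neighbors of $v$ to $\binom{c}{2}$, the number of pairs of neighbors. Equivalently, one can argue directly from Definition~\ref{df:cc2}: every vertex has degree $c$, hence is the center of $c(c-1)$ directed paths of length two, so the denominator there is $nc(c-1)$; and if $T$ is the total number of triangles, the numerator is $6T$, so $C = 6T/(nc(c-1))$. Either way, the whole problem reduces to counting the triangles through a fixed vertex.

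First I would fix notation: label the nodes $0,1,\dots,n-1$ cyclically, set $m=c/2$, and assume $n$ is large enough (say $n>2c$) so that the ring distances below are unambiguous. Then $N(0)=\{-m,\dots,-1\}\cup\{1,\dots,m\}$ (indices mod $n$), and I count the edges inside $N(0)$ in two cases. For two neighbors on the same side, say $1\le i<j\le m$, the ring distance is $j-i\le m-1<m$, so every such pair is an edge; this contributes $2\binom{m}{2}$ edges. For a pair consisting of a right neighbor $i\in\{1,\dots,m\}$ and a left neighbor $-j'$ with $j'\in\{1,\dots,m\}$, the ring distance is $i+j'$, so the pair is an edge iff $i+j'\le m$; summing $s-1$ over $s=i+j'=2,\dots,m$ gives a further $\binom{m}{2}$ edges. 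Hence the number of triangles through a fixed vertex is $3\binom{m}{2}$.

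To finish, I would plug this into the local-clustering ratio: $C = 3\binom{m}{2}/\binom{2m}{2} = 3(m-1)/\bigl(2(2m-1)\bigr)$, and substituting $m=c/2$ yields $C = 3(c-2)/\bigl(4(c-1)\bigr)$. (Through the triangle count one gets the same value: each triangle is counted at each of its three vertices, so $T = n\cdot 3\binom{m}{2}/3 = n\binom{m}{2}$, whence $C = 6T/\bigl(nc(c-1)\bigr) = 6\binom{m}{2}/\bigl(2m(2m-1)\bigr)$, which simplifies identically.)

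I expect the only genuine obstacle to be the careful bookkeeping in the cross-side case — getting the condition $i+j'\le m$ right and summing it correctly — together with the mild technical point that the stated identity implicitly presupposes $n$ large compared with $c$, so that no chord wraps around the ring and every neighbor pair has a well-defined distance.
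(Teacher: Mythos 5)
Your proof is correct and follows essentially the same route as the paper: both apply Definition~\ref{df:cc2}, count $nc(c-1)$ directed 2-paths, and arrive at $n\binom{c/2}{2}$ triangles, yielding $C=\frac{3(c-2)}{4(c-1)}$. The only difference is bookkeeping — you enumerate edges inside a fixed vertex's neighborhood (same-side pairs giving $2\binom{m}{2}$, cross-side pairs giving $\binom{m}{2}$), whereas the paper counts each triangle once as two forward hops with total length at most $c/2$; both give the identical count, and your explicit caveat that $n$ must be large relative to $c$ is a sound point the paper leaves implicit.
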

\begin{proof}
We will compute the clustering coefficient according to definition \ref{df:cc2}, i.e. we need to count all triangles and paths of length 2 (2-paths) in the system. Any triangle has to contain two hops in the same direction around the circle and one in the opposite. However this last step back can be at most $\frac{c}{2}$ long as this is the length of the longest link in the network we consider. There are thus $\binom{c/2}{2}$ ways of reaching the futhermost vertex in two steps and summing up there are $n\binom{c/2}{2}=\frac{1}{4}nc(\frac{1}{2}c-1)$ triangles.
In order to count all the 2-paths in the system let us first focus on the paths centered on a particular vertex. Each vertex has $c$ neighbors and thus $2\binom{c}{2}$ possibilities of forming a 2-path, where the factor 2 stands for two possible directions. Therefore there are $nc(c-1)$ 2-paths. Now we easily get the final result:
$$C=\frac{\frac{1}{4}nc(\frac{1}{2}c-1)\times 6}{nc(c-1)}=\frac{3(c-2)}{4(c-1)}$$
\end{proof}

\begin{ex} Let us consider a few particular examples with differend value of parameter $c$:
\begin{itemize}
  \item $C(c=2)=0$ (a circle without extra neighbors), 
  \item $C(c=6)=0.6$,
  \item $\lim\limits_{c\to\infty}C(c)=\frac{3}{4}$.
\end{itemize}
\end{ex}

The conclusion is that we defined model with large clustering coefficient. Moreover: we do not have to go with parameter $c$ to $\infty$ to obtain high values of $C$: it is enough to add just a few extra neighbors on the circle. But the problem is that we have now a kind of \textit{large} world: the mean shortest path is as big as $\frac{n}{2c}$ \cite{Networks} (think of a way from a particular node to the one situated in the opposite place on the circle). In order to fix it we add the \textit{shortcuts} mentioned above. In the original version of small-world model \cite{SW} each edge we rewire with some probability $p$ such that it joins two vertices chosen uniformly at random. But as this formulation of the model yields some severe problems while analizing it \cite{SW2}, the version which is actually in use differs a bit. Namely, instead of \emph{rewiring} edges, we add some extra connections: with probability $p$ we put a link between two vertices chosen uniformly at random. In order to save compatibility 
with the results for original smal-worlds model \cite{SW}, we perform this procedure once for each edge from the circle (although we do not touch those edges!). To ilustrate, that this rewiring indeed brings us to the regime of small diameters, we cite after \cite{blb} the following theorem:

\begin{thm}
Let G be an undirected graph formed by adding a random matching to an n-cycle. Then G has diameter diam(G) satisfying a.a.s.:
$$\log_2 n \leq diam(G) \leq \log_2 n + \log_2 \log n + b, $$
where b is a constant (at most 10).
 \label{thm:diam-sw} 
\end{thm}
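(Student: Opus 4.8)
The plan is to prove the two inequalities separately; the lower bound is elementary, and the upper bound carries the content of the theorem.

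\emph{Lower bound.} Every vertex of $G$ has degree at most $3$: two edges from the $n$-cycle and at most one from the matching. Hence, running a breadth-first search (BFS) from any vertex $v$, each vertex on the current frontier contributes at most two previously unseen vertices to the next frontier, so $|B(v,r)| \le 1 + 2 + 2^2 + \dots + 2^r < 2^{r+1}$. If $\mathrm{diam}(G)=D$ then $B(v,D)=V(G)$, whence $n < 2^{D+1}$ and $D > \log_2 n - 1$. Sharpening this to the stated $D \ge \log_2 n$ needs only a more careful layer-by-layer count of the BFS (exploiting that the vertices reached using cycle edges alone form a bounded arc, so the branching is essentially driven by matching edges); this refinement is routine and is the part done in \cite{blb}.

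\emph{Upper bound: expansion phase.} Fix two vertices $u,v$; I would show that a.a.s. $d_G(u,v) \le \log_2 n + \log_2\log n + b$ and then union bound over the $\binom n2$ pairs. Throughout, the matching is exposed lazily (deferred decisions): when a vertex $w$ is first reached, $m(w)$ is uniform over the vertices whose partner is not yet exposed. Grow $B(u,r)$ by BFS. While $|B(u,r)| \le \sqrt{C n\log n}$ for a suitable constant $C$, the frontier sends out at most $|\partial B(u,r)|$ matching edges, each landing back inside the explored region with probability $O(|B(u,r)|/n)$, so the expected number of ``wasted'' edges is $O(|B(u,r)|^2/n)=o(|\partial B(u,r)|)$ in this range, while the two cycle arc-tips always advance. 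A Chernoff/Azuma estimate then gives, with probability $1-o(n^{-3})$ per step, that the frontier is multiplied by a factor $2-o(1)$ at each step until the ball reaches size $\sqrt{C n\log n}$; summing the $o(1)$ corrections costs only $O(1)$ extra steps. Hence after $R := \tfrac12\log_2 n + \tfrac12\log_2\log n + O(1)$ steps one has $|B(u,R)| \ge \sqrt{Cn\log n}$ with probability $1-o(n^{-2})$ (union over the $O(\log n)$ steps). Exposing in addition all partners of $B(u,R)$ touches only $o(n)$ vertices, leaves the matching on the rest uniform, and gives $m(B(u,R)) \subseteq B(u,R+1)$, so $|B(u,R+1)| \ge (1-o(1))\sqrt{Cn\log n}$.

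\emph{Collision phase and conclusion.} Now run BFS from $v$ in the (essentially untouched) remainder to radius $R'=R$; by the same estimate $|B(v,R')| \ge (1-o(1))\sqrt{Cn\log n}$ a.a.s. As $B(v,R')$ is built, each freshly exposed partner is near-uniform over the $\ge n-o(n)$ remaining vertices, hence lands in the fixed set $B(u,R+1)$ with probability $\ge (1-o(1))\sqrt{Cn\log n}/n$; over the $\ge (1-o(1))\sqrt{Cn\log n}$ vertices of $B(v,R')$ the chance of \emph{never} hitting $B(u,R+1)$ is at most $\bigl(1-(1-o(1))\tfrac{\sqrt{Cn\log n}}{n}\bigr)^{(1-o(1))\sqrt{Cn\log n}} \le n^{-(1-o(1))C}$, which is $o(n^{-2})$ once $C>2$. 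A hit produces $x \in B(v,R')\cap B(u,R+1)$, hence a $u$--$v$ path of length at most $(R+1)+R' = \log_2 n + \log_2\log n + O(1)$; a union bound over the $\binom n2$ pairs finishes. Making every $O(1)$ explicit — the choice of $C$, the geometric-series slack, and the single extra step — and checking it stays below $10$ yields the constant $b$.

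\emph{Main obstacle.} The delicate point is the expansion-phase concentration: establishing that the BFS frontier keeps (almost) doubling despite the evolving conditioning on the partially exposed matching, with a failure probability small enough to survive the union bound over all $\binom n2$ pairs. It is precisely this $o(n^{-2})$ requirement that forces the target ball size to be $\Theta(\sqrt{n\log n})$ rather than $\Theta(\sqrt n)$, and this is what produces the additive $\log_2\log n$ term in the statement. Once the expansion lemma is in hand, the collision step is a routine birthday-type computation, and pinning $b\le 10$ is pure bookkeeping.
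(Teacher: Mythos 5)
The paper does not actually prove Theorem~\ref{thm:diam-sw}: it is quoted from Bollob\'as and Chung \cite{blb} without proof, so there is no in-paper argument to compare yours against. Your sketch follows the standard (and essentially the only known) strategy for this result --- a deterministic volume bound for the lower estimate, and for the upper estimate a two-phase breadth-first expansion with deferred exposure of the matching followed by a birthday-type collision of two balls of size $\Theta(\sqrt{n\log n})$ --- and the overall architecture, including why the $\log_2\log n$ term appears, is right.

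However, as a proof it has genuine gaps rather than omitted routine steps. First, the expansion lemma you defer \emph{is} the theorem: to land on $\log_2 n + \log_2\log n + O(1)$ rather than $c\log_2 n$ you need the frontier to multiply by $2-o(1)$ at \emph{every} step with total accumulated loss $O(1)$, and with per-pair failure probability $o(n^{-2})$; asserting ``a Chernoff/Azuma estimate then gives'' this does not discharge it, since one must also control collisions among cycle-neighbours of frontier vertices and the drift in the conditional law of the unexposed matching. Second, your lower bound as executed gives only $\mathrm{diam}(G)\geq \log_2 n-\log_2 3$ (the root has three neighbours, so $|B(v,r)|\leq 3\cdot 2^{r}-2$), and the promotion to the stated $\log_2 n$ is not ``routine'' from a degree count alone --- it needs a probabilistic argument, which you delegate back to \cite{blb}. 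Third, there is a concrete slip in the collision phase: once you expose \emph{all} partners of $B(u,R)$, the vertices of $B(u,R)\cup m(B(u,R))$ are already matched, so a freshly exposed partner of a vertex of $B(v,R')$ cannot land on them; the hittable part of $B(u,R+1)$ is only the set of new cycle-neighbours at level $R+1$ with still-unexposed partners. That set fortunately still has size $\Theta(\sqrt{n\log n})$, so the computation survives with a worse constant $C$, but as written the estimate ``lands in the fixed set $B(u,R+1)$ with probability $\geq(1-o(1))\sqrt{Cn\log n}/n$'' is false. Finally, the claim $b\leq 10$ is never addressed. In short: correct roadmap, but the load-bearing concentration lemma, the exact lower-bound constant, and the bookkeeping for $b$ are all missing.
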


There exist a broad regime when high clustering and small average distance are present simultaneously. It can be observed in Fig. \ref{fig:sw-plot}.

\begin{figure}
  \includegraphics[height=8cm]{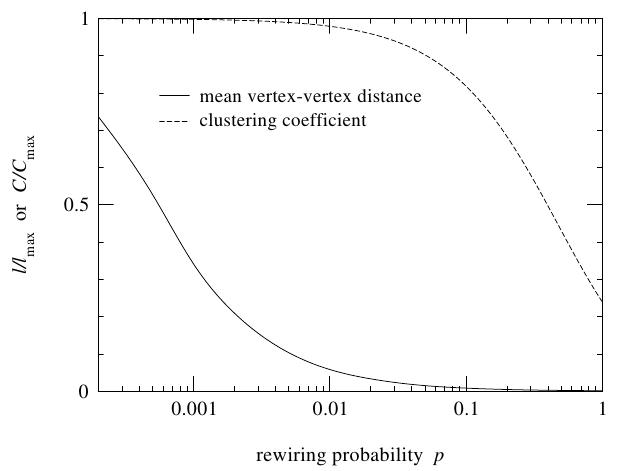}
  \caption{Average distance $l$ and clustering coefficient $C$ dependence on rewiring probability $p$. Picture taken from \cite{NewmanSIAM}.}
  \label{fig:sw-plot}
  \end{figure}

\section{Scale-free networks}
\label{sf}
The third model of networks that we present are so-called scale-free networks. Although the concept of \textit{scalefreeness} is a little bit wider, it practically boils down to the power law degree distribution graph. We will be thus speaking about graphs with degree distribution $\mathbb{P}(k)$ given by:
$$\mathbb{P}(k)=\left\{ \begin{array}{ll}
                   0 & \textrm{for } k=0, \\
		   k^{-\alpha}/\zeta(\alpha) & \textrm{for } k>0, \\ 
		    \end{array} \right.$$
Here $\zeta(\alpha)$ stands for the Riemann zeta function, which normalizes the distribution. Note that a power law does not posses a characteristic scale. Indeed, in the {\sl heavy-tailed} power law distribution defined above, values that lie far away from the mean (highly connected {\sl hubs} mentioned above) are much more probably than for {\sl light-tailed} distributions, such as the exponential. In other words: in {\sl light-tailed} distributions the average value gives some kind of scale, but it is not the case anymore for power law, which is a {\sl heavy-tailed} distribution.

This kind of behavior of a network appears to be ubiquitous in nature. Power law degree distributions can be found in as different contexts as World Wide Web \cite{www}, network of citations \cite{citations}, protein interactions \cite{protein}, movie actors \cite{film} or sexual contacts \cite{sex}. Not only real degree distributions are characterized by power laws, we can encounter them also analyzing earthquakes \cite{quakes}, the number of species in biological taxa \cite{taxa}, the frequency of use of words in human languages \cite{word}, GDP of The United Kingdom \cite{economics} and much more \cite{PowerLawReview}. The main class of mechanisms that cause this kind of degree distribution has been discovered by Albert and Barab\'asi \cite{pa} and is usually referred to as {\sl the rich get richer} or {\sl preferential attachment}. Strictly speaking, Albert and Barab\'asi \emph{rediscovered} this mechanism after Yule \cite{yule} and Simon \cite{simon}. In order to explain the mechanism, let us 
think of populations of cities. Consider a person that is about to move to another place to live. There are many reasons why he or she would go here or there, but in general people try to choose the most comfortable place. If a particular place really is comfortable, we would expect many people 
already living there. Indeed, big cities attract people because, for instance, it is usually easier to find a job there. Therefore we can guess the following rule: the more populated a city is, the more eager people are to move there. And this basically is the idea of {\sl preferential attachment}. It seems to be very natural, so we should not be that much surprised by power law behavior appearing everywhere. {\sl The rich get richer} is not, however, the only mechanism that produces power law, see e.g. \cite{NOOC}.

In real-world realizations of networks power law usually applies to the tail of a particular distribution. We fix then the minimum value $k_{min}$ over which the power law holds: $\mathbb{P}(k)=Ck^{-\alpha}$ for $k\geq k_{min}$, and normalize only the tail \cite{Networks}. Using the fact that the power law is a slowly varying function of $k$ for large $k$ we approximate the sum by an integral:
\begin{equation}
 C=\frac{1}{\sum_{k'=k_{min}}^{\infty}k^{-\alpha}}\approx\frac{1}{\int_{k=k_{min}}^{\infty}k^{-\alpha}dk}=(\alpha-1)k_{min}^{\alpha-1},
 \label{eq:trick}
\end{equation}
where we assume $\alpha> 1$ so that the integral converges. We can then approximate the distribution by:
\begin{equation}
 p_k\approx\frac{\alpha-1}{k_{min}}\Big(\frac{k}{k_{min}} \Big)^{-\alpha}.
\end{equation}
Let us compute now $m$th moment $\langle k^m \rangle$ of such a distribution, i.e. distribution that follows power law beyond some $k_{min}$:
$$\langle k^m \rangle=\sum_{k=0}^{k_{min}-1}k^m p_k + C \sum_{k=k_{min}}^{\infty}k^{m-\alpha}.$$
We will use again the same trick as in eq. (\ref{eq:trick}): approximation of the sum by an integral:
\begin{equation}
\langle k^m \rangle\approx  \sum_{k=0}^{k_{min}-1}k^m p_k + C\int_{k=k_{min}}^{\infty}k^{m-\alpha}dk=
\sum_{k=0}^{k_{min}-1}k^m p_k +\frac{C}{m-\alpha+1} \Big[k^{m-\alpha+1} \Big]^{\infty}_{k_{min}}.
\end{equation}
There are no problem with the first sum, as it is a finite number. The second one, however, could be infinite, depending on the values of $\alpha$ and $m$: given value of $\alpha$ all moments greater or equal than $\alpha-1$ diverge. In particular, the second moment is finite if and only if $\alpha>3$. In the same time, the real-world networks following power law are usually characterized by $2\leq \alpha \leq 3$, so consequently, the second moment should diverge. This conclusion is slightly misleading though: in fact, for any finite system -- and real systems obviously are finite -- any finite moment has finite value. The last result, however, is crucial for epidemic spreading predictions for models described in the next chapter.

Apart from the hub emergence, which makes scale-free networks similar to real systems, they capture also two other important features already mentioned above, which are usually present in nature: the small-world property and clustering. Indeed, the average distance for network with power law degree distribution with the exponent $2 < \alpha < 3$ is almost surely of the order $\log \log n$ \cite{sf-d}, while the clustering coefficient grows with $n$ to infinity \cite{Networks}. This divergence happens of course only in theoretical models in the large $n$ limit and with an assumption that the network is created at random. In reality, first of all: size $n$ of the network can be large, but not infinite, but moreover: real networks are not purely random. There usually appear some {\sl motifs}, i.e. subgraphs with a specific structure, which are for some reason overrepresented in a particular network.

\chapter{Stochastic processes}
\label{markov}
In this chapter we introduce some rudiments of Markov chains, as all the models of epidemic spreading which we describe further on, are of the Markovian character. The {\sl Markovian} property intuitively means, that the process being analyzed has no memory: in order to perform a time step, the system needs to know only the present state. Quoting R.A.~Howard: {\sl Markov chain is like a frog jumping on a set of lily pads} \cite{frog}. In mathematical modeling the {\sl Markovian} property is an assumption which simplifies the computations significantly, and usually does not take us too far from reality. 

The simpliest, and at the same time famous example of a Markov chain is one dimensional random walk: a model for Brownian motion, first proposed by M.~Smoluchowski in the beginning of 20th centrury \cite{smoluch}. In this process we move on the set of integer numbers. At each time step we go one step left or right, according to the hopping probabilities, so the position in the next time step depends only on the present position and these probabilities.

\section{Elements of Stochastic Processes}
First we introduce the notion of stochastic process and related concepts. Let $(\Omega,\Sigma,\mathbb{P})$ be a probability space and $T$ be a countable non-empty set. We restrict ourselves to discrete-time stochastic processes with finite state spaces. In this introduction we follow \cite{bremaud,koralov,shah,stir}.

\begin{df}
A stochastic process is a map $X:T\times\Omega\mapsto \mathbb{R}^n$ such that for every $t\in T$ the transformation: 
$$ \Omega\ni\omega\mapsto X(t,\omega)$$ 
is a random variable. Alternatively, we can just enumerate time $t$ by natural numbers $n\in\mathbb{N}$ and represent a stochastic process by a sequence $\{X_n(\omega)\}_{n\geq0}$. 
\end{df}
One of the specific kinds of stochastic processes are martingales. They may be thought of as models for fair games, i.e. your expected fortune after any bet is equal to your fortune before. This intuitive feature is expressed in the third point of the following definition.

\begin{df}
A real-valued stochastic process $\{Y_n(\omega)\}_{n\geq0}$ is called a martingale with respect to a stochastic process $\{X_n(\omega)\}_{n\geq0}$ if for each $n\geq 0$:
\begin{enumerate}
 \item $Y_n$ is a function of $X_0,...,\,X_n$, 
 \item $\mathbb{E}(|Y_n|)<\infty\textrm{ or }Y_n\geq 0$, and
 \item $\mathbb{E}(Y_{n+1}|X_0,...,\,X_n)=Y_n$.
\end{enumerate}

\end{df}

One however would admit, that epidemy does not look really fair. Indeed, the notion that we will use later one is {\sl supermartingale}, which can be thought of as a game in which you loose, i.e. your expected fortune is decreasing in time.
For completeness we add also the definition of submartingale.

\begin{df}
A real-valued stochastic process $\{Y_n(\omega)\}_{n\geq0}$ is called supermartingale (submartingale) with respect to a stochastic process $\{X_n(\omega)\}_{n\geq0}$ if it fulfills conditions $1.$ and $2.$ for martingales and moreover:
$$\mathbb{E}(Y_{n+1}|X_0,...,\,X_n)\leq\big(\geq\big)Y_n.$$
\label{df:sup}
\end{df}

Before we enter the land of Markov chains, we prove one fact that does not require {\sl Markovianity}, but will be useful for us later on.
\begin{thm}[Markov's inequality]
Let $X$ be a random variable on a discrete probability space $(\Omega,\Sigma,\mathbb{P})$. If $X\geq 0$ and $\mathbb{E}(X)$ is finite, then for each $a>0$ the following inequality applies:
$$\mathbb{P}(X\geq a)\leq \frac{\mathbb{E}(X)}{a}$$.
\label{thm:me}
\end{thm}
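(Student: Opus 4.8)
The plan is to use directly the definition of expectation on a discrete probability space, with no appeal to any Markov-chain machinery. Writing $\mathbb{E}(X)=\sum_{\omega\in\Omega}X(\omega)\,\mathbb{P}(\{\omega\})$, the hypotheses $X\geq 0$ and $\mathbb{E}(X)<\infty$ guarantee that this series converges absolutely, so I may split it over any event and its complement and discard nonnegative terms freely.

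First I would fix $a>0$ and set $A=\{\omega\in\Omega:X(\omega)\geq a\}$, so that $\mathbb{P}(X\geq a)=\mathbb{P}(A)=\sum_{\omega\in A}\mathbb{P}(\{\omega\})$. Splitting the defining sum of $\mathbb{E}(X)$ into the part over $A$ and the part over $\Omega\setminus A$, and throwing away the latter (which is $\geq 0$ since $X\geq 0$), gives $\mathbb{E}(X)\geq\sum_{\omega\in A}X(\omega)\,\mathbb{P}(\{\omega\})$. On $A$ we have $X(\omega)\geq a$ by definition of $A$, hence $\sum_{\omega\in A}X(\omega)\,\mathbb{P}(\{\omega\})\geq a\sum_{\omega\in A}\mathbb{P}(\{\omega\})=a\,\mathbb{P}(X\geq a)$. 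Chaining the two inequalities yields $\mathbb{E}(X)\geq a\,\mathbb{P}(X\geq a)$, and dividing by $a>0$ gives the statement.

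An equivalent, slightly slicker phrasing I could use instead is to note the pointwise bound: the random variable equal to $a$ on $A$ and $0$ elsewhere is everywhere $\leq X$, so by monotonicity of expectation $a\,\mathbb{P}(A)=\mathbb{E}\big(a\cdot\mathbf{1}_A\big)\leq\mathbb{E}(X)$, where $\mathbf{1}_A$ is the indicator of $A$.

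There is essentially no hard step here; the proof is a two-line estimate. The only points deserving a word of care are that the term-by-term manipulation of the series is legitimate (which is exactly what $X\geq 0$ together with $\mathbb{E}(X)<\infty$ buys us) and the degenerate case $\mathbb{P}(X\geq a)=0$, in which the inequality is immediate because its right-hand side $\mathbb{E}(X)/a$ is nonnegative.
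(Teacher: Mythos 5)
Your proof is correct and is essentially the paper's own argument run in the opposite direction: the paper starts from $\mathbb{P}(X\geq a)=\sum_{\{\omega:X(\omega)\geq a\}}\mathbb{P}(\omega)$ and bounds upward using $1\leq X(\omega)/a$ on that set, while you start from $\mathbb{E}(X)$ and discard the nonnegative contribution of the complement — the same two inequalities chained in reverse. No issues.
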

\begin{proof}
This statement can be proven by a sequence of inequalities:
$$\mathbb{P}(X\geq a)=\sum_{\{\omega:X(\omega)\geq a \}}\mathbb{P}(\omega)\leq\sum_{\{\omega:X(\omega)\geq a \}}\frac{X(\omega)}{a}\mathbb{P}(\omega)=\frac{1}{a}\sum_{\{\omega:X(\omega)\geq a \}}X(\omega)\mathbb{P}(\omega)\leq
\frac{1}{a}\sum_{\omega\in\Omega}X(\omega)\mathbb{P}(\omega)=\frac{\mathbb{E}(X)}{a}.$$
\end{proof}

\section{Markov chains}
From now on we restrict our considerations to a special case of stochastic processes, namely Markov chains.
\begin{df}
Let $\{X_i\}_{i\geq0}$ be a discrete-time stochastic process with finite state space E. This stochastic process is called {\sl Markov chain} if for all $n\geq 0$ and $i_0, i_1,...,i_{n-1},i,j\in E$ there holds:
\begin{equation}
\mathbb{P}(X_{n+1}=j|X_n=i,\, X_{n-1}=i_{n-1},...,\,X_{0}=i_{0})=\mathbb{P}(X_{n+1}=j|X_n=i),
\label{eq:marpro}
\end{equation}
whenever both sides are well-defined.
\end{df}
The property (\ref{eq:marpro}) is called the {\sl Markov property} and correspond to the fact that the process has no memory. It is already a big simplification, but we introduce yet another constraint:
\begin{df}
A stochastic process that fulfils (\ref{eq:marpro}) with the right hand side independent of $n$ is called a {\sl homogeneous} Markov chain (HMC).
\end{df}

Homogeneity of a Markov chain means that the probability of transition from state $i$ to state $j$ does not depend on time. These transition probabilities we usually put together into one object, called {\sl transition matrix}.

\begin{df}
The matrix $P=[p_{ij}]_{n\times n}$ where $i$ and $j$ are elements of the state space and $p_{ij}=\mathbb{P}(X_{k+1}=j|X_k=i)$ is called the transition matrix of the HMC.
\end{df}

This transition matrix cannot be just any matrix we want. We introduce the notion of {\sl stochastic matrix}, as further on we are going to deal with such matrices only.

\begin{df} A $n\times n$ matrix $A=[a_{ij}]_{n\times n}$ is called stochastic if:
 \begin{enumerate}
  \item $a_{ij}\geq 0$,
  \item $\sum^n_{j=1}a_{ij}=1$ for any $1\leq i \leq n$.
 \end{enumerate}
 If we substitute the second condition by $\sum^n_{j=1}a_{ij}\leq1$ for any $1\leq i \leq n$, then $A$ is called substochastic.
\end{df}

Directly from the construction and the normalization of probability (a process starting from a state $i$ in the next step has to go {\sl somewhere}, i.e. $\sum_{j\in E}p_{ij}=1$), the following property arises:

\begin{prop}
The transition matrix of HMC is a stochastic matrix. 
\end{prop}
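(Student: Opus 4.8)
The plan is to verify the two defining conditions of a stochastic matrix directly from the definition $p_{ij}=\mathbb{P}(X_{k+1}=j\mid X_k=i)$, using only that the state space $E$ is finite and that conditional probability is itself a probability measure. First I would check nonnegativity: for each pair $i,j\in E$ with $\mathbb{P}(X_k=i)>0$ the quantity $p_{ij}$ is a conditional probability of an event, hence lies in $[0,1]$ and in particular $p_{ij}\geq 0$. This gives condition~1 with no work.

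Next I would establish the row-sum condition. Fix $i\in E$ with $\mathbb{P}(X_k=i)>0$. Since $X_{k+1}$ is a random variable taking values in $E$, the events $\{X_{k+1}=j\}$, $j\in E$, are pairwise disjoint and their union is all of $\Omega$. Applying finite additivity of the conditional probability measure $\mathbb{P}(\,\cdot\mid X_k=i)$ yields
$$\sum_{j\in E}p_{ij}=\sum_{j\in E}\mathbb{P}(X_{k+1}=j\mid X_k=i)=\mathbb{P}\Big(\bigcup_{j\in E}\{X_{k+1}=j\}\ \Big|\ X_k=i\Big)=\mathbb{P}(\Omega\mid X_k=i)=1,$$
which is condition~2. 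Homogeneity guarantees these numbers do not depend on $k$, so the matrix is well-defined independently of the time step.

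The only genuine subtlety — and it is a matter of convention rather than a real obstacle — is what happens for a state $i$ that is not reached, i.e. $\mathbb{P}(X_k=i)=0$, so that the conditioning event is null and $p_{ij}$ is not intrinsically defined. I would handle this by the standard convention of assigning such a row an arbitrary probability vector (for instance $p_{ii}=1$), which is consistent with the Markov property and keeps $P$ stochastic; alternatively one simply restricts attention to chains in which every state has positive probability at the relevant time. Either way the conclusion stands, and no further calculation is needed.
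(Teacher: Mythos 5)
Your proof is correct and follows exactly the argument the paper sketches in the sentence preceding the proposition (nonnegativity of conditional probabilities plus normalization over the finite state space, i.e. the chain ``has to go somewhere''); the paper simply omits the formal write-up. Your remark about rows conditioned on null events is a reasonable extra precaution the paper does not bother with.
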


The definition of a stochastic matrix can be expressed in other ways. The one particularly useful for us is pointed out in the following lemma.
\begin{lemma}
The two statements are equivalent:
\begin{enumerate}[(i)]
 \item A $n \times n$ matrix $A$ is stochastic.
 \item If $\mu=(\mu_1,...,\mu_n)$ is a probability vector, that is $\mu_i\geq 0$ and $\sum^n_{i=1}\mu_{i}=1$, then $\mu A$ is also a probability vector.
\end{enumerate}
\label{lemma:et}
\end{lemma}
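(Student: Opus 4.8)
The plan is to prove the two implications separately, both by direct computation; neither direction requires anything beyond the definitions and the fact that finite sums may be interchanged.

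First I would show (i)$\,\Rightarrow\,$(ii). Assume $A=[a_{ij}]$ is stochastic and let $\mu=(\mu_1,\dots,\mu_n)$ be a probability vector. Nonnegativity of the entries of $\mu A$ is immediate: $(\mu A)_j=\sum_{i=1}^n \mu_i a_{ij}$ is a sum of products of nonnegative numbers. For the normalization, interchange the (finite) order of summation and use condition $2.$ in the definition of a stochastic matrix:
$$\sum_{j=1}^n (\mu A)_j=\sum_{j=1}^n\sum_{i=1}^n \mu_i a_{ij}=\sum_{i=1}^n \mu_i\sum_{j=1}^n a_{ij}=\sum_{i=1}^n \mu_i\cdot 1=1.$$
Hence $\mu A$ is again a probability vector.

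For the converse (ii)$\,\Rightarrow\,$(i), the only idea needed is to feed the hypothesis the right test vectors. Fix $i\in\{1,\dots,n\}$ and take $\mu=e_i$, the $i$-th standard basis (row) vector, which is clearly a probability vector. Then $\mu A$ is precisely the $i$-th row $(a_{i1},\dots,a_{in})$ of $A$, so by (ii) this row is a probability vector: all $a_{ij}\ge 0$ and $\sum_{j=1}^n a_{ij}=1$. Since $i$ was arbitrary, $A$ satisfies both defining conditions and is stochastic.

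There is essentially no obstacle here — the statement is a bookkeeping exercise — and the single point worth stating explicitly is the choice of $\mu=e_i$ in the second implication, which turns a quantifier over all probability vectors into a statement about each individual row. If desired, one could add a remark that the same argument, reading "substochastic" for "stochastic" and "$\sum_j(\mu A)_j\le 1$" for "$=1$", proves the analogous equivalence for substochastic matrices.
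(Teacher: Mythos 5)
Your proof is correct and follows essentially the same route as the paper: a direct interchange of finite sums for (i)$\Rightarrow$(ii), and testing with the standard basis vectors $e_i$ (the paper's $\delta^{(i)}$) to recover the rows of $A$ for the converse. Nothing to add.
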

\begin{proof}
First we show that (i) implies (ii). Denote $\nu=\mu A$, so $\nu_j=\sum^n_{i=1}\mu_i a_{ij}$. From stochasticity of A we conclude that $\nu_j \geq 0$. To complete the proof of the fact that $\nu$ is a probability vector, we calculate:
$$\sum_{j=1}^n\nu_j=\sum_{j=1}^n\sum_{i=1}^n\mu_i a_{ij}=\sum_{i=1}^n\sum_{j=1}^n\mu_i a_{ij}=\sum_{i=1}^n\mu_i=1,$$
where in the last equality we used the fact that $\mu$ is a probability vector, and in the penultimate the stochasticity of A.

Now we conclude (i) from (ii). Denote $\delta^{(i)}=(0,...,0,1,0,...0)^T$, where $1$ stands at the $j$-th entry. $\delta^{(i)}$ is a probability vector. From (ii) we know, that $\delta^{(i)} A=\{ \sum_{j=1}^n \delta^{(i)}_j a_{jk}\}_{k=1,...,n}=\{a_{ik}\}_{k=1,...,n}$ is a probability vector too, so for all $i,k$ there is $,a_{ik}\geq 0$ and for all $i$ we have $\sum_{k=1}^n a_{ik}=1$, what completes the proof.
\end{proof}

Transition matrix, together with some initial conditions, determines the dynamics of the process. Indeed, any probability describing HMC can be represented by the means of its transition matrix.

\begin{remark}
The distribution of a discrete-time homogenous Markov chain is determined by its initial distribution and its transition matrix. In patricular the following equality holds: 
$$\mathbb{P}(X_{n+1}=j,X_n=i,\, X_{n-1}=i_{n-1},...,\,X_{0}=i_{0})=p_{n,n-1}p_{n-1,n-2},...,p_{1,0}p_0,$$
where $p_0=\mathbb{P}(X_0=i_0)$.
\end{remark}
\begin{proof}
From Bayes' rule there follows:
$$\mathbb{P}(X_{n+1}=j,X_n=i,\, X_{n-1}=i_{n-1},...,\,X_{0}=i_{0})=\mathbb{P}(X_{n+1}=j|X_n=i,\, X_{n-1}=i_{n-1},...,\,X_{0}=i_{0})\times$$
$$\times \mathbb{P}(X_n=i,\, X_{n-1}=i_{n-1},...,\,X_{0}=i_{0}).$$
From the Markov property (\ref{eq:marpro}) we know that the first term on the right hand side is nothing but $\mathbb{P}(X_{n+1}=j|X_n=i)$. We follow this procedure step by step and finally we arrive at:
$$\mathbb{P}(X_{n+1}=j,X_n=i,\, X_{n-1}=i_{n-1},...,\,X_{0}=i_{0})= \mathbb{P}(X_{n+1}=j|X_n=i),...,\,
\mathbb{P}(X_{1}=i_1|X_0=i_0)\mathbb{P}(X_0=i_0),$$
which is what we wanted to show.
\end{proof}

If we need to perform several transitions from some initial distribution, the notion of {\sl n-step transition matrix} will be useful.

\begin{df}
The matrix $P^n$ is called n-step transition matrix and:
$$p_{ij}(n)= \mathbb{P}(X_{m+n}=j|X_m=i)=\sum_{i_1,...,i_{n-1}}p_{i,i_1}p_{i_1,i_2},...,p_{i_{n-1},j}.$$
\end{df}
Proof of correctness of this definition can be found e.g. in \cite{koralov}.
\newline

Practically, given a vector of probabilities at some time $t$: $q(t)_i=\mathbb{P}(X_t=i)$, if we would like to provide a vector of probabilities for the next time step $t+1$, we simply multiply vector by the transition matrix of a given HMC: 
$q(t+1)=\sum_{i\in E}p_{ij}q_i(t)=q(t) P$. If we are to perform $n$ time steps, we multiply n times: $q(t+n)=q(t) P^n$. Note, that with the construction presented here, we have to multiply matrix by {\sl horizontal} vector from the left hand side. This is of course just a matter of notation convention: we could have defined the transition matrix in the way that $p_{ji}$ is the probability of choosing a state $j$ starting from a state $i$ and then we would have had multiplication of matrices by {\sl vertical} vectors from the right hand side, which is usually considered as more intuitive in the algebraic computations. We traded this intuition for another: in the matrix element $p_{ij}$ the first index $i$ enumerates a place where we are, and the second $j$ -- where we can go -- which is at least chronological. For convinience we adopt a convention that vector $q$ is horizontal, while transposed vector $q^T$ is vertical.


There is a deep analogy between Markov chains and graph theory. We can think of states of a Markov chain as nodes of a graph. We draw this analogy by providing yet a bit more definitions and then describing their relations to the graph theory, see Sec. \ref{graph}.

\begin{df}
State $j$ is said to be accesible from state $i$ if there exists $t\geq 0$ such that $p_{ij}(t)>0$ 
\end{df}
\begin{df}
States $i$ and $j$ are said to communicate if $i$ is accessible from $j$ and $j$ is accessible from $i$.
\end{df}
\begin{df}
For any HMC with the state space $E$ if any state $i$ from a given set of states $S\subseteq E$ communicates with any other state $j$ from this set, then $S$ is said to be irreducible. If $S=E$, then HMC is said to be irreducible.
\end{df}
\begin{df}
The period $d(j)$ of a state $j$ is the greatest common divisor of the times at which a return to this state is possible. State $j$ is aperiodic if $d(j)=1$, otherwise it is periodic. If a HMC contains a periodic state, we say that this HMC has periodic structure.   
\end{df}

All this we have seen already on graphs: accessibility of state $j$ from state $i$ corresponds to one way connectivity from node $i$ to node $j$ in directed graph. Two communicating states $i,j$ correspond to two connected nodes in a graph. Irreducible HMC straightforwardly corresponds to strongly connected graph. Adjacency matrix which is stochastic, can serve as a transition matrix for a HMC. So in order to analyze HMC we can equivalently take the matrix that describes it or the corresponding weighted graph. Note, that any adjacency matrix can be made stochastic by normalization: a matrix with elements $a'_{ij}=a_{ij}/\sum_{j} a_{ij}$ is already stochastic. Period of a HMC has also its natural analogue on graphs.

\section{Ergodic theorem}
In this section we present two versions of ergodic theorem for Markov chains. There are some differences between them, but both deal with the stationary state of a process. This is very important in our case of epidemic spreading: we will be interested in number of infected individuals after some long times.

\begin{df}
A probability distribution $\pi$ on a finite state space $E$ is said to be stationary for a transition matrix $P$ if:
$$\pi P=\pi.$$
\end{df}

The first version of ergodic theorem (to be found e.g. in \cite{koralov}) that we present simply points what kind of Markov chain can actually achieve stationary distribution. Let us firts define this special class.

\begin{df}
A Markov chain is said to be ergodic if there exists $n$ such that the {\sl n-step transition probabilities} $p_{ij}(n)$ are strictly positive for all indices $i,j$. 
\end{df}

So basically the idea is to be able to reach any state $j$ starting from any other state $i$ after some time. This is not always the case: let us think about a Markov chain described by a block-diagonal transition matrix -- such a Markov chain is obviously not ergodic.

\begin{thm}[Ergodic theorem 1]
Given an ergodic homogeneous Markov chain with $n$-dimensional state space and the transition matrix $P$, there exists a unique stationary probability distribution $\pi$. For every $i$ the t-step transition probabilities converge to the distribution $\pi$:
$$\lim_{t\to \infty}p_{ij}(t)=\pi_j.$$
\label{thm:ergodic}
\end{thm}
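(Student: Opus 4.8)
The plan is to work entirely with the finite-dimensional linear algebra of the transition matrix $P$, exploiting ergodicity to produce a strictly positive power $P^m$ and then applying a contraction argument in a suitable norm. First I would fix $m$ such that $P^m$ has all entries strictly positive, and let $Q = P^m$. Since $Q$ is stochastic (a product of stochastic matrices is stochastic, by iterating Lemma \ref{lemma:et}) and strictly positive, I would introduce $\delta = \min_{i,j} q_{ij} > 0$ and note $n\delta \le 1$. The key quantitative step is a \emph{coupling/Doeblin-type contraction}: for any two probability vectors $\mu,\nu$, one shows that $\|\mu Q - \nu Q\|_1 \le (1 - n\delta)\,\|\mu-\nu\|_1$, where $\|\cdot\|_1$ is the total-variation-type $\ell^1$ norm. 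The mechanism is that $Q$ can be written as $Q = n\delta\, \mathbf{1}u + (1-n\delta) R$ where $u$ is the uniform row vector, $\mathbf{1}$ the all-ones column, and $R$ another stochastic matrix; the common component $n\delta\,\mathbf{1}u$ is annihilated by the difference $\mu-\nu$ (whose entries sum to zero), leaving only the $(1-n\delta)$-scaled part.

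Next I would deduce existence and uniqueness of the stationary distribution. Applying the contraction to $\mu$ and $\mu Q$ shows that $(\mu Q^k)_{k\ge 0}$ is Cauchy in $\ell^1$, hence converges to some probability vector $\pi$ (the simplex is closed, so the limit is again a probability vector); passing to the limit in $\mu Q^{k+1} = (\mu Q^k) Q$ gives $\pi Q = \pi$. The contraction applied to any two fixed points forces them to coincide, so $\pi$ is the unique stationary distribution of $Q$. To see $\pi$ is stationary for $P$ itself, observe $(\pi P) Q = (\pi P) P^m = (\pi P^m) P = \pi P$, so $\pi P$ is also a fixed point of $Q$ and hence equals $\pi$ by uniqueness.

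For the convergence statement $\lim_{t\to\infty} p_{ij}(t) = \pi_j$, I would take $\mu = \delta^{(i)}$ and write $t = km + r$ with $0 \le r < m$; then $\delta^{(i)} P^t = (\delta^{(i)} P^r) Q^k$, and the contraction gives $\|\delta^{(i)} P^t - \pi\|_1 \le (1-n\delta)^k \|\delta^{(i)} P^r - \pi\|_1 \to 0$ as $t\to\infty$, uniformly in $r$ since there are only finitely many residues. Reading off the $j$-th coordinate yields the claim. The main obstacle — really the only non-routine point — is establishing the contraction inequality cleanly; once the decomposition $Q = n\delta\,\mathbf{1}u + (1-n\delta)R$ is in hand, everything else is bookkeeping with stochastic matrices and the fact that $\mu - \nu$ has zero coordinate sum. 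An alternative to the explicit decomposition is a direct estimate: $(\mu Q - \nu Q)_j = \sum_i (\mu_i - \nu_i) q_{ij}$, and splitting the index set according to the sign of $\mu_i - \nu_i$ together with $q_{ij} \ge \delta$ recovers the same factor $1 - n\delta$; I would present whichever is shorter.
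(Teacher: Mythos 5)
Your proposal is correct and follows essentially the same route as the paper: both establish a $(1-\mathrm{const})$-contraction of the total-variation distance under multiplication by a strictly positive stochastic matrix, deduce that the orbit of any initial distribution is Cauchy, identify the limit as the unique stationary distribution, and specialize to $\mu_0=\delta^{(i)}$ to obtain $p_{ij}(t)\to\pi_j$. The only differences are minor: you derive the contraction from the Doeblin splitting $Q=n\delta\,\mathbf{1}u+(1-n\delta)R$ while the paper uses the direct sign-splitting estimate that you mention as an alternative, and you are somewhat more explicit than the paper about working with a strictly positive power $P^m$ (writing $t=km+r$ and checking that the fixed point of $Q=P^m$ is also fixed by $P$), a point the paper glosses over by tacitly assuming all entries are bounded below by $\alpha$.
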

\begin{proof}
 The idea of this proof is first to show two inequalities between some distances on the space of probability distributions. We use them to show that any probability distribution transformed $t$-times by transition matrix $P$ forms a Cauchy sequence enumerated by $t$, what will allow us to show the convergence of $p_{ij}(t)$.
 
 Let $\mu,\nu$ be two probability distributions on $E=\{1,...,n\}$. We denote by $d$ a distance on the space of probability distributions:
 $$ d(\mu,\nu)=\frac{1}{2}\sum_{i=1}^n|\mu_i-\nu_i|. $$
 Clearly, $d(\mu,\nu)\leq 1$ .The space of distributions with such a distance is known to be a complete metric space.
 
 It is convinient to rewrite the distance in the following way:
 $$d(\mu,\nu)=\frac{1}{2}\sum_{i=1}^n(\mu_i-\nu_i)\Theta(\mu_i-\nu_i)+\frac{1}{2}\sum_{i=1}^n(\nu_i-\mu_i)\Theta(\nu_i-\mu_i)=
 \sum_{i=1}^n(\mu_i-\nu_i)\Theta(\mu_i-\nu_i),$$
 where $\Theta(x)$ stands for the Heaviside's step function which takes the value $1$ for $x\geq 0$ and $0$ otherwise. In the last equality we used the fact that for probability distributions $\mu,\nu$ we have $\sum_{i=1}^n \mu_i=\sum_{i=1}^n \nu_i=1$, and thus:
 $$\frac{1}{2}\sum_{i=1}^n(\nu_i-\mu_i)\Theta(\nu_i-\mu_i)=\frac{1}{2}\sum_{i=1}^n(\nu_i-\mu_i)-\frac{1}{2}\sum_{i=1}^n(\nu_i-\mu_i)\Theta(\mu_i-\nu_i)=\frac{1}{2}\sum_{i=1}^n(\mu_i-\nu_i)\Theta(\mu_i-\nu_i).$$
 
 Take now $A$ -- stochastic matrix. By Lemma \ref{lemma:et} be conclude that $\mu A$ and $\nu A$ are probability distributions.
 Below we are going to show the two following inequalities:
 \begin{eqnarray}
 d(\mu A, \nu A)&\leq & d(\mu,\nu), \label{eq:et1} \\ 
 d(\mu A, \nu A)&\leq & (1-\alpha)d(\mu,\nu), \label{eq:et2} 
 \end{eqnarray}
 where the latter inequality applies in case when for some $\alpha$ all $a_{ij}\geq\alpha$. Let us first estimate $d(\mu A, \nu A)$ in the general case:
 
 $$ d(\mu A, \nu A) = \sum_{j=1}^n (\mu A-\nu A)_j \Theta((\mu A-\nu A)_j) = 
 \sum_{j=1}^n \sum_{i=1}^n (\mu_i -\nu_i)A_{ij} \Theta((\mu A-\nu A)_j)\leq $$
 $$\leq\sum_{i=1}^n (\mu_i -\nu_i)\Theta(\mu_i -\nu_i) \sum_{j=1}^n A_{ij} \Theta((\mu A-\nu A)_j)\leq
 \sum_{i=1}^n (\mu_i -\nu_i)\Theta(\mu_i -\nu_i)=d(\mu,\nu),$$
 
 \noindent which gives us Eq. (\ref{eq:et1}). Assume now, that all $a_{ij}>\alpha$. Since both $\mu A$ and $\nu A$ are probability distributions, there must be at least one index $k$ such that $(\mu A)\leq(\nu A)$, and thus for all $i$ there applies $\sum_{j=1}^n a_{ij} \Theta((\mu A-\nu A)_j)<1-\alpha$. Modifying our previous estimations by introduction of the latter result we prove Eq. (\ref{eq:et2}):
 $$ d(\mu A, \nu A) \leq (1-\alpha)\sum_{i=1}^n (\mu_i -\nu_i)\Theta(\mu_i -\nu_i)=(1-\alpha)d(\mu,\nu).$$
 
 Now we are ready to show the convergence of $t$-step transition probability. Denote by $\mu_0$ an arbitrary probability distribution on Markov chain state space with positive entries and moreover: $\mu_t=\mu_0 P^t$. Using Eq. (\ref{eq:et1}-\ref{eq:et2}) we show that $\mu_t$ is a Cauchy sequence:
 
 $$ d(\mu_t,\mu_{t+k})=d(\mu_0 P^t,\mu_0 P^{t+k})\leq(1-\alpha)d(\mu_0 P^{t-s},\mu_0 P^{t+k-s})\leq $$
 $$ \leq ... \leq(1-\alpha)^m d(\mu_0 P^{t-ms},\mu_0 P^{t+k-ms})\leq (1-\alpha)^m,$$

 \noindent where we used the fact that $d(\mu,\nu)\leq 1$ and where we go with $m$ as far as $0\leq t-ms<s$ is fulfilled. For large $t$ the factor $(1-\alpha)^m$ can be arbitrarily small, and so we showed that $\mu_t$ is a Cauchu sequence. From completeness of the space, $\mu_t$ has a limit which is a probability distribution, let us denote it by $\pi=\lim_{t\to\infty}\mu_t$, and indeed, the probability distribution $\pi$ is stationary:
 $$\pi P= \lim_{t\to\infty}\mu_t P= \lim_{t\to\infty}(\mu_0 P^t )P =\lim_{t\to\infty}(\mu_0 P^t+1)=\pi.$$
 Let us take now a probability distribution $\mu_0$ to be concentrated at the point $i$, i.e. $\mu_0=(0,...,0,1,0,...0)^T$ where $1$ stands at the $i$-th entry. Then $\sum_{k=1}^n(\mu_0)_k (P^t)_{kj}=p_{ij}(t)$, so finally we proved that $\lim_{t\to \infty}p_{ij}(t)=\pi_j$. 
 
 The last thing we have to show is uniqueness. Assume that there are two stationary distributions $\pi_1$ and $\pi_2$, i.e. $\pi_1=\pi_1 P$ and $\pi_2=\pi_2 P$. Then also $\pi_1=\pi_1 P^s$ and $\pi_2=\pi_2 P^s$. However, by Eq. (\ref{eq:et2}) we have:
 $$ d(\pi_1,\pi_2)=d(\pi_1 P^s,\pi_2 P^s ) \leq (1-\alpha)d(\pi_1,\pi_2),$$ which implies that $d(\pi_1,\pi_2)=0$, and thus $\pi_1=\pi_2$.
 
\end{proof}

This theorem will practically allow us to expect a stationary distribution in our considerations of epidemic spreading. But apart from theoretical analysis of epidemic spreading models we are also checking our predictions by computer simulations. The second version of ergodic theorem says that we can actually obtain something reasonable out of those simulations. For finite state spaces, which is the case we analize, it can be formulated as follows:

\begin{thm}[Ergodic theorem 2]
Let $\{X_n\}_{n\geq0}$ be an irreducible homogeneous Markov chain with state space $E$ and the stationary distribution $\pi$, and let $f:\, E\to\mathbb{R}$ be such that $\sum_{i\in E}|f(i)|\pi(i)<\infty$. Then for any initial distribution:
$$\lim_{n\to\infty}\frac{1}{N}\sum_{k=1}^N f(X_k)=\sum_{i\in E}f(i)\pi(i).$$
\label{thm:ergodic2}
\end{thm}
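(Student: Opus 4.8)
The plan is to reduce the statement to a law of large numbers for the number of visits to a single state and then invoke the classical strong law for i.i.d.\ sequences via a regeneration (renewal) argument. Since $E$ is finite, write $f(X_k)=\sum_{j\in E}f(j)\mathbf{1}_{\{X_k=j\}}$, so by linearity of the limit it suffices to prove that for every fixed $j\in E$,
$$\frac{1}{N}\sum_{k=1}^N \mathbf{1}_{\{X_k=j\}}\longrightarrow\pi(j)\quad\textrm{almost surely as }N\to\infty.$$
Fix such a $j$. Because the chain is irreducible on a finite state space, from any initial distribution the state $j$ is visited infinitely often with probability one; let $T_1<T_2<\cdots$ be the successive times at which $X_n=j$, and set $\tau_m=T_{m+1}-T_m$ for $m\geq 1$. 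By the (strong) Markov property the random variables $\tau_1,\tau_2,\ldots$ are i.i.d., distributed as the first return time to $j$ of the chain started at $j$, and this return time has finite mean because on a finite irreducible chain every state is positive recurrent.

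The key quantitative input is the identity $\mathbb{E}(\tau_1)=1/\pi(j)$. I would prove it by the standard excursion argument: the function $\nu(i)=\mathbb{E}_j\big(\#\{0\le k<T_{\textrm{ret}}:X_k=i\}\big)$, the expected number of visits to $i$ during one excursion away from $j$, satisfies $\nu P=\nu$ and $\sum_i\nu(i)=\mathbb{E}(\tau_1)<\infty$; after normalization it is a stationary distribution, so by the uniqueness of the stationary distribution for an irreducible finite chain (established by the same contraction estimates used in Theorem~\ref{thm:ergodic}) we get $\nu(i)=\mathbb{E}(\tau_1)\,\pi(i)$, and evaluating at $i=j$, where $\nu(j)=1$, yields $\mathbb{E}(\tau_1)=1/\pi(j)$. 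Alternatively this may simply be cited as a known fact about positive recurrent chains.

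The remainder is a counting argument. Put $V_N=\sum_{k=1}^N\mathbf{1}_{\{X_k=j\}}$, so that $T_{V_N}\le N< T_{V_N+1}$. Applying the classical strong law of large numbers to $(\tau_m)$ gives $\tfrac{1}{M}T_M=\tfrac{1}{M}\big(T_1+\sum_{m=1}^{M-1}\tau_m\big)\to\mathbb{E}(\tau_1)=1/\pi(j)$ almost surely, the incomplete initial block $T_1$ being a.s.\ finite so that $T_1/M\to 0$. Since $V_N\to\infty$ a.s., substituting $M=V_N$ and $M=V_N+1$ into $T_{V_N}\le N< T_{V_N+1}$ and dividing by $V_N$ squeezes $N/V_N\to 1/\pi(j)$, i.e.\ $V_N/N\to\pi(j)$ a.s.; multiplying by $f(j)$ and summing over $j\in E$ proves the theorem. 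The main obstacle, and essentially the only place that needs real work, is the identity $\mathbb{E}(\tau_1)=1/\pi(j)$ together with checking that the incomplete excursions at the two ends (before $T_1$ and after $T_{V_N}$) contribute negligibly in the limit; once that is done, everything else is linearity plus the i.i.d.\ strong law. It is worth noting that, unlike in Theorem~\ref{thm:ergodic}, only irreducibility is needed here — not aperiodicity — since we average over time and never invoke convergence of $p_{ij}(t)$.
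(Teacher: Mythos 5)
The paper itself gives no proof of this statement; it only refers the reader to \cite{bremaud}. Your regenerative (renewal) argument is correct and is in fact essentially the proof given in that reference: reduce by linearity (using finiteness of $E$) to the a.s.\ convergence of the empirical visit frequencies $V_N/N\to\pi(j)$, observe that the inter-visit times to a fixed state $j$ are i.i.d.\ with finite mean by the strong Markov property and positive recurrence, establish Kac's identity $\mathbb{E}(\tau_1)=1/\pi(j)$ via the excursion measure, and conclude with the i.i.d.\ strong law together with the sandwich $T_{V_N}\le N<T_{V_N+1}$. The treatment of the incomplete blocks at both ends is handled properly, and your closing remark that aperiodicity is not needed here is correct and worth making. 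The only imprecise point is the parenthetical claim that uniqueness of the stationary distribution of a merely irreducible finite chain follows from ``the same contraction estimates used in Theorem~\ref{thm:ergodic}'': those estimates require some power of $P$ to have all entries bounded below by a positive $\alpha$, which fails for periodic irreducible chains. To cover the general case you should either obtain uniqueness from the excursion construction itself (any stationary distribution $\mu$ of an irreducible recurrent chain satisfies $\mu(i)=\mu(j)\,\nu(i)$ for the excursion measure $\nu$ based at $j$ with $\nu(j)=1$, which pins $\mu$ down completely), or simply cite the uniqueness statement for irreducible finite chains, as you suggest for Kac's formula. With that small repair the proof is complete.
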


This theorem assures us that empirical averages converge to probabilistic averages. A proof of this version of Ergodic theorem can be found in \cite{bremaud}. 

\section{Quasi-stationary distributions}
\label{quasi}
In the previous section we have proven that there exists a unique stationary distribution. Note, however, that we took for granted {\sl ergodicity}, which is quite strong assumption. In Sec. \ref{models} and \ref{abm} we describe so-called SIS epidemic spreading models. One of their intrinsic features is the presence of {\sl absorbing state}, i.e. a state from which no states are accesible. We are speaking about {\sl healthy state} -- the situation when are individuals are healthy. Then, there is no way back to any other state of the Markov chain as there is no-one who could possibly infect some other individuals. 

It looks like the whole knowledge presented in the last section is useless for out furher considerations, but happily it is not that bad. The idea is to divide the graph into subgraphs: one is the absorbing state, the second one -- all the other nodes, and analize the latter subgraph separately. Provided a condition, that the system has not yet been driven to the absorbing state, we can analize the dynamics as if our graph was ergodic. In particular, we are able to look for long time behavior and temporary stationary distribution, usually called in the literature {\sl quasi-stationary distribution} or {\sl stationary conditional distribution}.

The formalism of quasi-stationary states is said to be introduced first by A.M.~Yaglom \cite{yag} and analized for Galton-Watson branching process. In a broader case, for discrete-time finite Markov chains, it was analized in seminal paper by J.N.~Darroch and E.~Seneta \cite{qstat}. This is actually the case, which interests us: a short introduction of quasi-stationary distributions presented below is based on this paper. However, there are also generalizations for countable state space Markov chains \cite{qstat-con}, continous-time Markov chains \cite{qstat-cont}, a general state spaces \cite{generalss} along with more involved results for some particulat examples like random walks \cite{randomw}, branching processes \cite{branchp} and more, see \cite{qstat-rev} for a review.

Before we start with the quasi-stationary distribution theory, let us recall the Perron-Frobenius theorem (see e.g. \cite{bremaud,Seneta}), which is used later on.

\begin{df}
A $n \times n$ matrix $A$ is called aperiodic if for any $i$ there exists $k_i\in\mathbb{N}$ such that for all $k\geq k_i$ there holds $(A^k)_{ii}>0$.
\end{df}

\begin{df}
A notion $O(f(x))$ represents a function of $x$ such that there exists $a,b\in\mathbb{R}$, $0<a\leq b<\infty$, and such that for all $x$ sufficiently large the following statement holds: $af(x)\leq O(f(x))\leq bf(x)$.
\end{df}

\begin{thm}[Perron-Frobenius]
 Let $P$ be a $n \times n$ non-negative, irreducible and aperiodic matrix. There exists an eigenvalue $\rho_1$ such that:
 \begin{enumerate}[(a)]
  \item $\rho_1\in\mathbb{R}$ and $\rho_1>0$;
  \item there are positive left and right eigenvectors $u_1$ and $v_1$ associated with $\rho_1$, and chosen such that $u_1v_1^T=1$;
  \item $\rho_1>|\rho_j|$, where $\rho_j$ for $j=2,...,n$ are the other eigenvalues of $P$;
  \item the eigenvector associated with $\rho_1$ is unique up to a constant factor.
  \end{enumerate}
Moreover, let us order the eigenvalues of $P$ in the descending order $\rho_1>|\rho_2|\geq|\rho_3|\geq ... \geq |\rho_n|$. If $|\rho_2|=|\rho_j|$ for some $k\geq 3$, then $m_2\geq m_j$, where $m_j$ is the multiplicity of $\rho_j$, and
$$ P^r=\rho_1^r v_1^T u_1 + O\big( r^{m_2-1}|\rho_2|^r \big).$$
If in addition, $P$ is stochastic or substochastic, then $\rho_1=1$ or $\rho_1\leq 1$ respectively.
\label{thm:pf}
\end{thm}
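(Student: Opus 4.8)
The plan is to prove the statement in the classical order: first existence of a positive eigenvalue $\rho_1$ with positive left and right eigenvectors (parts (a)--(b)), then maximality and strict dominance of $\rho_1$ together with uniqueness of its eigendirection (parts (c)--(d)), then the asymptotic expansion of $P^r$, and finally the (sub)stochastic normalisation. The workhorse throughout is the elementary lemma that a non-negative eigenvector of an irreducible non-negative matrix is strictly positive, which itself follows from $(I+P)^{n-1}>0$ entrywise.

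First I would produce the eigenpair by a fixed-point argument. Let $\Delta=\{x\in\mathbb{R}^n:x_i\ge 0,\ \sum_i x_i=1\}$. Since $P$ is irreducible it has no zero row, hence $xP\ne 0$ for every $x\in\Delta$, so $x\mapsto xP/\|xP\|_1$ is a continuous self-map of the compact convex set $\Delta$; Brouwer's fixed-point theorem yields $u_1\in\Delta$ with $u_1P=\rho_1 u_1$ and $\rho_1=\|u_1P\|_1>0$. From $u_1(I+P)^{n-1}=(1+\rho_1)^{n-1}u_1$ and $(I+P)^{n-1}>0$ we get $u_1>0$. Applying the same construction to $P^T$ gives a positive right eigenvector $v_1>0$ with $Pv_1^T=\rho_1'v_1^T$; computing $u_1Pv_1^T$ two ways and using $u_1v_1^T>0$ forces $\rho_1=\rho_1'$, and we rescale so that $u_1v_1^T=1$. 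For (c) without strictness: if $vP=\lambda v$ with $v\ne 0$, then $|\lambda|\,|v|\le|v|P$ componentwise, and pairing with $v_1>0$ gives $|\lambda|\le\rho_1$. For the uniqueness claim in (d): a real eigenvector $u$ for $\rho_1$ satisfies $|u|P=\rho_1|u|$ after pairing the inequality with $v_1$, so $|u|-u$ is a non-negative eigenvector for $\rho_1$; it cannot be strictly positive (that would force $u_i<|u_i|$ everywhere), hence it is $0$, so $u=|u|>0$, giving geometric multiplicity one.

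The heart of the argument is exploiting aperiodicity, which together with irreducibility makes $P$ primitive: $P^m>0$ entrywise for some $m$. For strict dominance, suppose $|\lambda|=\rho_1$ with $vP=\lambda v$. Pairing $|\lambda|\,|v|\le|v|P$ with $v_1$ forces $|v|P=\rho_1|v|$, hence $|v|=\rho_1^{-m}|v|P^m>0$; then equality in the triangle inequality $\rho_1^m|v|_i=|(vP^m)_i|\le(|v|P^m)_i$, with all entries of $P^m$ strictly positive, forces all coordinates of $v$ to share one complex argument, so $v$ is a scalar multiple of $|v|$ and $\lambda=\rho_1$. For algebraic simplicity I would rule out a generalised eigenvector $w$ normalised so that $wP=\rho_1 w+u_1$: induction gives $wP^k=\rho_1^k w+k\rho_1^{k-1}u_1$, and pairing with $v_1^T$ gives $k\rho_1^{k-1}u_1v_1^T=0$, contradicting $\rho_1>0$, $u_1v_1^T=1$. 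The expansion $P^r=\rho_1^r v_1^T u_1+O(r^{m_2-1}|\rho_2|^r)$ then follows from the Jordan canonical form: the $\rho_1$-block is $1\times1$ with spectral projection $v_1^Tu_1$ (using $u_1v_1^T=1$), while each remaining Jordan block for $\rho_j$ of size $s$ contributes a term of order $r^{s-1}|\rho_j|^r$, the largest of these being absorbed into $O(r^{m_2-1}|\rho_2|^r)$ once the eigenvalues are ordered as stated. Finally, if $P$ is stochastic then $P\mathbf{1}^T=\mathbf{1}^T$ exhibits $1$ as an eigenvalue, so $\rho_1\ge 1$, while the row-sum (Gershgorin) bound gives spectral radius $\le 1$, hence $\rho_1=1$; for substochastic $P$ that bound gives only $\rho_1\le 1$.

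I expect the main obstacle to be the two places where primitivity is genuinely used: the triangle-inequality argument excluding other eigenvalues of modulus $\rho_1$, and the Jordan-form bookkeeping for the remainder term. Everything else reduces to the strictly-positive-eigenvector lemma and routine pairings with $u_1$ and $v_1$.
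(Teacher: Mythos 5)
The paper does not actually prove this theorem: it is stated as a recalled classical result with a pointer to Br\'emaud and Seneta, and is then used as a black box in the quasi-stationary-distribution section. So there is no in-paper argument to compare against; what you have written is the standard textbook proof (Brouwer fixed point on the simplex for existence, positivity via $(I+P)^{n-1}>0$, the equality case of the triangle inequality against $P^m>0$ for strict dominance under primitivity, the pairing $wPv_1^T$ to kill generalised eigenvectors, Jordan form for the expansion of $P^r$, and row sums for the (sub)stochastic normalisation). All of that is sound and is essentially the Seneta-style argument the paper is implicitly relying on.

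One step does not work as written: in your proof of (d) you argue that $|u|-u$, being a non-negative eigenvector, ``cannot be strictly positive (that would force $u_i<|u_i|$ everywhere), hence it is $0$.'' But $u_i<|u_i|$ for every $i$ merely says $u_i<0$ for every $i$, i.e.\ $u=-|u|$; that is not absurd --- $-u_1$ is exactly such an eigenvector. The correct conclusion from the dichotomy (a non-negative eigenvector is either $0$ or strictly positive, applied to both $|u|-u$ and $|u|+u$) is that $u=\pm|u|$, so every real eigenvector for $\rho_1$ has constant sign. That alone does not yet give geometric multiplicity one: you still need the usual ratio argument --- given two positive eigenvectors $u,u'$, set $c=\min_i u_i/u'_i$ and observe that $u-cu'$ is a non-negative eigenvector with a zero coordinate, hence identically zero. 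With that supplement (and the remark that a complex eigenvector for $\rho_1$ has real and imaginary parts which are real eigenvectors), part (d) and the $1\times1$ Jordan block for $\rho_1$ follow, and the rest of your argument goes through unchanged.
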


\vspace{4 mm}
Consider a homogeneous, aperiodic Markov chain $\{X_i\}_{i\geq0}$ with the $(n+1)$-dimensional state space $E=\{a\}\cup T$, where $T$ is irreducible and $a$ is the absorbing state. We enumerate the states as follows: $0$ stands for the absorbing state $a$ and states in $T$ have indices from $1$ to $n$, so that the transition $(n+1)\times(n+1)$ matrix takes the following block structure: 
\begin{equation}
P=\left(\begin{array}{c|c}
1 & \bf{0} \\ \hline
\bf{p}^T & Q
      \end{array}\right).
\label{eq:block}
\end{equation}
Here $1$ is a number, $Q$ is $n\times n$ matrix, $\bf{0}$ and $\bf{p}$ are $n$-component vectors and $\bf{p}$ has at least one entry positive, $\bf{p}\neq 0$. Note, that $\bf{p}$ is a vector of 1-step absorbtion probabilities. Moreover, as $T$-irreducible and $\{X_i\}_{i\geq0}$-aperiodic, $Q$ is substochastic, non-negative, irreducible and aperiodic \cite{Gantmacher}. Let $[q_0(t),\bf{q}(t)]$ be the probability distribution over $n+1$ states at time $t$. We define the distribution conditioned on the event that the system has not yet reached the absorbing state:
\begin{equation}\nonumber
d_j(t)= \mathbb{P}(X_t=j|\tau>t)=\frac{q_j(t)}{1-q_0(t)},
\label{eq:cond-distr}
\end{equation}
where $\tau=\inf\{t\geq0: X_t=0\}$. Note, that $\bf{d}(t)$ is a proper probability distribution on the irreducible set $T$, as $\sum_{j=1}^n d_j(t)=1$.

\begin{df}
A probability distribution $\bm{d}(t)$ on an irreducible set $T\subset E$ is called a quasi-stationary distribution, if $\bm{d}(t+1)=\bm{d}(t)$. We denote it by $\bm{d}$.
\end{df}

\begin{thm}
\label{thm:q-ergodic}
The Markov chain described above has a unique quasi-stationary distribution $\bm{d}$. Moreover, for any initial distribution $\mathbb{\pi}$ on the irreducible set $T$, there exists a large-time limiting distribution and equals $\bm{d}$:
$$ \lim_{t\to\infty} \mathbb{P}\big(X_t=j\big|\tau>t, \mathbb{P}(X_0=i)=\pi_i\textrm{ for }i\in T\big) = d_j. $$
\end{thm}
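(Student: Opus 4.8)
The plan is to reduce everything to the Perron--Frobenius theorem (Theorem~\ref{thm:pf}) applied to the substochastic block $Q$. Write $\mathbf{1}=(1,\dots,1)$ for the all-ones row vector of length $n$. First I would do the elementary bookkeeping: if the chain starts from a distribution $\pi$ supported on $T$, so that $q_0(0)=0$ and $\mathbf{q}(0)=\pi$, then the block form (\ref{eq:block}) shows that the sub-vector of probabilities on $T$ evolves autonomously, $\mathbf{q}(t)=\pi Q^t$, while the mass at the absorbing state merely accumulates. Hence, for $j\in T$,
$$\mathbb{P}\big(X_t=j\,\big|\,\tau>t\big)=\frac{q_j(t)}{1-q_0(t)}=\frac{(\pi Q^t)_j}{\pi Q^t\,\mathbf{1}^T},$$
and this ratio is well defined for all large $t$: since $Q$ is irreducible and aperiodic, $Q^t>0$ entrywise for $t$ large, so $\pi Q^t>0$.

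Next I would apply Theorem~\ref{thm:pf} to $Q$ (non-negative, irreducible, aperiodic, substochastic): there is a dominant eigenvalue $\rho_1$ with $0<\rho_1\le 1$, strictly positive left and right eigenvectors $u_1,v_1$ normalized by $u_1v_1^T=1$, and the expansion $Q^t=\rho_1^t v_1^Tu_1+O\big(t^{m_2-1}|\rho_2|^t\big)$ with $|\rho_2|<\rho_1$. In fact $\rho_1<1$ strictly: if $\rho_1=1$ then $u_1Q=u_1$ with $u_1>0$, whereas $Q\mathbf{1}^T\le\mathbf{1}^T$ with strict inequality in the coordinate where the absorption vector $\mathbf{p}$ is positive, so $u_1Q\,\mathbf{1}^T<u_1\mathbf{1}^T$, a contradiction. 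Substituting the expansion and putting $c:=\pi v_1^T$, which is strictly positive because $v_1>0$ while $\pi$ is a non-zero non-negative vector, I get $\pi Q^t=c\,\rho_1^t u_1+O(t^{m_2-1}|\rho_2|^t)$; dividing numerator and denominator of the ratio by $\rho_1^t$ and letting $t\to\infty$, the remainder terms vanish since $|\rho_2|/\rho_1<1$, so
$$\lim_{t\to\infty}\mathbb{P}\big(X_t=j\,\big|\,\tau>t\big)=\frac{c\,u_{1,j}}{c\,u_1\mathbf{1}^T}=\frac{u_{1,j}}{u_1\mathbf{1}^T}=:d_j,$$
a limit that does not depend on $\pi$. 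This gives the limiting-distribution assertion of the theorem.

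It remains to identify this with \emph{the} quasi-stationary distribution. By the same bookkeeping, one step maps a conditional distribution $\mathbf{d}'$ on $T$ to $\mathbf{d}'Q/(\mathbf{d}'Q\,\mathbf{1}^T)$, so $\mathbf{d}'$ is quasi-stationary exactly when $\mathbf{d}'Q=\lambda\mathbf{d}'$ with $\lambda=\mathbf{d}'Q\,\mathbf{1}^T$, i.e.\ $\mathbf{d}'$ is a non-negative left eigenvector of $Q$. Multiplying such an identity on the right by $v_1^T$ and using $Qv_1^T=\rho_1 v_1^T$ gives $\lambda\,\mathbf{d}'v_1^T=\rho_1\,\mathbf{d}'v_1^T$ with $\mathbf{d}'v_1^T>0$, hence $\lambda=\rho_1$; then part~(d) of Theorem~\ref{thm:pf} forces $\mathbf{d}'$ to be a scalar multiple of $u_1$, so $\mathbf{d}'=\mathbf{d}:=u_1/(u_1\mathbf{1}^T)$. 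Conversely $\mathbf{d}$ satisfies $\mathbf{d}Q=\rho_1\mathbf{d}$ and is therefore quasi-stationary. This proves existence and uniqueness.

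The argument is short once the Perron--Frobenius expansion of $Q^t$ is available, and there is no serious obstacle; the only points deserving explicit care are (i) that the conditioned probability is genuinely defined, i.e.\ $\pi Q^t\ne 0$ for all large $t$, which is where irreducibility and aperiodicity of $Q$ enter, and (ii) that the Perron coefficient $c=\pi v_1^T$ is nonzero so that it cancels in the limit — which is precisely the strict positivity of the right Perron eigenvector $v_1$. I would expect (ii) to be the subtle point worth highlighting; everything else is routine.
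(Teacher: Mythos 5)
Your proposal is correct and follows essentially the same route as the paper: apply the Perron--Frobenius theorem to the substochastic, irreducible, aperiodic block $Q$, use the spectral expansion $Q^t=\rho_1^t v_1^T u_1+O(t^{m_2-1}|\rho_2|^t)$, and identify the limit of the conditioned distribution with the normalized left Perron eigenvector. If anything you are slightly more careful than the source: you keep the normalization in the one-step map on conditional distributions (the paper writes $\bm{d}(t)Q=\bm{d}(t+1)$ without it), and you make the uniqueness of the quasi-stationary distribution explicit via the eigenvalue argument and part (d) of Theorem~\ref{thm:pf}.
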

\begin{proof}
From the general transition rule for the whole Markov chain:
$$ [q_0(t),\bm{q}(t)]P=[q_0(t+1),\textbf{q}(t+1)] $$
and the block-structure (\ref{eq:block}) of the matrix $P$ we conclude that the evolution of $\bm{d}(t)$ is given by $Q$, i.e. $\bm{d}(t)Q=\bm{d}(t+1)$, and so by Theorem \ref{thm:pf} there exist a distribution $\bm{d}$ and a real number $\rho_1>0$ such that:
$$ \bm{d} Q=\rho \bm{d},$$
where $\rho_1$ is the largest eigenvalue of $Q$. It follows also that:
\begin{equation}
 Q^t=\rho_1^t \bm{f}\bm{d}^T+O(t^{m_2-1}|\rho_2|^t),
\label{eq:qn} 
\end{equation}
where $\bm{f},\bm{d}$ are right- and left-handed eigenvectors of $Q$ corresponding to the largest eigenvalue $\rho_2$, see Theorem \ref{thm:pf}. Denote by $\pi_i$ the probability that the process starts in state $i$. The probability that starting from some distribution $\pi_i$ we are in $j$ after time $t$, with the condition that we have not yet reached the absorbing state, is:
\begin{equation}
 \frac{\sum_{i\in T}\pi_i p_{ij}(t) }{\sum_{i\in T}\pi_i (1-p_{i0}(t))},
 \label{eq:cp}
\end{equation}
where we used the notion of $t$-step transition matrix. Using Eq. (\ref{eq:qn}) and the fact that $1-p_{i0}=\sum_{k\in T}p_{ik}$ we rewrite expression (\ref{eq:cp}) as:
\begin{equation}
\frac{\rho_1^n\sum_{i\in T}\pi_i f_i d_j + O(n^{m_2-1}|\rho_2|^n)}
{\sum_{i\in T}\sum_{k\in T}\pi_i f_i d_k +nO(n^{m_2-1}|\rho_2|^n)}\to
d_j+O(n^{m_2-1}\Big(\frac{|\rho_2|}{\rho_1}\Big)^n),
\end{equation}
where we took the limit $n\to\infty$ and used the fact that $\sum_{k\in T}d_k=1$. 
\end{proof}

We obtained the result, that the long-time behavior is given by quasi-stationary distribution $\bm{d}$ and it does not depend on the initial distribution $\pi$. Note, that the relevance of quasi-stationary problem is strongly related to the values of $\rho_1$ and $\rho_2$. If $\rho_1$ is near the unity and $|\rho_2|$ is realatively small, the convrgence of quasi-stationary distribution is fast. We would expect that for $Q$-nearly stochastic, i.e. when vector of 1-step absorbtion probabilities $\bf{p}$ is close to zero. More detailed analysis of quasi-stationary state convergence can be found e.g. in \cite{qstat-rev}. 

By Theorem \ref{thm:q-ergodic} we proved here a kind of quasi-stationary analogy of Theorem \ref{thm:ergodic}: there is an unique distribution, which maybe does not survive forever, but which lasts for a long time. Moreover, there is also a quasi-stationary version of Theorem \ref{thm:ergodic2}, namely that empirical averages converge to probabilistic averages under the condition, that we have not met the absorbing state \cite{quasiergodic}.

For the end of this chapter we define mixing times. In many natural sciences -- particularly in physics or epidemiology -- people deal with some stable stationary states, as they are much easier to cope with than any situation out of equalibrium. Also in epidemic spreading modeling, as in real epidemic problems, we are interested in the final effect of the epidemy. However, it is also pretty interesting, how much does an epidemy need to spread over the system, how much time does it take to reach the (quasi-)stationary state? Or, strictly speaking, to come {\sl close} to it. In order to ask this question in a precise way, we introduce the notion of mixing time.

\begin{df}
For a Markov chain with transition matrix $P$ and stationary distribution $\pi$, the $\epsilon$-mixing time is:
$$T(\epsilon)=\max_{i}\inf\{t:\, \frac{1}{2}\sum_{j=0}^n|p^t_{ij}-\pi_j|\leq\epsilon,\,\forall s\geq t \}. $$
\label{df:mix}
\end{df}

In fact, as we noted in the last section, our aim is often to gain some information about {\sl quasi-stationary distribution}, so we can equivalently put {\sl quasi-stationary distribution} in the previous definition.

\chapter{Epidemic spreading models}
\label{models}
In this chapter we introduce the basic contact network models of epidemic spreading. We identify individuals with nodes of a network and we link them according to the acquaintance structure, i.e. if A knows B, we put a link between them. The graph that represents the network is assumed to be connected. We can also attribute weights to the edges according to the intensity of contacts between two particular individuals. The epidemy spreads over the links of the network, so the construction of the network varies depending on the disease. For instance in order to acquire infection of influenza or HIV two people have to breath the same air or have sex, respectively, so networks for models for spreading of these two disease should differ. This review is based mainly on the approach presented in \cite{corr,DynProc,Networks}. 

The idea is to divide the population into three classes: susceptible (denoted by S, the group of individuals that can be contaminated), infectious (I, individuals that can contaminate others), and recovered (R, who are healthy and resistant). This basic framework can be used to compose models of epidemic spreading. We describe 3 of them: SI, SIS, and SIR, where e.g. SIS stands for a model where a particular individual is susceptible at the beginning, then can be infected, and then can be susceptible again. In general one can consider also plenty of different models, for instance $SI_1I_2$ where after being infected an individual is first strongly contagious (contaminates others with high probability), and after some time -- less contagious. Various kinds of models can be composed according to the particular disease under investigation. This is of course a huge simplification of what actually happens in nature and many biological details are sweeped under the rug, but the main features are captured. And well, at the end of the day this is how physicists and mathematicians do create models.

\section{Homogeneous networks}
\label{homo}
As an initial approach, we impose an assumption that within each class (S, I and R) individuals are identical and homogeneously mixed. This is what we call {\sl the homogeneous assumption}. Let us start with the simplest epidemiological model: susceptible-infected (SI) model. The only possible evolution for each individual is to be initially susceptible (except of one node, ''Patient Zero'') and then infected. There is no way out of infected state so if the contact graph is connected, sooner or later all the individuals are going to be infected. 

Let us introduce the parameters and rules that govern the model. First of them is the probability $\beta dt$ that a susceptible node acquires the infection from an infected neighbor during a time interval $dt$. Parameter $\beta$ is called {\sl spreading rate} and is pathogen dependent. The total probability of getting infected during the time interval $dt$ is $1-(1-\beta dt)^m$, where $m$ is the number of infected neighbors of a given suspectible node. On average, the number of infected neighbors is $m=ki$, where $k$ is the overall number of neighbors of a given node and $i(t)=\frac{I(t)}{N}$ stands for density of infected individuals ($I(t)$ is the number of infected individuals and $N$ is the size of the system). For $\beta dt$ small enough, we apply linearization and thus the term $1-(1-\beta dt)^{ki}$ simplifies to $\beta ki dt$, yielding the \emph{per capita} acquisition rate $\beta ki$. The number of newly contaminated nodes has to depend on the number of nodes that actually can be infected. Now comes our assumption: because of homogeneous mixing the change of density of infected nodes is proportional to the density of susceptible nodes $s(t)=1-i(t)$, where $s(t)=\frac{S(t)}{N}$ ($S(t)$ is the number of susceptible individuals). On the top of that we put average node degree $\langle k \rangle$ in place of individual degrees. Note, that as we are speaking about random process, the values of $I(t)$, $S(t)$ or $R(t)$ are random variables ($R(t)$ is the number of recovered individuals, to be used in the SIR model). Indeed, they usually vary between different runs of the process. Thus it is useful to treat for instance $I(t)$ as a continous variable representing the {\sl expected} number of infected individuals. We can think of this expected value as an average over many runs of the process taken under identical conditions. Finally, the equation describing the dynamics of the model reads as follows:
\begin{equation}
\frac{di(t)}{dt}=\beta \langle k \rangle i(t)\big(1-i(t)\big),
\label{eq:si} 
\end{equation}
\noindent which is of the form of the well-known logistic equation. From the simple SI model, we can proceed to SIS and SIR equations just by adding some extra terms. Let us now consider SIS model (susceptible-infected-susceptible). Now the infected nodes have possibility to recover and be susceptible again. During the time interval $dt$ it happens with probability $\mu dt$, where $\mu$ is called the recovery rate. This assumption yields an extra term in the equation describing the evolution of the SIS model:

\begin{equation}
\frac{di(t)}{dt}=\beta \langle k \rangle i(t)\big(1-i(t)\big)-\mu i(t),
\label{eq:sis} 
\end{equation}
where we still keep fixed the condition $s(t)=1-i(t)$.

In the SIR model (susceptible-infected-recovered) we have three fractions of nodes, so we need more than one equation in order to describe the system. In this variation of epidemic spreading model an individual that has been recovered stays in this state: there is no possibility to transfer to susceptible or infected compartment again. Such a node in fact blocks the way of spreading of epidemy: we may interpret here a recovered node as a resistant or dead one. The equations that describe such a dynamics read as follows:

\begin{eqnarray}
\frac{di(t)}{dt}&=&\beta \langle k \rangle i(t)\big(1-r(t)-i(t)\big)-\mu i(t), 
\label{eq:sir1} \\
\frac{dr(t)}{dt}&=&\mu i(t), 
\label{eq:sir3} 
\end{eqnarray}
where $r(t)=\frac{R(t)}{N}$ stands for the density of recovered nodes. Note, that Eq. (\ref{eq:sir1}) differs from Eq. (\ref{eq:sis}) only by term $-r(t)$, as in SIR model the density of susceptible nodes is $s(t)=1-r(t)-i(t)$, not $s(t)=1-i(t)$ as for SIS and SI. The final result of SIR procedure is easy to guess: in the end all the individuals are going to be recovered (at least on a connected graph) or still susceptible -- but not infected.

Note that in both SIS and SIR model we have two parameters, $\beta$ and $\mu$, that are in general independent. Let us focus on the relation between them. One can distinguish two extreme cases. When the recovery rate $\mu$ is much bigger than the spreading rate $\beta$, $\mu\gg\beta$, the process is dominated by recovery and driven rapidly into the healthy state. For the contrary, if $\mu\ll\beta$ we can neglect the recovery mechanism and we end up with the SI model with all the nodes infected. It is thus reasonable to ask: where is the transition between these two completely different types of behavior?

In order to answer this question we check what happens at the very beginning of the process, when the fraction of infected nodes is very small: $i(t)\ll 1$. This approximation allows us to neglect all terms higher than linear in $i(t)$, in particular: $i^2(t)\approx 0$. Let us start again with the SI model, Eq. (\ref{eq:si}) transforms to:

\begin{equation}
\frac{d i(t)}{dt}=\beta\langle k \rangle i(t), 
\label{eq:lsi}
\end{equation}
which can be easily solved:

\begin{equation}
 i(t)=i_0 e^{\frac{t}{\tau}},
\label{eq:lsi-sol} 
\end{equation}
where $i_0$ is the initial density of infected nodes and $\tau=(\beta \langle k \rangle)^{-1}$ is the time scale of the SI process. The solution for the full SI equation (\ref{eq:si}) reads:
\begin{equation}
 i(t)=\frac{i_0 e^{\frac{t}{\tau}} }{1+i_0 ( e^{\frac{t}{\tau}}-1)}
\end{equation}
and indeed for short times $t\ll\tau$ we obtain again the result computed for $i(t)\ll 1$, Eq. (\ref{eq:lsi-sol}), and for long times $t\gg\tau$ saturation $i \to 1$ occurs.

In the SI model we know that in the end all the individuals will be infected anyway. For the SIS, however, the procedure presented above yields an interesting conclusion. The equation describing the SIS model (\ref{eq:sis}) transforms to:
\begin{equation}
\frac{d i(t)}{dt}=\beta\langle k \rangle i(t)-\mu i(t).  
\end{equation}
Its solution looks the same as for the SI model, Eq. (\ref{eq:lsi-sol}), but now the time scale changes:
\begin{equation}
 \tau=(\beta \langle k \rangle-\mu)^{-1}.
 \label{eq:homo_t}
\end{equation}
A striking difference in the time scales for the SI and SIS models is that now $\tau$ can be negative. By the Grobman-Hartman theorem, except of the set of parameters $\beta, \mu$ such that $\beta \langle k \rangle-\mu=0$ (in the case of SIS model), we can deduce the behavior of the system in the vicinity of $i(t)=0$ out of the linearization provided above. The situation $\beta \langle k \rangle-\mu=0$ comes out to be a transition point between two opposite pictures. If $\tau$ is negative the exponent in Eq. (\ref{eq:lsi-sol}) will also be negative and epidemy will die out.
On the other hand, for positive $\tau$, the density of infected nodes $i(t)$ increases. Therefore the sign of the characteristic time $\tau$ tells us about the behavior of our system. The border between extinction and outbreak of the epidemy is defined by $\tau=0$. This condition is called the {\sl epidemic threshold}. The epidemy spreads over a macroscopic number (of the order of size of the system) of individuals only {\sl above} the epidemic threshold, i.e. :
\begin{equation}
\tau^{-1}=\mu (R_0-1)>0,
\label{eq:sis_tr} 
\end{equation}
where $R_0=\beta\langle k \rangle/\mu $ is the \emph{basic reproductive rate}. For the case of the SIR model the analysis works in exactly the same way, as $r(t)$ can be considered of the same order as $i(t)$, so the term with $r(t)$ in the linearization of Eq. (\ref{eq:sir1}) vanishes. It comes out, however, that here we cannot apply Grobman-Hartman theorem, but according to the experience driven from simulations (see e.g. \cite{DynProc}) we compute and use the notion of epidemic threshold to examine the behavior of the epidemy. The same argument we apply in Sec. \ref{hetero} and \ref{correlated}.

Note also, that epidemic threshold and time scale actually capture the main characteristic of a network: average degree $\langle k \rangle$, in contrast to classical approach to epidemic spreading modeling described at the beginning of this chapter. In all three models under consideration -- SI, SIS, SIR -- the characteristic time $\tau$ decreases with average degree. It means that the denser a network is, the faster an epidemy spread, which sounds reasonable. Moreover, the basic reproductive rate $R_0$ in SIS and SIR models is linear with average degree $\langle k \rangle$. One of the conclusions for this result could be that for fixed spreading rate of a pathogen $\beta$ for highly connected networks (big $\langle k \rangle$) we have to increase the recovery rate $\mu$ in order to stop the outbreak (lower $R_0$ to be smaller than 1).

It is important to stress, that the condition (\ref{eq:sis_tr}) only makes the outbreak possible, but does not guarantee that it will happen. In other words: the probability of epidemic outbreak is in general less than 1, although we are above the epidemic threshold. Due to stochastic fluctuations in finite system the epidemy is able to die out, which is probable especially at the very beginning of the process, when the probability that the first few infected nodes recover at once is significantly higher than zero. It has actually been estimated, that the extinction probability above the threshold of an epidemy starting with $n$ infected nodes is equal to $R_0^{-n}$ \cite{bailey}.

\section{Heterogeneous uncorrelated networks}
\label{hetero}
The homogenenity assumption has already given us some insight into epidemic spreading dependence on the topology of the network, namely dependence on the characteristic time, and epidemic threshold dependence on average degree $\langle k \rangle$. However, most of the real networks are highly heterogeneous. In the previous section we assumed that all the nodes has the same number of neighbors $\langle k \rangle$, whereas for instance in social networks there are individuals with much more neighbors than an average one \cite{hy} (so-called ''super-spreaders'' in epidemiological terminology, or hubs in general network context). The first step towards this problem can be addressed by \emph{degree block approximation} \cite{DynProc}, i.e. by considering all the nodes with the same degree as statistically equivalent. This, however, does not have to always be the case in reality. Indeed, in particular for some regular lattices or structured networks \cite{klemm, moreno}. Moreove, in the present section we focus on uncorrelated networks only (correlated once are under investigation in the next section). Let us introduce new density variables:
\begin{equation}
i_k=\frac{I_k}{N_k},\,\, s_k=\frac{S_k}{N_k},\,\, r_k=\frac{R_k}{N_k},
\end{equation}
where $N_k$ is the number of nodes with degree $k$ and $I_k,\,S_k,\,R_k$ are the numbers of infected, susceptible and recovered individuals respectively. The total densities of all these three classes of nodes are given by:
\begin{equation}
i=\sum_{k}\mathbb{P}(k)i_k,\, s=\sum_{k}\mathbb{P}(k)s_k,\,r=\sum_{k}\mathbb{P}(k)r_k,
\end{equation}
where $\mathbb{P}(k)=\frac{N_k}{N}$ stands for the degree distribution of a network.

Within this framework let us start again with the simplest SI model. Having Eq. (\ref{eq:si}) in mind we create an equation for the heterogeneous case. The change in time of the fraction of infected nodes with degree $k$, $\frac{di_k(t)}{dt}$, is proportional again to the spreading rate $\beta$, the number of neighbors of a node $k$, the probability that this neighbor is susceptible $1-i_k$ and -- this is crucial -- the density $\Theta_k(t)$ of infected neighbors of vertices of degree $k$. The last term, in homogeneous assumption simply taken as $i(t)$, is for now an unknown function, which we will compute in a while. Equation, that describes the dynamics in the heterogeneous assumption is:
\begin{equation}
 \frac{di_k(t)}{dt}=\beta k \big(1-i_k(t)\big)\Theta_k(t).
\label{eq:si_het} 
\end{equation}

Now we have to face the problem of finding $\Theta_k$. The simplest case of a network we can take is the one with no degree correlation. By {\sl no degree correlation} we mean that the probability of an edge to point {\sl from} a vertex of degree $k$ {\sl to} a vertex of degree $k'$, which we denote by $\mathbb{P}(k'|k)$, does not depend on $k$. Let us recall the exact formula for it, Eq. (\ref{eq:uncor}):
\begin{equation}
\mathbb{P}(k'|k)=\frac{k'\mathbb{P}(k')}{\langle k \rangle }. 
\end{equation}
Employing this expression, we easily find that the form of the density $\Theta_k$ of infected neighbors of vertices of degree $k$ is $\Theta_k(t)=\Theta(t)=\frac{1}{\langle k \rangle }\sum_{k'}k'\mathbb{P}(k')i_{k'}(t)$ \cite{PSV}. But actually most of the authors nowadays use a heuristic modification of this equation (See e.g. \cite{DynProc,corr}), namely:
\begin{equation}
 \Theta_k(t)=\Theta(t)=\frac{\sum_{k'}(k'-1)\mathbb{P}(k')i_{k'}(t)}{\langle k \rangle },
 \label{eq:rem}
\end{equation}
where we have term $(k'-1)$ instead of $k'$ because each infected node uses at least one link to connect with another infected node which has contaminated that first one and thus cannot be used to link to a healthy $k$-degree susceptible node. Note, however, that this reasoning does not work for SIS model, because of the possibility of being susceptible again. 
Note also, that we end up with $\Theta_k(t)$ that, thanks to no degree correlation assumption, does not depend on node degree $k$. Differentiating the last equation and employing there Eq. (\ref{eq:si_het}) we get the evolution equation for $\Theta(t)$`:
\begin{equation}
\frac{d\Theta(t)}{dt}=\frac{\beta}{\langle k \rangle }\Theta(t)\Big(\langle k^2 \rangle -\langle k \rangle 
-\sum_k k^2\mathbb{P}(k)i_k(t)+ \sum_k k\mathbb{P}(k)i_k(t) \Big).
\end{equation}
In the small $i_k$ approximation, $i_k(t)\ll 1$ (as we did for homogeneous case), we obtain:
\begin{eqnarray}
  \frac{di(t)}{dt}&=& \beta k \Theta(t), \\
  \frac{d\Theta(t)}{dt}&=&\beta \Big(\frac{\langle k^2 \rangle}{\langle k \rangle} -1 \Big)\Theta(t).
\end{eqnarray}
This set of equations solved for uniform initial condition, $i_k(t=0)=i_0$ for all $k$, gives:
\begin{eqnarray}
\label{eq:si_het_sh}
i_k(t)&=&i_0 \Big(1+ \frac{k(\langle k \rangle-1)}{\langle k^2 \rangle-\langle k \rangle}(e^{t/\tau}-1) \Big), \\
\label{eq:si_het_t}
\tau^{-1}&=&\beta(\kappa-1),
\end{eqnarray}
where $\kappa=\frac{\langle k^2 \rangle}{\langle k \rangle}$ is the heterogeneity ratio as defined in Eq. \ref{eq:kappa}.  

Here comes a prominent impact of the structure of a network on the dynamics of epidemic spreading. Note that for homogeneous networks with a Poisson degree distribution we have $Var(k)=\langle k \rangle$ and so $\kappa=\langle k \rangle+1$ and we recover the result for homogeneous case, Eq. \ref{eq:homo_t}. On the other hand, for heterogeneous networks, $\kappa$ is very large, for instance: in power law distribution scale-free networks we have shown that in the real-world cases, when the exponent vary between 2 and 3, $\langle k^2 \rangle$ goes to infinity, and so does $\kappa$. In the SI model case it just means that epidemic spreading runs horribly fast, but in the SIR case the result will be much more significant, and, so to say, the conclusion even scary.

Because of a remark to Eq. \ref{eq:rem} the SIS case is little bit different, but for SIR we go exactly the same way as for SI, just puting the extra recovery term $-\mu i_k(t)$ into the evolution equation of $i_k(t)$ and making the same approximation as for SI. It has been found \cite{DynProc,jtb} that the time scale of the process behaves like:
\begin{equation}
 \tau^{-1}\sim\beta\kappa-(\mu+\beta).
 \label{eq:sir_het_t}
\end{equation}

Under the homogeneous assumption we were looking for epidemic threshold: we checked in which conditions the characteristic time changes its sign, see Eq. \ref{eq:sis_tr}. We can do the same in this case: we expect an epidemic outbreak for time $\tau$ greater than zero, which translates to:
\begin{equation}
 \frac{\beta}{\mu}\geq\frac{1}{\kappa-1}.
\end{equation}
This yields some finite threshold value for homogeneous networks. However, for heterogeneous networks the situation is strikingly different. Let us focus on scale-free networks described in Sec. \ref{sf}, i.e. power law degree distribution graphs with the exponent $2\leq \alpha \leq 3$. As we have already discussed, for this kind of networks the ratio $\kappa$ diverges, so the right hand side of the equation above is zero. Therefore coefficients $\beta$ and $\mu$ (by definition: positive) can take any finite value! Note, that as we have already mentioned in Sec. \ref{sf}, those networks appear to be quite good model of our real social networks. The conclusion is that our social links build up a perfect environment for epidemies to spread. Of course, again: in reality networks are finite, so $\kappa$ is finite as well. However, if our real social networks really are that heterogeneous as power-law graphs are (what seems to be true: for instance, if we speak about sexually transmitted diseases, the web of 
sexual 
contacts actually is scale-free \cite{sex}), then $\kappa$ can be huge anyway and then the theoretical result of epidemic threshold would not be zero, 
but {\sl almost} zero, which is quite dangerous as well.

So is it actually that bad? The hubs in social networks help a lot the epidemy to spread fast through the whole system. But it is a double-edged sword: from the other hand the presence of hubs in networks allows us to provide efficient strategies of vaccination. This topic goes far beyond our elaboration, but interested reader can see examples of those strategies of targeted immunization in \cite{im2,im3,im1}.


As a supplement to this section, let us go back to the case of SIS model, which we abandoned at Eq. \ref{eq:rem}. The thing that differs the SIS from others is the form of $\Theta$ function. In the SIS model nodes have possibility to be susceptible again, therefore there is no extra $(-1)$ term in $\Theta$ function, i.e.:
\begin{equation}
 \Theta_k(t)=\Theta(t)=\frac{\sum_{k'}k'\mathbb{P}(k')i_{k'}(t)}{\langle k \rangle },
 \label{eq:rem2}
\end{equation}
which consequently yields the epidemic outbreak condition given by \cite{DynProc,jtb}:
\begin{equation}
 \frac{\beta}{\mu}\geq\frac{1}{\kappa}=\frac{\langle k \rangle}{\langle k^2 \rangle}.
\end{equation}
In fact the result is more or less the same and the conclusion about weakness of scale-free network in case of epidemic attack still holds.

\section{Correlated networks}
\label{correlated}
In the last section we relaxed the assumption of homogeneity, but kept no-correlation. But many real networks manifest non-trivial degree correlations, so the next reasonable step we can do is to consider correlated networks. We leave, however, one of simplifications: as in the last section we consider all nodes with the same degree as statistically equivalent.

As we have already seen the difference of dealing with the SIS and, in contrast, SI and SIR model, let us from now focus on the SIS model only, following \cite{corr}. The evolution equations are similar to the no-correlation case: the only difference is that now function $\Theta_k(t)$ really depends on node degree $k$:
\begin{eqnarray}
 \frac{di_k(t)}{dt}&=&\beta k \big(1-i_k(t)\big)\Theta_k(t)-\mu i_k(t), \\
 \Theta_k(t)&=&\sum_{k'}i_{k'} \mathbb{P}(k'|k).
\end{eqnarray}
Therefore, the probability that a link departing from a node of degree $k$ is pointing to an infected node $\Theta_k(t)$
is a sum over probabilities of pointing to a node with degree $k'$ times the probability that this node is infected, $i_{k'}(t)$. Putting the latter equation into the previous one gives the equation fully describing the evolution of the SIS process:
\begin{equation}
 \frac{di_k(t)}{dt}=\beta k \big(1-i_k(t)\big) \sum_{k'}i_{k'} \mathbb{P}(k'|k) -\mu i_k(t).
\end{equation}
Following our regular procedure, we approximate the equation for small densities $i_k$ and see what kind of conclusions we can draw out of this linearization:
\begin{eqnarray}
\label{eq:system}
  \frac{di_k(t)}{dt}&=&\sum_{k'} L_{kk'}i_{k'}(t),\\
  L_{kk'}&=&-\mu \delta_{kk'}+\beta k \mathbb{P}(k'|k),
\end{eqnarray}
where $\delta_{kk'}$ stands for the Kronecker delta. One of the solutions of this system of first order differential equations is obviously the healty state, $i_{k}=0$ for all $k$. Let us however look for another one. We define {\sl connectivity matrix} $C_{kk'}=k \mathbb{P}(k'|k)$, such that now $L_{kk'}=-\mu \delta_{kk'}+\beta C_{kk'}$. This matrix has only real eigenvalues. In order to show that, we denote $P_k=P(k)$ to give an intuition of matrix-vector notation and define a rescaled connectivity matrix:
$$\tilde{C}_{kk'}=\left\{ \begin{array}{ll}
                   \frac{C_{kk'}P_{k'}}{\sqrt{P_{k}P_{k'}}} & \textrm{for } P_{k},P_{k'} \neq 0\\
		   0 & \textrm{otherwise} 
                 \end{array} \right. 
$$ 
Due to the detailed balance condition, Eq. (\ref{eq:db}), the rescaled matrix $\tilde{C}_{kk'}$ is symmetric, and thus it has only real eigenvalues. Note, however, that matrices $C$ and $\tilde{C}$ have common set of eigenvalues. Take an arbitrary eigenvector of $C$, call it $v_k$: $\sum_{k'}C_{kk'}v_{k'}=\Lambda v_k$. Then, adequatly rescaled vector:
$$\tilde{u}_{k}=\left\{ \begin{array}{ll}
                   \frac{v_{k}}{ \sqrt{P_{k}} } & \textrm{for } P_{k}\neq 0\\
		   0 & \textrm{otherwise} 
                 \end{array} \right. 
$$ 
is an eigenvector of $\tilde{C}$: $\sum_{k'}\tilde{C}_{kk'}u_{k'}=\sum_{k'} \frac{C_{kk'}P_{k'}}{\sqrt{P_{k}P_{k'}}} \frac{v_{k'}}{ \sqrt{P_{k'}} }=\Lambda  \frac{v_{k}}{ \sqrt{P_{k}}}=\Lambda u_k$ where we assumed all $P_{k}\neq 0$ without loss of generality.

The result of differential equation (\ref{eq:system}) is given by a linear combination of exponential functions of the form $exp((-\mu+\beta\Lambda_i) t)$, where $\Lambda_i$ are eigenvalues of connectivity matrix $C$, see e.g. \cite{malham}. The dominant behavior of the density of infected nodes will be thus given by the largest eigenvalue $\Lambda_m$, $i(t)\sim e^{(-\mu+\beta\Lambda_m) t}$. Keeping this in mind we follow again the procedure from the previous sections. Characteristic time of the SIS process for network with correlation will be proportional to $1/((-\mu+\beta\Lambda_m))$ and therefore the epidemic outbreak condition read:
\begin{equation}
 \frac{\beta}{\mu}\geq\frac{1}{\Lambda_m}.
\end{equation}
Here again, as for the case of uncorrelated networks, it can be shown \cite{corr} that for scale-free networks the largest eigenvalue $\Lambda_m$ can be lowerbounded by infinity, and thus we end up with {\sl null epidemic threshold}, which implies the same scary conslusion as we saw in the last section.

It has to be stressed, that this interesting result involved not only heterogeneity of the system, but also correlations. However, as we pointed out at the beginning of the previous section, we still keep the assumption that the nodes with the same degree $k$ are statistically equivalent. There is also a way to relax this assumption. In case of homogeneous networks we used only the most basic information about the network: the average degree $\langle k \rangle$. Then we made a step forward to heretogeneous networks, where in fact we used in addition the second moment of degree distribution: $\langle k^2 \rangle$. Further on, in order to include correlation into our analysis, we took matrix of conditional probabilities $\mathbb{P}(k'|k)$. Finally, we can also use {\sl the whole} information about the network, namely: explicit form of adjacency matrix $A_{ij}$. Then the evolution equation of the SIS model reads:
\begin{equation}
 \frac{di_j(t)}{dt}=\beta \big(1-i_j(t)\big) \sum_{l}A_{jl}i_{l} -\mu i_j(t),
\end{equation}
where $i_j$ stands for expected value of node $j$ to be infected. The adjacency matrix can be unweighted (only $0,1$ entries) or weighted, with weights that are associated with the frequency of conntacts between nodes. Note, however, that this is still not the full formulation, as correlations between infected and susceptible nodes are neglected (in order to include them we should use a kind of $\Theta$-function known from the previous sections). Analysis of this approximation is technically much more involved, however the general idea is similar to the case of connectivity matrix $C$: to draw some conclusions out of the largest eigenvalue of a matrix, this time -- adjacency matrix $A_{ij}$. The result is actually quite interesting, as it shows that disease, in a sense, localizes on a finite number of vertices. Interested reader is advised to see \cite{Localization} for more details.

\chapter{Single infection epidemic spreading model}
\label{abm}
In this chapter we introduce single infection epidemic spreading model (SIESM), proposed by the author of this elaboration \cite{Ganczarek}. SIESM exhibit some significant features that qualitatively differ from the models described in Chapter \ref{models}. In order to notice those differences easily, we begin with a short overview of what we have learned so far. Then, in Sec. \ref{des} we describe proposed model of epidemic spreading with at most one infection per time step. In Sec. \ref{ana} we provide theoretical analysis of the model: epidemic threshold, quasi-stationary state and mixing time. Simulations are presented in Sec. \ref{sim}, while Sec. \ref{conclu} is dedicated to some final conclusions.

\section{Introduction}
The most appropiate models for epidemic spreading are those based on dynamical processes on particular graphs rather than those defined by fenomenological differential equations \cite{may,NewmanSIAM}. Within this approach the nodes of a network are usually considered as individuals, who are connected with each other by vertices corresponding to social links.  Although some authors use continous time simulations (see e.g. \cite{Boguna}), the approach presented commonly (see \cite{DynProc} for a review) is based on the idea that at each discrete time step a particular node of the network can contaminate each of its neighbors with some finite probability $p$. The whole set of vertices is being divided into compartments, usually referred to susceptible (S), infected (I) and recovered (R) individuals. Different models can be built out of those $S,I,R$ letters,  but the general mechanism stays more or less unchanged, as we have described in Sec. \ref{models}. In the most basic approach people assume individuals to 
be identical and homogeneously mixed (homogeneous 
assumption, Sec. \ref{homo}). In order to take into account heterogeneity of the system a kind of block approximation has been used \cite{PSV}, treating nodes with the same degree as statistically equivalent (see Sec. \ref{hetero}). This is not always enough, as some real networks manifest degree correlation, namely: the conditional probability, that two vertices of degree $k,\,k'$ are connected depends on both degrees $k,\,k'$ \cite{cor1}. The next step thus is to take into account correlation \cite{BPSV}. Finally, one can employ whole adjacency matrix describing the graph we analyze \cite{FullM,cor2,cor3} (Sec. \ref{correlated}). The validity of all these approaches is still under investigation, see e.g. \cite{Localization}. Note, however, that all those variations listed above work on equations describing relationships between probability vectors. In particular, for the last example, the system is being described by $i_j$ -- probability, that $j$-th node is infected. The problem is that there is not a 
single moment when a particular vertex is -- say -- 0.41 infected. A vertex can be either infected (1) or not (0). This problem has been already noticed by Petermann and De Los Rios \cite{Paulo}. On the top of that, it has also been realized (see e.g. \cite{finite}) that finite sizes of real network can have a strong impact onto epidemic threshold results, which are usually being computed in the large-n limit (so-called thermodynamical limit).

Here we introduce another model for epidemic spreading and analize it with completely different approach. Let us focus on sexual transmitted diseases. For this case the assumption that a particular node is able to contaminate more than one of its neighbors during a time step seems not to be the most suitable one. Bearing this idea in mind we develop a single infection epidemic spreading model. It is designed intentionally for finite networks only, no thermodynamical limit is regarded here, but the effect of the network size is under investigation instead.

\section{Model description}
\label{des}
Consider a connected, unweighted graph $G$ with $n$ vertices enumerated by $i=1,\,\dots,\,n$, and encoded by a stochastic transition matrix $P$. The model will be of the SIS kind: all the individuals are at the beginning considered as susceptible (S). After contamination they become infected (I) but they still have a chance to recover and be susceptible again.

We start with the all but one nodes susceptible. The one which is infected is chosen at random. At each time step we choose randomly, with identical probability $\frac{1}{n}$, a node $i$. Then we choose its neighbor according to the transition matrix $P$, i.e. there is $p_{ij}$ chance that we take vertex $j$. If one of these two individuals $i$ or $j$ is infected, it contaminates the second one with probability $z$. At the end of each time step we allow each infected node -- except of the one that has been contaminated during the present time step -- to recover with probability $r$. 

Each node of the contact graph $G$ can take any of 2 possible states (susceptible or infected), hence the state of the whole system at time $t$ is described by the set of infected nodes $I(t)$. Therefore, the cardinality of the state space equals $2^n$. It is visible, that in order to drive the system from a time step $t$ to $t+1$ we only need to know in what state it is at time $t$. We express it by the following theorem:

\begin{thm}
The stochastic process of the SIESM model is a homogeneous Markov chain. 
\end{thm}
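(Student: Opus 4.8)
The plan is to exhibit the SIESM dynamics in the standard random-mapping (stochastic recursive) form $X_{t+1}=\phi(X_t,U_{t+1})$, with $(U_t)_{t\ge1}$ independent and identically distributed and $\phi$ a fixed, time-independent function, and then invoke the elementary fact that any such recursion defines a homogeneous Markov chain, verifying directly the Markov property (\ref{eq:marpro}) and its $n$-independence.

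First I would fix the state space. A configuration is the set of currently infected vertices, so I take $E=2^{\{1,\dots,n\}}$ (equivalently $\{0,1\}^n$), with $|E|=2^n$, and let $X_t=I(t)\in E$ be the state after $t$ steps, $X_0$ being the prescribed initial distribution (one uniformly chosen infected vertex, all others susceptible). Next I would package the randomness used in a single step into one object $U_{t+1}=\big(i_{t+1},\,j_{t+1},\,B_{t+1},\,(R^{(v)}_{t+1})_{v=1}^{n}\big)$, where $i_{t+1}$ is uniform on $\{1,\dots,n\}$, $j_{t+1}$ is drawn from the row $p_{i_{t+1},\cdot}$ of $P$, $B_{t+1}\sim\mathrm{Bernoulli}(z)$ is the contamination coin, and the $R^{(v)}_{t+1}\sim\mathrm{Bernoulli}(r)$ are the recovery coins; all coordinates of $U_{t+1}$ are mutually independent, and the family $(U_t)_{t\ge1}$ is i.i.d.\ and independent of $X_0$. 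One builds an appropriate product probability space carrying all of these.

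Then I would write $\phi$ explicitly. Given the current infected set $A=X_t$ and the noise $U_{t+1}$, a contamination occurs if and only if exactly one of $i_{t+1},j_{t+1}$ lies in $A$ and $B_{t+1}=1$; in that case let $w$ be the newly infected endpoint and set $X_{t+1}=\big(A\cup\{w\}\big)\setminus\{v:\, v\in A,\ v\ne w,\ R^{(v)}_{t+1}=1\}$, with the obvious modification ($w$ undefined, no forced inclusion) when no contamination occurs. The one point that needs care is exactly the model's clause that the vertex contaminated in the present step is exempt from the recovery sweep: this makes the recovery part of $\phi$ depend on the outcome of the contamination part, which is resolved cleanly by ordering the sub-steps inside $\phi$ as ``first contaminate, then recover all other infected vertices.'' Crucially, $\phi$ does not depend on $t$.

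Finally, since $X_{t+1}=\phi(X_t,U_{t+1})$ and $U_{t+1}$ is independent of $(X_0,\dots,X_t)$, for any states $i_0,\dots,i_t,j$ we get $\mathbb{P}(X_{t+1}=j\mid X_t=i_t,\dots,X_0=i_0)=\mathbb{P}(\phi(i_t,U_{t+1})=j)$, which depends on $i_t$ alone; this is precisely (\ref{eq:marpro}), and because the law of $U_{t+1}$ and the map $\phi$ are the same for every $t$, the right-hand side is independent of $t$, i.e.\ the chain is homogeneous. If one wishes, the transition matrix entries $p_{AB}=\mathbb{P}(\phi(A,U)=B)$ can be recorded in closed form by summing over the choice of $(i,j)$ and the coin outcomes. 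I do not anticipate a real obstacle here; the only genuinely delicate point is the bookkeeping of the within-step dependence between contamination and recovery described above, and (should one insist on the explicit matrix) the combinatorial sum over which infected neighbours recover.
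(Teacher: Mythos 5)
Your proposal is correct, and it is in fact considerably more than the paper offers: the thesis states this theorem without a formal proof, justifying it only by the one-sentence observation that ``in order to drive the system from a time step $t$ to $t+1$ we only need to know in what state it is at time $t$.'' Your random-mapping representation $X_{t+1}=\phi(X_t,U_{t+1})$ with i.i.d.\ noise $U_{t+1}$ and a time-independent update map $\phi$ is the standard rigorous way to substantiate exactly that informal claim, and you handle correctly the one genuinely delicate point of the model --- that the vertex contaminated in the current step is exempt from the recovery sweep --- by ordering the sub-steps inside $\phi$ (contaminate first, then apply the recovery coins only to the previously infected vertices). Your contamination criterion (exactly one of $i_{t+1},j_{t+1}$ infected, and $B_{t+1}=1$) matches the model description, and since $w\notin A$ the exemption of $w$ from recovery is automatic in your formula. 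The conclusion that $\mathbb{P}(X_{t+1}=j\mid X_t=i_t,\dots,X_0=i_0)=\mathbb{P}(\phi(i_t,U_{t+1})=j)$ depends only on $i_t$ and not on $t$ gives both the Markov property and homogeneity at once. In short: same underlying idea as the paper's heuristic, but you supply the construction that makes it a proof.
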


Note, that this method restricts not only each infected node to contaminate at most one of its neighbor. In fact we restrict the whole dynamics to at most one contamination per time step. In this way we would like to examine the impact of the number of individuals in -- so to say -- ``overcrowded'' systems. One can imagine a real-world situation, where the access to some devices or services is strictly limited by a finite value, while the number of people is increasing (e.g. only one doctor in a small, but growing town). Here we focus on the following example. In large, academic cities there are often big flats situated in old tenement houses, inhabited by rather big amounts of students, who live with 3-4 roommates per chamber. As there is no space for privacy in this way of living, they sometimes devote one room in the flat to be a so-called {\sl sexroom}, so contamination by sexually transmitted diseases can take place at most once per time step (say: per night). It seems to be a good example of a system 
which can be described by our model. Note, that in a system such as this group of students in a tenement house, the role of the network size is crucial. 

\section{Model analysis}
\label{ana}
We enumerate the state space in such a way, that the absorbing state, which is the state when all the nodes are healthy, has index $0$. In order to mathematically describe the model we define $X_j(t)$ which takes the value $1$ if the node $j$ is being contaminated by one of its neighbors at the time step $t$, and $0$ otherwise. We will be interested in the expected value of $X_j(t)$ with a condition that the set of infected nodes consists of some particular vertices. It can be thought of as the effective probability of contamination during a time step. Note, that in contrast to contamination probability $z$, expected value of $X_j(t)$ varies with time.

There are two independent ways of contaminating $j$-th node during one time step. Either we choose $j$-th node (with probability $\frac{1}{n}$) and then one of its infected neighbor (with probability $\sum_{k\in I(t)}p_{jk}$) or we choose $j$-th node's neighbor (with probability $\frac{1}{n}$ for each one) and then we pick $j$-th node (it happens with probability $p_{kj}$ for a particular node $k$, so $\sum_{k\in I(t)}p_{kj}$ for all of them). Summing up we obtain:
\begin{equation}
  \mathbb{E}(X_j|I(t),\tau>t)=\frac{z}{n}\Big(\sum\limits_{k\in I(t)}p_{jk}+\sum\limits_{k\in I(t)}p_{kj}\Big),
\label{eq:exp}
\end{equation}
where both terms are multiplied by contamination probability $z$ and $\tau$ is defined as $\tau=\inf\{t\geq0: |I(t)|=0\}$, same as in Eq. \ref{eq:cond-distr}. Note the way we use the condition that $\tau>t$: as we take for granted that we have not met yet the absorbing state, the two ``mechanisms'' of contamination actually work. Otherwise, if $t$ was greater or equal $\tau$, we would put $0$ on the right hand side of Eq. (\ref{eq:exp}).

We are, however, interested in the behavior of the whole system, not one node only. Let us define $D\big(t\big|I(t)\big)=\mathbb{E}\Big(|I(t+1)|-|I(t)|\,\Big|\,I(t),\tau>t\Big)$. Due to additivity of expected value we can write:
\begin{equation}
D\big(t\big|I(t)\big)=\sum\limits_{j\notin I(t)}\mathbb{E}(X_j|I(t),\tau>t)-r|I(t)|,
\end{equation}
where, apart from adding all the $\mathbb{E}(X_j|I(t),\tau>t)$ terms, we substract the term responsible for healing: number of infected nodes multiplied by recovery probability $r$. Moreover, we define $D(t)=\mathbb{E}\Big(|I(t+1)|-|I(t)|\,\Big|\tau>t\Big)$ -- expected value of change of the number of infected nodes. We have then $D(t)=\sum_{I(t)}D\big(t\big|I(t)\big)\mathbb{P}\big(I(t)\big)$. Using Eq. (\ref{eq:exp}) we immediately conclude:
\begin{equation}
D\big(t\big|I(t)\big)=\frac{z}{n}\Big(\sum\limits_{k\in I(t),\,j\notin I(t)}p_{jk}+\sum\limits_{k\in I(t),\,j\notin I(t)}p_{kj}\Big)-r|I(t)|.
\label{eq:d}  
\end{equation}
The equation above defines the dynamics of the process. Unfortunately, sums $\sum_{k\in I(t),\,j\notin I(t)}p_{jk}$, $\sum_{k\in I(t),\,j\notin I(t)}p_{kj}$ are not known in general, as they strongly depend on the shape of the set $I(t)$. We will show, however, that we are able to derive exact result for epidemic threshold for any graph and quasi-stationary state for some special cases.

\subsection{Epidemic threshold}
Our first aim is to find out the epidemic threshold for the process described above. We are interested in a relation of model parameters $n,\,z,\,r$ that defines a border between two situations: dropping and rising of the number of infected nodes in the beginning of the process. However, as we noted in Sec. \ref{models}, epidemic threshold is defined in the limit of large size of a network. Here, as we focus on finite networks only and the impact of their size onto the process, we need a different definition of epidemic threshold. In order to construct it, we  follow the method of computing thresholds for infinite networks. In particular, the Eq. (\ref{eq:sis_tr}) we obtained using assumption of small number of infected nodes, which is true in the vicinity of $t=0$. The smallest finite number of infected nodes we can take is 1, so we define epidemic threshold as follows:
\begin{df}
Epidemic threshold for the finite SIESM model is the relation between parameters $z,r$ and network size $n$ such that $D(0)=0$. 
\end{df}

\begin{thm}
Epidemic thresholf for the finite SIESM model is given by:
$$\frac{z}{r}=\frac{n}{2}.$$
\label{thm:trsh}
\end{thm}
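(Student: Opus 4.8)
The plan is to specialize the general drift identity~(\ref{eq:d}) to the initial configuration and then average over the uniformly random choice of ``Patient Zero''. At $t=0$ exactly one node is infected, say node $i$, so $|I(0)|=1$ and $\tau=\inf\{t\ge0:|I(t)|=0\}\ge1>0$; the conditioning on $\{\tau>0\}$ hidden in the definition of $D(0)$ is therefore automatic. Substituting $I(0)=\{i\}$ into~(\ref{eq:d}) I would first record
\begin{equation}\nonumber
D\big(0\,\big|\,I(0)=\{i\}\big)=\frac{z}{n}\Big(\sum_{j\ne i}p_{ij}+\sum_{j\ne i}p_{ji}\Big)-r ,
\end{equation}
since only the single infected node $i$ contributes to the two double sums and the recovery term is $-r\cdot 1$.

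Next I would evaluate the two sums separately. The outgoing sum is $\sum_{j\ne i}p_{ij}=\sum_j p_{ij}=1$ for every $i$, because $P$ is stochastic and $G$, having no loops, satisfies $p_{ii}=0$. The incoming sum $\sum_{j\ne i}p_{ji}$ is genuinely graph-dependent and has no closed form node by node, but this is harmless: averaging over $i$ uniform on $\{1,\dots,n\}$ and interchanging the order of summation gives $\sum_{i}\sum_{j\ne i}p_{ji}=\sum_{j}\sum_{i\ne j}p_{ji}=\sum_j 1=n$, again by stochasticity of $P$ and $p_{jj}=0$. Hence
\begin{equation}\nonumber
D(0)=\frac1n\sum_{i=1}^n D\big(0\,\big|\,I(0)=\{i\}\big)=\frac1n\Big(\frac{z}{n}(n+n)-rn\Big)=\frac{2z}{n}-r ,
\end{equation}
and imposing the threshold condition $D(0)=0$ yields exactly $z/r=n/2$.

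I do not expect a real analytic obstacle. The one subtlety worth stating explicitly is the distinction between ``row'' sums of $P$ (always $1$ by stochasticity) and ``column'' sums (graph-dependent, but adding up to $n$ over all nodes), together with the observation that a loopless graph has $p_{ii}=0$, so that $\sum_{j\ne i}$ and $\sum_j$ coincide. Note also that the genuine difficulty flagged just after~(\ref{eq:d}) — that $\sum_{k\in I(t),\,j\notin I(t)}p_{jk}$ depends on the shape of $I(t)$ — simply does not arise when $|I(t)|=1$, which is precisely why the threshold can be pinned down exactly for an arbitrary graph even though the later quasi-stationary analysis has to be restricted to special cases.
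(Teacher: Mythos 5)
Your proposal is correct and follows essentially the same route as the paper's own proof: specialize the drift formula~(\ref{eq:d}) to a single infected node, use stochasticity of $P$ (and $p_{ii}=0$) to evaluate the outgoing sum as $1$, and average over the uniformly random initial node so that the column sums of $P$ total $n$, giving $D(0)=\frac{2z}{n}-r$ and hence $z/r=n/2$. Your explicit remark distinguishing row sums from column sums is a worthwhile clarification of a step the paper passes over quickly.
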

\begin{proof}
By definition of the model $|I(0)|=1$. Denote the only infected node by $l$, we have $\mathbb{P}(I(0)=\{l\})=\frac{1}{n}$. Then also $\sum_{k\in I(0),\,j\notin I(0)}p_{kj}=\sum_{j\neq l}p_{lj}=1$ (as $P$ -- stochastic) and $\sum_{k\in I(0),\,j\notin I(0)}p_{jk}=\sum_{j\neq l}p_{jl}$, hence:
$$ D(0|\{l\})=\frac{z}{n}\Big(1+ \sum_{j\neq l}p_{jl}\Big) -r.$$
Futhermore, using the fact that for all $j$ there is $p_{jj}=0$:
$$ D(0)=\frac{1}{n}\sum_{l=1}^n\frac{z}{n}\Big(1+ \sum_{j=1}^n p_{jl}\Big) -r=\frac{z}{r}\Big(1+ \frac{1}{n}\sum_{j=1}^n\sum_{l=1}^n p_{jl}\Big)-r=2\frac{z}{n}-r.$$
The condition $D(0)=0$ indicating the epidemic outbreak completes the proof.
\end{proof}

\subsection{Quasi-stationary distribution}
\label{amb-qs}
Let us now turn to the quasi-stationary distribution problem. In our model there exists a unique absorbing state: the state, when all the nodes are healthy -- there is no way out of this situation, as there is no one who could contaminate others. Moreover, the set of all the states of this Markov chain without the absorbing state, is irreducible. Indeed, as we assume the network of acquaintances to be connected, the epidemy is able to reach all the vertices. There is thus a finite probability to achieve in some time the state, when all the vertices are infected. From this state the system is able to go directly to any other state, because in each time step we recover each individual \emph{independently} with probability $r$. It is thus possible to obtain any configuration of the infected nodes set $I(t)$.  Therefore, by Theorem \ref{thm:q-ergodic}, above the epidemic threshold we anticipate our system to stay at some non-zero state for a long time. 
\begin{thm}
There exists a unique quasi-stationary distribution for the SIESM model.
\end{thm}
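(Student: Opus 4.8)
\emph{Proof plan.} The strategy is to check that the SIESM chain is precisely an instance of the template of Theorem~\ref{thm:q-ergodic}: a homogeneous, aperiodic Markov chain on a state space $E=\{a\}\cup T$ with $a$ absorbing, $T$ irreducible, and one-step absorption probability vector $\mathbf{p}\neq 0$; once these hypotheses are in place, Theorem~\ref{thm:q-ergodic} delivers the unique quasi-stationary distribution $\bm{d}$ (together with the large-time convergence of the conditioned distributions to it) for free. We already know the process is a homogeneous Markov chain, so what is left is to identify the block form (\ref{eq:block}) and to establish irreducibility of $T$ and aperiodicity. Throughout only $G$ connected, $n<\infty$, $0<r<1$ and $0<z\le 1$ are used.

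\textbf{Block structure and $\mathbf{p}\neq 0$.} Put the all-healthy configuration at index $0$: it is absorbing, since with no infected node no contamination can occur. It is the only closed communicating class, because (see below) $T$ is irreducible and not closed, so no proper subset of $T$ can be closed and no state of $T$ is absorbing. For $\mathbf{p}\neq 0$ it suffices to exhibit one transient state that can be absorbed in one step: from the all-infected configuration $V$, in a single time step no contamination changes the infected set (there is no susceptible node to contaminate), and then every infected node recovers independently with probability $r$; hence $V\to 0$ with probability $r^{\,n}>0$. Thus $\mathbf{p}\neq 0$ and, by the standard decomposition of absorbing chains (cf.\ \cite{Gantmacher}), the restricted matrix $Q$ is non-negative and substochastic.

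\textbf{Irreducibility of $T$ and aperiodicity.} I would show that from every non-empty $I$ one can reach $V$, and from $V$ one can reach every non-empty $I'$. For the first claim: as long as $I\neq V$, connectedness of $G$ gives a susceptible node $s$ adjacent to some infected node $m$; with positive probability the step picks the ordered pair $(m,s)$ (probability $\tfrac1n p_{ms}>0$), the contamination succeeds (probability $z$), and no currently infected node recovers (probability $(1-r)^{|I|}>0$), so the infected set strictly grows; iterating at most $n-1$ times reaches $V$. For the second claim: from $V$ no contamination occurs and the recovery phase turns the infected set into an arbitrary $I'\subseteq V$ with probability $(1-r)^{|I'|}r^{\,n-|I'|}$, which is positive whenever $I'\neq\emptyset$. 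Hence $T$ is irreducible. Moreover $p_{VV}=(1-r)^{\,n}>0$, so $V$ has period $1$, and since $T$ is irreducible every state of $T$ has period $1$; together with $p_{00}=1$ this makes the whole chain aperiodic (so $Q$ is aperiodic in the sense of the definition preceding Theorem~\ref{thm:pf}).

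\textbf{Conclusion and main obstacle.} With $E=\{0\}\cup T$, $T$ irreducible, the chain aperiodic, and $\mathbf{p}\neq 0$, Theorem~\ref{thm:q-ergodic} applies verbatim and yields a unique quasi-stationary distribution $\bm{d}$ for the SIESM model. The only genuinely non-routine point is the irreducibility of $T$: one must argue that the epidemic can be driven monotonically up to the all-infected state while simultaneously suppressing every recovery along the way, which is exactly where connectedness of the contact graph is used; the remaining verifications are bookkeeping with the block form (\ref{eq:block}) and a reference to Theorem~\ref{thm:q-ergodic}.
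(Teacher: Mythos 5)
Your proof is correct and follows essentially the same route as the paper: identify the all-healthy configuration as the unique absorbing state, establish irreducibility of the remaining states by driving the epidemic up to the all-infected configuration (using connectedness and suppressing recoveries) and then down to any configuration via independent recovery, and invoke Theorem~\ref{thm:q-ergodic}. You are in fact more careful than the paper, which leaves the aperiodicity and $\mathbf{p}\neq 0$ checks implicit.
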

The variable of interest in epidemic spreding is the density of infected nodes in this quasi-stationary state. Due to the ergodic theorem for absorbing Markov chains (see discusion in Sec. \ref{quasi}), it actually can be our indicator of achieving the stationarity: the level of infected nodes staying -- on average -- unchanged. In practice we also take into account, that due to statictical fluctations in finite real or simulational systems, the epidemy may die out even above the threshold.    

The quasi-stationary fraction of infected nodes in general case (not specifying any particular shape of the graph) is not as easy to find as the threshold calculated in the previous section. What we basically have to do is to use once again all the formalism presented above and find the solution for the equation $D(t)=0$ without the constraint $|I(t)|=1$. The problem is to compute the sum $\sum_{k\in I(t),\,j\notin I(t)}p_{jk}$, which depends strongly on the shape of the set of infected nodes $I(t)$. We will thus estimate only the quasi-stationary distribution condition for general case. However we provide also exact solutions for special cases of complete graph and uncorrelated homogeneous graph.

In order to perform estimation of the quasi-stationary state, we use the notion of graph conductance. Let $I_s$ be the expected number of infected nodes in the quasi-stationary state. According to the Def. \ref{df:con} we need to analyze separately cases with the quasi-stationary fraction of infected nodes $i_s=\frac{|I_s|}{n}$ smaller and greater than $\frac{1}{2}$. Consider first $i_s\geq\frac{1}{2}$. Then also $|I_s|\geq n-|I_n|$ and, using Eq. (\ref{eq:d}), we lowerbound $D\big(t\big|I(t)\big)$:
\begin{equation}
D\big(t\big|I(t)\big)\geq \frac{2z}{n} \Phi(P)(n-|I(t)|)-r|I(t)|.
\label{eq:gend}
\end{equation}
Bounding the latter expression in Eq. (\ref{eq:gend}) from zero we find that $D\big(t\big|I(t)\big)$ is positive for $\frac{1}{2}\leq i\leq\Big(1+\frac{rn}{2z\Phi(P)}\Big)^{-1}$, therefore the quasi-stationary fraction $i_s$ must be higher than this:
\begin{equation}
i_s\geq\Big(1+\frac{rn}{2z\Phi(P)}\Big)^{-1}.
\end{equation}
Let us now focus on the opposite case, namely $i_s\leq\frac{1}{2}$, $|I_s|\leq n-|I_s|$.
We again lowerbound $D\big(t\big|I(t)\big)$ using Eq. (\ref{eq:d}):
\begin{eqnarray}
D\big(t\big|I(t)\big)\geq \frac{2z}{n} \Phi(P)|I(t)|-r|I(t)|\geq\frac{2z}{n} \Phi(P)|I(t)|-r(n-|I(t)|).
\label{eq:dfi}
\end{eqnarray}
Bounding right hand side of Eq. (\ref{eq:dfi}) from zero, we conclude analogically to the situation above:
\begin{equation}
i_s\leq\Big(1+\frac{2z\Phi(P)}{rn}\Big)^{-1}.
\end{equation}
This result, however mathematically correct, appears to be quite useless: the value of $\Phi(P)$ is usually much lower than the sums that it approximates ($\sum_{k\in I(t),\,j\notin I(t)}p_{jk}$, $\sum_{k\in I(t),\,j\notin I(t)}p_{kj}$) during the process. Let us thus work out exact results for some special cases.

\subsection{Special cases}
\subsubsection{Complete graph}
\label{cgsec}
For complete graphs we are able to find the exact solution of quasi-stationary state problem. Note, that for this special case:
\begin{equation}
\sum\limits_{k\in I(t),\,j\notin I(t)}p_{jk}=\sum\limits_{k\in I(t),\,j\notin I(t)}p_{kj}=\frac{|I(t)|(n-|I(t)|)}{n-1},
\end{equation}
as each of $|I(t)|$ infected nodes is linked to each of $(n-|I(t)|)$ susceptible nodes by an edge chosen with probability $\frac{1}{n-1}$ as each node has $(n-1)$ neighbors. We can thus find explicit and exact condition for $D\big(t\big|I(t)\big)=0$. From Eq. (\ref{eq:d}) we get:
\begin{equation}
i(t)_{s}=1-\frac{r(n-1)}{2z}.  
\label{eq:fg-stat}
\end{equation}

\subsubsection{Uncorrelated homogeneous graph}
\label{largen}
Let us consider now uncorrelated homogenous graph (see Sec. \ref{homo} and \ref{hetero}). Specifically for the present setup, homogeneity means that average number of links between sets of some fixed sizes depends only on these sizes.
\newline\indent
Bearing these assumptions in mind, let us compute expected values of the two sums from Eq. (\ref{eq:d}):
\begin{equation}
\mathbb{E}(\sum_{j\in I(t),\,l\notin I(t)}p_{lj})=\frac{\mathbb{E}(k)}{n-1}\sum_{j\in I(t),\,l\notin I(t)}\mathbb{E}(\frac{1}{k}|k\geq 1) 
 = \frac{\mathbb{E}(k)}{n-1}\mathbb{E}(\frac{1}{k}|k\geq 1)|I(t)|(n-|I(t)|), 
\label{eq:unh}
\end{equation}
where we put $\mathbb{E}(k)/(n-1)$ for the expected value of existence of a link between two vertices. We substract 1 from $n$ as a node cannot be connected with itself. We employ Eq. (\ref{eq:unh}) in the condition $D(t)=0$ and get the quasi-stationary infected nodes density:
\begin{equation}
 i_s=1-\frac{r(n-1)}{2z\langle \frac{1}{k} \rangle\langle k \rangle},
\label{eq:gnp-stat}
\end{equation}
where we denote $\langle k \rangle=\mathbb{E}(k)$ and $\langle 1/k \rangle=\mathbb{E}(1/k)$. Specifically, for $G(n,p)$ random graph (with the well-known binomial degree distribution) the product of $\langle \frac{1}{k} \rangle\langle k \rangle$ goes to 1. In this case the latter result (\ref{eq:gnp-stat}) recovers the solution for complete graphs (\ref{eq:fg-stat}). Moreover, $G(n,p)$ graphs are indeed uncorrelated in the limit of large $n$ \cite{gnp-nocor}, so we expect $G(n,p)$ behaving like complete graphs for large $n$.

\subsection{Mixing time}
\label{ms}
In this chapter we are interested in mixing time for the SIESM model, i.e. the time needed by the process to reach the quasi-stationary distribution. As we have already noted above, we are in fact speaking about the indicator of quasi-stationarity -- the density of infected neighbors. 

In order to gain some intuition about what quasi-stationary state and mixing time are, see Fig. \ref{fig:singlerun}. We can distinguish there two regimes of completely different character: the regime of rapid increase in the number of infected nodes and the regime of stabilization. 

\begin{center}
\begin{figure}[!h]
  \includegraphics[height=6.5cm]{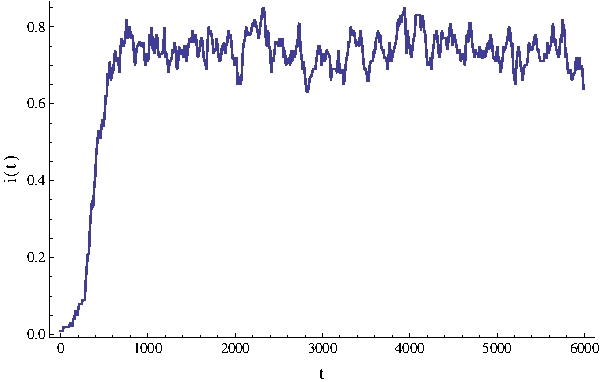}
  \caption{An example of a single run for epidemic spreading defined by the SIESM model, with parameters $z$=1, $r$=0.005, random graph $G(n,p)$ of the size $n$=100 and $p$=0.5. Density of infected nodes increases rapidly in the initial stage, but then, after time $t\approx 600$ (this can be thought of as the mixing time) starts oscillations about some fixed value. }
\label{fig:singlerun}
\end{figure}
\end{center}

Below we prove a general theorem restricting mixing time for any graph. The proof is inspired by related considerations for gossip spreading done by Shah \cite{shah}. The most significant complication which arises here in $SIS$-kind epidemic spreading model is recovery. In the gossip spreading problem we just have subgraph of nodes, that know about the gossip. In case of epidemy with recovering some nodes are being randomly excluded from the subgraph (by recovery), in particular this subgraph can fall apart into smaller compartments.

\begin{thm}
The mixing time $T$ for the SIESM model is logarithmic with the size $n$, i.e.:
$$ T(\epsilon)=O\Big(\log n + \log \epsilon^{-1}\Big). $$
\label{thm:1}
\end{thm}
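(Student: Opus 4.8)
I would adapt Shah's treatment of gossip spreading \cite{shah} to the present SIS dynamics, isolating the effect of recovery, and split the bound into two stages. The first controls the \emph{number} $|I(t)|$ of infected nodes, showing it reaches its quasi-stationary value $|I_s|$ up to a small error within $O\big(D(0)^{-1}(\log n+\log\epsilon^{-1})\big)$ steps, where $D(0)=\frac{2z}{n}-r$ is the distance from the epidemic threshold obtained in the proof of Theorem~\ref{thm:trsh}. The second shows that, once the count is right, the \emph{identity} of the infected subset relaxes to the quasi-stationary conditional distribution $\bm{d}$ within a comparable number of steps. Concatenating the two stages, and using sub-multiplicativity of the total-variation distance under the conditioned one-step operator (the substochastic block of the SIESM transition matrix, cf.~(\ref{eq:block})), yields $T(\epsilon)=O\big(D(0)^{-1}(\log n+\log\epsilon^{-1})\big)$ in the sense of Definition~\ref{df:mix}, with $\pi$ replaced by $\bm{d}$ and all probabilities conditioned on $\tau>t$; since $D(0)^{-1}$ is a finite constant this is logarithmic in $n$, and, as announced in the abstract, inversely proportional to the distance from threshold.

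\textbf{Stage 1: drift of the infected count.} From Eq.~(\ref{eq:d}) one first reads off the sign relation $\mathrm{sgn}\,D\big(t\mid I(t)\big)=\mathrm{sgn}\big(|I_s|-|I(t)|\big)$, which must then be made quantitative. When $|I(t)|$ is far below $|I_s|$, the computation behind Eqs.~(\ref{eq:fg-stat})--(\ref{eq:gnp-stat}) (for a general connected $G$, the conductance lower bounds (\ref{eq:gend})--(\ref{eq:dfi}) together with a complementary upper bound on $D$ above $|I_s|$) gives $D\big(t\mid I(t)\big)\ge c\,|I(t)|$ with a constant $c$ of order $D(0)$ — of order $\frac{2z}{n}\Phi(P)-r$ in the general case, sharp for the complete graph and for uncorrelated homogeneous graphs but potentially lossy otherwise. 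Hence $\mathbb{E}\big(|I(t+1)|\mid I(t),\tau>t\big)\ge(1+c)|I(t)|$ and $|I(t)|$ grows geometrically, reaching the vicinity of $|I_s|$ in $O(c^{-1}\log n)$ steps; the ``at most one contamination per step'' ceiling becomes binding only once $|I(t)|$ is already a constant fraction of $|I_s|$, and since $c^{-1}\ge n/2$ the remaining linear stretch of length $O(n)$ is absorbed into $O(c^{-1}\log n)$. Near $|I_s|$, linearising $D$ yields the opposite one-step inequality $\mathbb{E}\big(\big||I(t+1)|-|I_s|\big|\mid I(t),\tau>t\big)\le(1-c)\big||I(t)|-|I_s|\big|$, so the displacement contracts geometrically; iterating gives $\mathbb{E}\big(\big||I(t)|-|I_s|\big|\mid\tau>t\big)\le(1-c)^{t}n$, and Markov's inequality (Theorem~\ref{thm:me}) then shows that after $t_1=O\big(c^{-1}(\log n+\log\epsilon^{-1})\big)$ steps $|I(t_1)|$ lies within an $O(\sqrt n)$ window of $|I_s|$ with conditional probability at least $1-\frac{\epsilon}{4}$. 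Casting these two one-step inequalities as a supermartingale/submartingale pair (Definition~\ref{df:sup}) is the clean way to make the iteration rigorous.

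\textbf{Stage 2: relaxation of the infected set given its size.} Conditioned on $|I(t)|$ lying in that narrow band around $|I_s|$, it remains to show the subset itself forgets its initial value. Here recovery, the source of difficulty in Stage~1, works in our favour: at the end of each step every infected node is deleted independently with probability $r$ and can re-enter only through a nearest-neighbour contamination, so the conditioned chain restricted to the band is irreducible with conductance bounded below in terms of $\Phi(P)$ (Definition~\ref{df:con}) and $r$. Inserting this into the Perron--Frobenius expansion of Theorem~\ref{thm:pf} — equivalently, into the estimate (\ref{eq:qn}) from the proof of Theorem~\ref{thm:q-ergodic} — gives geometric convergence of the conditional law to $\bm{d}$, with spectral gap again of order $c$ and a prefactor at most polynomial in $n$, so that this stage contributes only an additive $O\big(c^{-1}(\log n+\log\epsilon^{-1})\big)$. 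For the permutation-symmetric special cases (complete graph, uncorrelated homogeneous graph) the stage is vacuous: there $\bm{d}$ is invariant under relabelling of vertices, hence determined by the law of $|I(t)|$ alone, and Stage~1 already finishes the proof.

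\textbf{Where the difficulty lies.} The delicate point — and precisely where recovery bites, as flagged before the statement — is Stage~2 for an arbitrary connected $G$: since recovered nodes drop out, the infected set is not monotone and may fragment into disconnected pieces, so the monotonicity that makes gossip-spreading arguments transparent is unavailable. Establishing a bound on the relaxation of the \emph{shape} of the infected set that is uniform over all graphs $G$, with an honest rate rather than a vacuous conductance-only one, together with the complementary drift upper bound above $|I_s|$ needed in Stage~1, is the real content; the rest reduces to direct estimates on Eq.~(\ref{eq:d}) and to the quasi-stationary convergence machinery already established in Chapter~\ref{markov}.
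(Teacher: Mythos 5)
Your Stage 1 is, in substance, the paper's entire proof: the author also starts from the conductance lower bounds (\ref{eq:gend})--(\ref{eq:dfi}), turns the one-step drift into a multiplicative inequality for $1/|I(t)|$ via convexity of $x\mapsto 1/x$, shows that $\zeta(t)=e^{at}/|I(t)|$ with $a=d^{-2}\big(\tfrac{2z}{n}\Phi(P)-r\big)$ is a supermartingale up to the hitting time $\Lambda=\inf\{t:|I(t)|>n/2\}$, deduces $\mathbb{E}(e^{\Lambda a})\le n/2$, and applies Markov's inequality (Theorem~\ref{thm:me}) to get $\mathbb{P}(\Lambda>t_1)\le\epsilon/2$ for $t_1=a^{-1}(\ln n-\ln\epsilon)$; the stretch from $n/2$ to $I_s$ is handled identically with a modified constant $b$. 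So on the growth phase you and the paper coincide, down to the supermartingale you propose as ``the clean way to make the iteration rigorous.''

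The divergence is that the paper stops there: its ``mixing time'' is only the hitting time of the level $I_s$ by the \emph{count} $|I(t)|$, with the density of infected nodes serving as the declared indicator of quasi-stationarity (see the opening of Sec.~\ref{ms}). Neither your contraction step near $|I_s|$ nor your Stage 2 appears in the paper. Two caveats about those additions. First, the one-step contraction $\mathbb{E}\big(\,\big||I(t+1)|-|I_s|\big|\,\big)\le(1-c)\big||I(t)|-|I_s|\big|$ cannot hold uniformly: iterated, it would force $|I(t)|\to|I_s|$ in $L^1$, contradicting the nondegenerate fluctuations of the quasi-stationary law; it can only be valid above the noise floor of order $\sqrt{rn}$, which is consistent with your $O(\sqrt n)$ window but needs to be said. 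Also, the sign relation $\mathrm{sgn}\,D(t\mid I(t))=\mathrm{sgn}(|I_s|-|I(t)|)$ is not available in general, since $D(t\mid I(t))$ depends on the shape of $I(t)$, not only on its cardinality. Second, and more seriously, Stage 2 is where your argument has a genuine hole: the claim that the conditioned chain restricted to the band has spectral gap of order $c$ with a prefactor at most polynomial in $n$ is asserted, not derived. The state space has $2^n$ elements, so the Perron--Frobenius expansion (\ref{eq:qn}) gives, a priori, a prefactor exponential in $n$ unless one proves a conductance or coupling bound for the set-valued chain itself, which is exactly the non-monotone, fragmentation-prone object you correctly identify as the hard part. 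Measured against what the paper actually proves, your proposal is correct and essentially the same; measured against the total-variation statement of Definition~\ref{df:mix} that you set out to prove, Stage 2 remains an unproven sketch.
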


\begin{proof}
Without loss of generality we assume that $I_s\geq \frac{n}{2}$. We divide the proof into two parts, considering separately two stages of the process evolution: $|I(t)|\leq\frac{n}{2}$ and $|I(t)|\geq\frac{n}{2}$. Let us first consider $|I(t)|\leq\frac{n}{2}$.
We recall first the result stated in Eq. (\ref{eq:dfi}):  
$$D\big(t\big|I(t)\big)\geq \frac{2z}{n} \Phi(P)|I(t)|-r|I(t)|.$$
Denote now by $\Lambda$ the smallest time $t$ such that the number of infected nodes exceeds $\frac{n}{2}$:
$$\Lambda=\inf\{t:|I(t)|>\frac{n}{2}\}.$$
Let us also denote $\Lambda \land t = \min(\Lambda,t).$ Note, that as long as $|I(t)|\leq\frac{n}{2}$, we have $\Lambda \land (t+1)=\Lambda \land t+1$. Recall now the general feature for any smooth convex function $g$ -- for any $x_1,\,x_2\in\mathbb{R}$ we have:
\begin{equation}
 g(x_1)\leq g(x_2)+g'(x_1)(x_1-x_2).
\end{equation}
Let us take: $g(x)=\frac{1}{x}$, $x_1=|I(t+1)|$ and $x_2=|I(t)|$, then:
\begin{equation}
\frac{1}{|I(t+1)|}\leq\frac{1}{|I(t)|}-\frac{1}{|I(t+1)|^2}\Big(|I(t+1)|-|I(t)|\Big).
\label{eq:appl-conv}
\end{equation}
By construction of the process we have:
$$|I(t+1)|\leq|I(t)|+1=d |I(t)|,$$
where $1\leq d\leq 2$. We define as usual $\tau=\inf\{t\geq0: |I(t)|=0\}$ and continue with Eq. (\ref{eq:appl-conv}):
\begin{eqnarray}\label{eq:est}
&\mathbb{E}\Big( \frac{1}{|I(t+1)|}\Big| I(t), \tau>t\Big) \leq  
\mathbb{E}\Big(\frac{1}{|I(t)|} - \frac{1}{d^2|I(t)|^2}\Big(|I(t+1)|-|I(t)|\Big)\Big| I(t), \tau>t\Big)\leq&  \\ \nonumber  
&\frac{1}{|I(t)|}-\frac{1}{d^2|I(t)|^2}\Big( \frac{2z}{n} \Phi(P)|I(t)|-r|I(t)| \Big)\leq 
 \frac{1}{|I(t)|}\Big(1- (\frac{2z}{n} \Phi(P)-r)d^{-2} \Big) \leq  \frac{1}{|I(t)|} \exp(-\frac{1}{d^2}(\frac{2z}{n} \Phi(P)-r)),& 
\end{eqnarray}
where in the second inequality we used Eq. (\ref{eq:dfi}) and the definition of $D(t)$, Eq. (\ref{eq:d}). Note, that also after second inequality we made implicit the condition that $|I(t)|$ is non-zero, due to $\tau>t$. In the last line we used the fact that $1-x\leq \exp(-z)$. Let us now define:
\begin{equation}\label{eq:where}
\zeta(t)=\frac{\exp(at)}{|I(t)|},\,\,
\textrm{where}\,\,\,a=\frac{1}{d^2}(\frac{2z}{n} \Phi(P)-r).
\end{equation}
We show that $\zeta(t)$ is a supermartingale with respect to $\{I(t)\}_{t\geq 0}$ (see Def. \ref{df:sup}). As the only component of $\zeta(t)$ which is a random variable is $I(t)$ and as the process we analyze is Markovian and as $\Lambda \land (t+1)=\Lambda \land t+1$, it is enough to show that $\mathbb{E}(\zeta(\Lambda \land (t+1)) | I(\Lambda \land t), \tau>t )\leq \zeta(\Lambda \land t)$. We do it using Eq. (\ref{eq:est}):
\begin{eqnarray}
&\mathbb{E}\Big(\zeta(\Lambda \land (t+1)) \Big| I(\Lambda \land t), \tau>t \Big)= 
\exp\Big((\Lambda \land t) a\Big) \exp(a)\mathbb{E}\Big(\frac{1}{|I(\Lambda \land t+1)|}\Big|I(\Lambda \land t), \tau>t\Big)\leq&\\ \nonumber
& \leq\exp\Big((\Lambda \land t) a\Big) \exp(a)\frac{1}{|I(\Lambda \land t)|} \exp(-a)= \zeta(\Lambda \land t).&
\label{eq:gnp-pop}
\end{eqnarray}
As $\zeta(t)$ is a supermartingale we conclude that $\mathbb{E}(\zeta(\Lambda \land t))\leq\mathbb{E}(\zeta(\Lambda \land 0))=1$.
Furthermore, as we restrict ourselves to $|I(t)|\leq\frac{n}{2}$:
\begin{equation}
\zeta(\Lambda\land t)\geq \frac{2}{n} \exp\Big((\Lambda\land t)a\Big),  
\label{eq:z}
\end{equation}
and directrl from the above we conclude that:
\begin{equation}
\mathbb{E}\Big(\exp\big((\Lambda\land t)a\big) \Big)\leq \frac{n}{2}\mathbb{E}\Big(\zeta(\Lambda\land t)\Big)\leq\frac{n}{2},  
\end{equation}
where in the last step we used the supermartingale property. Moreover, as $\exp((\Lambda\land t)a)$ converges monotonically to
$\exp(\Lambda a)$ as $t\to \infty$, we have also:
\begin{equation}
\mathbb{E}\Big( \exp(\Lambda a) \Big)\leq \frac{n}{2}.  
\end{equation}
Let us recall the Markov's inequality (see Theorem \ref{thm:me}):
\begin{equation}
  \mathbb{P}(|X|\geq c )\leq \frac{\mathbb{E}(|X|)}{c} 
\end{equation}
and choose $t_1=\frac{1}{a}(\ln(n)-\ln(\epsilon))$. Then we finally get:
\begin{equation}
\mathbb{P}(\Lambda>t_1)=\mathbb{P}\Big(\exp(\lambda a)>\frac{n}{\epsilon}\Big)\leq \frac{ \mathbb{E}({\exp(\lambda a)})}{\frac{n}{\epsilon}}\leq\frac{\epsilon}{2}.  
\end{equation}
\newline
Let us now consider $|I(t)|\geq\frac{n}{2}$. For this case we perform exactly the same procedure, but starting from Eq. (\ref{eq:gend}) instead of Eq. (\ref{eq:dfi}), which we started with in the previous case. We also redefine $\Lambda=\inf\{t:|I(t)|>I_s\}$. Following the same steps as above we only change constant $a$ in Eq. (\ref{eq:where}) into $b=\frac{1}{d^2}(\frac{2z}{I_s} \Phi(P)-\frac{2z}{n} \Phi(P)-r)$. Second thing that has to be changed is Eq. (\ref{eq:z}) where, instead of $\frac{n}{2}$ we can put $n$. Resulting time for this stage is:
\begin{equation}
\mathbb{P}(\Lambda>t_2)\leq \frac{ \mathbb{E}({\exp(\lambda b)})}{\frac{n}{\epsilon}}\leq\epsilon,\,\,  
 \textrm{where}\,\,\,t_2=\frac{1}{b}(\ln(n)-\ln(\epsilon)).
\end{equation}

\end{proof}

From this general theorem we conclude, that the closer we are with chosen parameters to the zero-stationary state (i.e. the smaller is the quasi-stationary density of infected nodes), the slower is the first phase of rapid increase:

\begin{remark}
Mixing time is linear with inverse of the distance $\eta$ from the epidemic threshold, i.e.:
$$ T(\eta,\epsilon)=O\Big(\frac{1}{\eta}(\log n + \log \epsilon^{-1})\Big). $$
\label{cor:1}
\end{remark}

\begin{proof}
Recall Eq. (\ref{eq:d}): we demand $D\big(t\big|I(t)\big)\geq0$ and transform this condition to:
\begin{equation}
  \frac{z}{nr}\geq\frac{|I(t)|}{\Big(\sum\limits_{k\in I(t),\,j\notin I(t)}p_{jk}+\sum\limits_{k\in I(t),\,j\notin I(t)}p_{kj}\Big)},
\end{equation}
which boils down to equality for quasi-stationary state. We denote right hand side of this equation by $p_c$ for the case of epidemic threshold. Now let us take values of parameters $z$, $n$ and $r$ that take us a bit higher than threshold:
\begin{equation}
  \frac{z}{nr}=p_c(1+\eta),
\label{eq:para}
\end{equation}
where $\eta\geq0$. Let us recall some parts of the proof of Theorem \ref{thm:1}. Actually, all we have to do is to rewrite condition for $D\big(t\big|I(t)\big)$ in parametrization given in Eq. (\ref{eq:para}) and notion of $p_c$:
\begin{equation}
D\big(t\big|I(t)\big)\geq r\Big(\frac{z}{nr} \frac{1}{p_c} |I(t)| -|I(t)|\Big)=r\eta |I(t)|.
\end{equation}
We put this result into Eq. (\ref{eq:est}) obtaining:
\begin{equation}
\mathbb{E}\Big( \frac{1}{|I(t+1)|}\Big| I(t), \tau>t\Big)\leq \frac{1}{|I(t)|} \exp(-\frac{r\eta}{d^2}),
\end{equation}
and then we proceed in the same way as in the proof of Theorem \ref{thm:1}. The result is
\begin{equation}
\mathbb{P}(\Lambda>t_c)=\leq \frac{ \mathbb{E}({\exp(\lambda \frac{r\eta}{d^2} )})}{\frac{n}{\epsilon}}\leq\epsilon,\,\, 
\textrm{where}\,\,\,t_c=\frac{d^2}{r\eta}(\ln(n)-\ln(\epsilon)).
\end{equation}
   
\end{proof}

\section{Simulation}
\label{sim}
Here we present simulations for various types of networks, i.e. complete graphs, $G(n,p)$ random graphs, small-world graphs and graphs with power law degree distribution (scale-free networks). Computer-simulational investigations focus on the topics described theoretically in the latter section, i.e. epidemic threshold, quasi-stationary state and mixing time.

\subsection{Epidemic threshold}
We check here the behavior of the process in the very beginning, i.e. exactly at the first time step. Four kinds of networks are being examined: complete graph, $G(n,p)$ random graph with $p=0.5$, small world graph with $k=6$ neighbors on the circle and rewiring probability $p=0.5$ and scale-free network with the exponent $\alpha=2.5$. We vary sizes of networks $n$ and for each type of the graph we choose different recovery probability $r$. Looking for the critical value of contamination probability $z_c$ we change parameter $z$ and check for which value the fraction of infected nodes starts to increase. This procedure is being repeated 100000 times. Results are presented in Fig. \ref{fig:1step}. Visibly, simulations follow the theoretical prediction of Theorem \ref{thm:trsh} prefectly for all four kinds of graphs being examined.

\begin{center}
\begin{figure}
  \includegraphics[height=6.5cm]{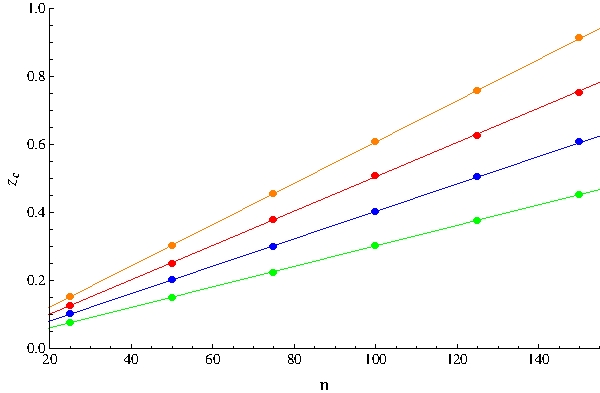}
  \caption{Epidemic threshold for four different type of graphs: dots stay for simulational results, lines present theoretical prediction, Theorem \ref{thm:trsh}. Starting from the bottom we have results for $G(n,p)$ random graph (green line, $r=0.006$), scale-free network (blue line, $r=0.008$), complete graph (red line, $r=0.01$) and small world graph (orange line, $r=0.012$).}
\label{fig:1step}
\end{figure}
\end{center}

\subsection{Quasi-stationary state}
 Results for quasi-stationary state are obtained by performing many runs (typically 1000), finishing each of them at a fixed, long time step (10 000 - 100 000), cutting the beginning phase of rapid increase and fitting a line to the points oscillating about the quasi-stationary value of infected nodes density. There are two types of results possible to obtain after a single run: epidemic either dies at a certain point (i.e. number of infected nodes, due to fluctuations, reaches zero and -- by construction of the model -- stays zero, usually it happens at the very beginning of the process) or the number of infected nodes increases rapidely in the first stage, and then oscillates about some fixed value (see Fig.\ref{fig:singlerun}). In order to compute average quasi-stationary value of infected nodes density we neglect all the runs where there exists such a time step, when the number of infected nodes equals zero (this is due to condition $\tau>t$). 

First we examine complete graphs: in Sec. \ref{cgsec} we provided the exact result for them, Eq. (\ref{eq:fg-stat}). In Fig. \ref{fig:fg-stat-n} we show how quasi-stationary infected nodes density $i_s$ depends on the network size $n$. Then, in Fig.\ref{fig:fg-stat-n}, we show dependence on the contamination probability $z$. Both figures show perfect agreement between simulation and theory.

As we have already seen the behavior of complete graphs and how they relate to the theory described above, let us compare $i_s$ for four different kinds of graphs. In Fig. \ref{fig:ogolne} we show the results for complete graph, $G(n,p)$ random graph with $p=0.1$, small world graph with $k=10$ neighbors on the circle and rewiring probability $p=0.5$ and scale-free network with the exponent $\alpha=2.5$. Sizes of the graphs are fixed, $n=100$. Noticeably, the results for three out of four kinds of graphs are almost the same, while scale-free network goes an entirely different way. Below we will focus on complete, $G(n,p)$ and small world graphs only. In Sec. \ref{largen} we concluded, that $G(n,p)$ graphs for large $n$ should resemble like complete graphs. It is instructive to see that in the limit of large $n$ not only on $G(n,p)$, but also on small world graphs the epidemy behaves the same as on complete graphs, see Fig. \ref{fig:srednieodn}.

\begin{center}
\begin{figure}[!h]
  \includegraphics[height=5.5cm]{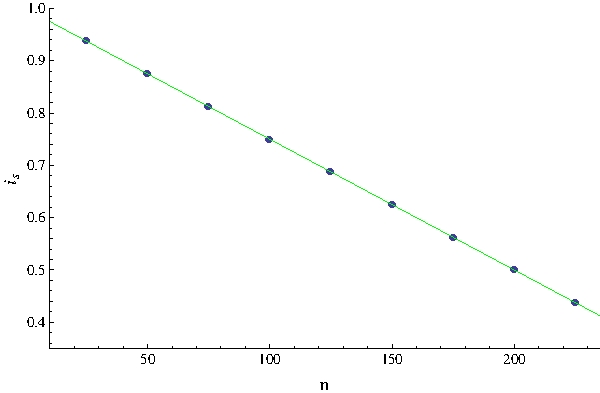}
  \caption{Plot of quasi-stationary value of infected nodes denstiy $i_s$ for complete graphs versus network size $n$: simulation (blue dots) and theoretical result (\ref{eq:fg-stat}) (green line). We fix here $z$=1, $r$=0.005.}
\label{fig:fg-stat-n}
\end{figure}
\end{center}

\begin{center}
\begin{figure}[!h]
  \includegraphics[height=5.5cm]{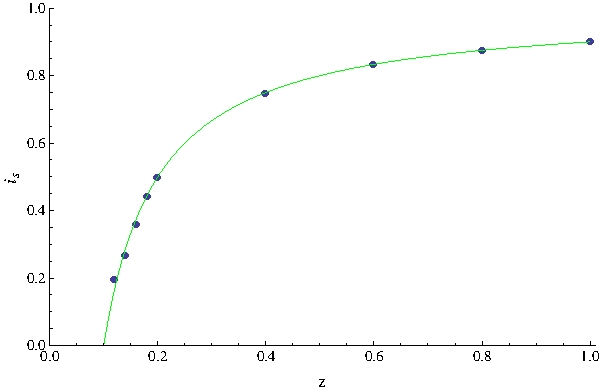}
  \caption{Plot of quasi-stationary value of infected nodes denstiy $i_s$ for complete graphs versus contamination probability $z$: simulation (blue dots) and theoretical result (\ref{eq:fg-stat}) (green line). We fix here $n$=100, $r$=0.002.}
  \label{fig:fg-stat-z}
\end{figure}
\end{center}

\begin{center}
\begin{figure}[!h]
  \includegraphics[height=5.5cm]{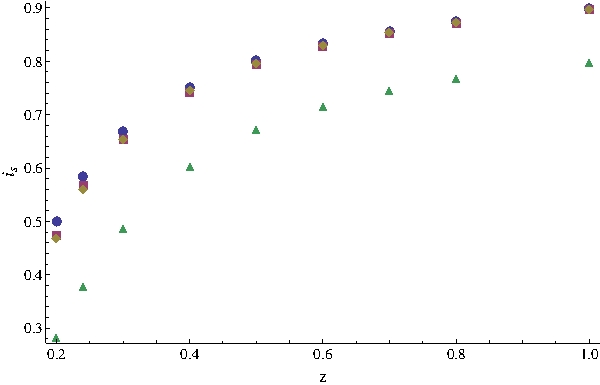}
  \caption{Plot of quasi-stationary value of infected nodes denstiy $i_s$ for complete graph (blue circles), $G(n,p)$ random graph with $p=0.1$ (red squares), small world graph with $k=10$ neighbors on the circle and rewiring probability $p=0.5$ (yellow diamonds) and scale-free network with the exponent $\alpha=2.5$ (green triangles) versus contamination probability $z$ and fixed network size $n=100$.}
\label{fig:ogolne}
\end{figure}
\end{center}

\begin{center}
\begin{figure}[!h]
  \includegraphics[height=6.5cm]{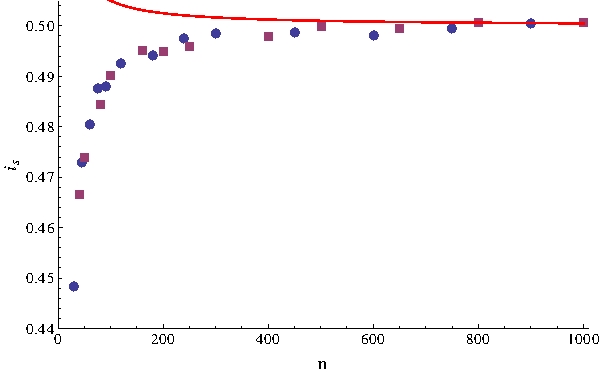}
  \caption{Plot of quasi-stationary value of infected nodes denstiy $i_s$ for $G(n,p)$ random graph with $p=0.2$ (blue circles) and small world graph with rewiring probability $p=0.5$ (red squares) versus network size $n$. Number of neighbors on the circle $k=2n/10$ is chosen such that the edges density $\frac{k}{2n}$ stays fixed. Red line shows theoretical prediction for complete graphs (\ref{eq:fg-stat}). We fix here $z=1$ and $n\times r=1$.}
\label{fig:srednieodn}
\end{figure}
\end{center}
\subsection{Mixing time}
In this section we examine mixing times of the process, i.e. we check how long does it take to reach quasi-stationary state. Fig. \ref{fig:czasy_n} depicts how does average mixing time depend on $ln(n)$, where $n$ is network size, as usual. This is done for complete graph, $G(n,p)$ random graph with $p=0.2$ and small world graph with rewiring probability $p=0.5$. For the same graphs we check average mixing time dependence on inverse of distance from epidemic threshold $\eta$ (see Corrolary \ref{cor:1}). It is shown in Fig. \ref{fig:czasy_e}. These result show actually much more than Theorem and Corollary from Sec. \ref{ms}. We examine here average mixing time and show, that they are linear with $ln(n)$ and $1/\eta$, as theory in Sec. \ref{ms} suggest by bounds of probability of mixing time proportional to $ln(n)$ and $1/\eta$.
\begin{center}
\begin{figure}[!h]
  \includegraphics[height=6.5cm]{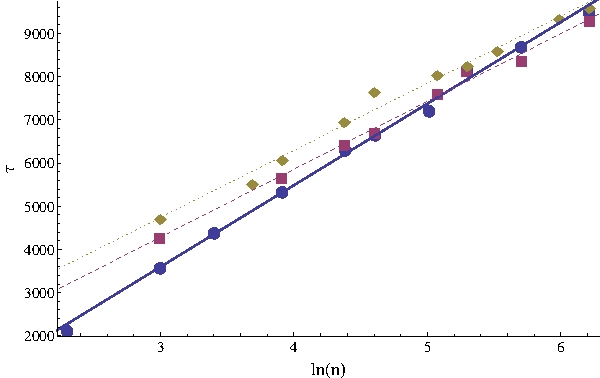}
  \caption{Average mixing time $\mathcal{T}$ for complete graph (blue dots), $G(n,p)$ random graph with $p=0.2$ (red squares), small world graph with rewiring probability $p=0.5$ (yellow diamonds) versus logarithm of network size $ln(n)$. Number of neighbors on the circle $k=2n/10$ is chosen such that the edges density $\frac{k}{2n}$ stays fixed. We fix here $r=0.001$ and $n/z=1000$ in order to have quasi-stationary state not changed. Lines are plotted to guide the eye.}
\label{fig:czasy_n}
\end{figure}
\end{center}

\begin{center}
\begin{figure}[!h]
  \includegraphics[height=6.5cm]{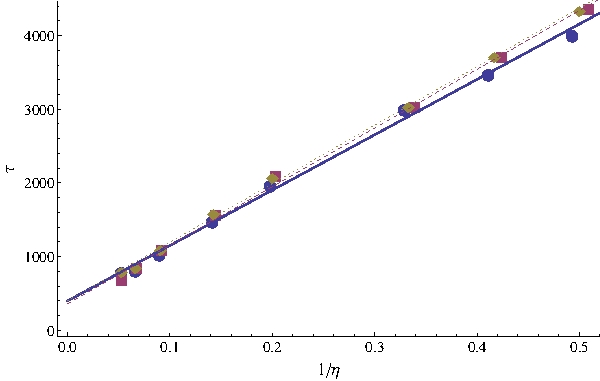}
  \caption{Average mixing time $\mathcal{T}$ for complete graph (blue dots), $G(n,p)$ random graph with $p=0.2$ (red squares), small world graph with rewiring probability $p=0.5$ and $k=20$ neighbors on the circle (yellow diamonds) versus inverse of distance from epidemic threshold $\eta$.  We fix here $r=0.001$ and $n=100$. Lines are plotted to guide the eye.}
\label{fig:czasy_e}
\end{figure}
\end{center}

\section{Conclusions}
\label{conclu}
We have proposed model for epidemic spreading with at most one infection per times step. Starting from the general formula for the change of the number of infected nodes (\ref{eq:d}) we provided condition for epidemic threshold for any kind of graph. Simulational results for epidemic threshold follow the theoretical predictions perfectly. Furthermore, quasi-stationary density of infected nodes for complete and uncorrelated homogenous graphs has been derived and bounds for this density, using the notion of graph conductance, have been obtained. Complete graph simulations show agreement with the theory. Epidemy on $G(n,p)$ random graphs, according to no-correlation in large $n$ limit \cite{gnp-nocor}, as well as on small-world graphs, in the large $n$ limit, behave like epidemy on complete graphs.
\newline\indent
We have proven theorem and corollary that bound the probability of mixing time by values proportional to $ln(n)$ and $1/\eta$, where $n$ and $\eta$ are size of the network and distance form  epidemic threshold respectively. Simulations on complete, $G(n,p)$ and small world graphs show even more, namely that the average mixing time is linear with $ln(n)$ and $1/\eta$.



\begin{thebibliography}{11}
\addcontentsline{toc}{chapter}{Bibliography}

\bibitem{www} R.Albert, H.Jeong, and A.-L.Barab\'asi, Internet: Diameter of the World-Wide Web, {\sl Nature} {\bf 401}, 130, 1999.
\bibitem{film} L.A.N.Amaral, A.Scala, M.Barthélemy, and H.E.Stanley, Classes of small-world networks, {\sl Proc. Natl. Acad. Sci. USA} {\bf 97}, 11149, 2000.
\bibitem{may} R.M.Anderson, and R.M.May, {\sl Infectious Diseases of Humans. Dynamics and Control}, Oxford University Press, 1992.
\bibitem{BarabasiRMP} R.Albert and A.-L.Barab\'asi, Statistical mechanics of complex networks, {\sl Rev. Mod. Phys.} \textbf{74}, 47, 2002.
\bibitem{bailey} N.J.T.Bailey, {\sl The mathematical theory of infectious diseases and its application}, Griffin, 1975.
\bibitem{ball} P.Ball, {\sl Critical Mass: How One Thing Leads to Another}, Heinemann, 2004.
\bibitem{pa}A.-L.Barab\'asi and R.Albert, Emergence of Scaling in Random Networks, {\sl Science} \textbf{286}, 509, 1999.
\bibitem{DynProc} A.Barrat, M.Barthélemy, A.Vespignani, {\sl Dynamical Processes on Complex Networks}, Cambridge University Press, 2008.
\bibitem{jtb} M.Barthélemy, A.Barrat, R.Pastor-Satorras, A.Vespignani, Dynamical patterns of epidemic outbreaks in complex
heterogeneous networks, {\sl J. Theor. Biol.} {\bf 235}, 275, 2005.
\bibitem{vote} A.T.Bernardes, D.Stauffer, J.Kertesz, Election results and the Sznajd model on Barabasi network, {\sl Eur. Phys. J. B}, {\bf 25}, 123, 2002.
\bibitem{hid1} M.Bogu\~n\'a, F.Papadopoulos, D.Krioukov, Sustaining the Internet with Hyperbolic Mapping, {\sl Nature Communications} {\bf 1}, 62, 2010.
\bibitem{cor2} M.Bogu\~n\'a, and R.Pastor-Satorras, Class of correlated random networks with hidden variables, {\sl Phys. Rev. E} {\bf 68}, 036112, 2003.
\bibitem{corr} M.Bogu\~n\'a, R.Pastor-Satorras, and A.Vespignani, Epidemic spreading in complex networks with degree correlations, {\sl Lect. Notes Phys.} {\bf 625}, 127, 2003.
\bibitem{BPSV} M.Bogu\~n\'a, R.Pastor-Satorras, and A.Vespignani, Absence of Epidemic Threshold in Scale-Free Networks with Degree Correlations, {\sl Phys. Rev. Lett.} {\bf 90}, 028701, 2003.
\bibitem{blb} B.Bollobas and F.R.K.Chung, The diameter of a cycle plus a random matching, {\sl SIAM J. Disc. Math.} \textbf{1}, 328, 1988.
\bibitem{gt} B.Bollobas, {\sl Graph Theory: An Introductory Course}, Springer--Verlag, 1979.
\bibitem{bremaud}P.Br\'emaud, {\sl Markov Chains. Gibbs Fields, Monte Carlo Simulations and Queues}, Springer--Verlag, 1999.
\bibitem{quasiergodic} L.A.Breyer, R.O.Roberts, A quasi-ergodic theorem for evanescent processes, {\sl Stochastic Processes and their Applications} {\bf 84}, 177, 1999.
\bibitem{branchp} M.Buiculescu, On quasi-stationary distributions for multitype Galton-Watson processes, {\sl J. Appl. Probab.} {\bf 12}, 60, 1975.
\bibitem{taxa} B.Burlando, The fractal dimension of taxonomic systems, {\sl J. Theor. Biol.} {\bf 146}, 99, 1990.
\bibitem{sf-d} F.Chung, and L.Lu, Spectra of random graphs with given expected degrees, {Proc. Natl. Acad. Sci. USA} {\bf 99}, 6313, 2002.
\bibitem{randomw} D.J.Daley, Quasi-stationary behaviour of a left-continuous random walk, {\sl Ann. Math. Statist.} {\bf 40}, 532, 1969. 
\bibitem{qstat} J.N.Darroch, E.Seneta, On Quasi-Stationary Distributions in Absorbing Discrete-Time Finite Markov Chains, {J. Appl. Prob.} {\bf 2}, 88, 1965.
\bibitem{qstat-cont} J.N.Darroch, E.Seneta,  On quasi-stationary distributions in absorbing continuous-time finite Markov chains, {J. Appl. Prob.} {\bf 4}, 192, 1967.
\bibitem{qstat-rev} E.A.van Doorn, and P.K.Pollett,  Quasi-stationary distributions for discrete state models, {\sl Eur. J. Oper. Res.} {\bf 230}, 1, 2013.
\bibitem{ER} P.Erd\"os and A.Rényi, On Random Graphs, {\sl I. Publ. Math. Debrecen} \textbf{6}, 290, 1959.
\bibitem{ER2} P.Erd\"os and A.Rényi, On the evolution of random graph, {\sl Publ. Math. Inst. Hung. Acad. Sci.} \textbf{5}, 17, 1960.
\bibitem{heavy}S.Foss, D.Korshunov, S.Zachary, {\sl An Introduction to Heavy--Tailed and Subexponential Distributions}, Springer--Verlag, 2011.
\bibitem{Ganczarek} W.Ganczarek, Tenement house model, e-print arXiv:1304.1437 [physics.soc-ph], 2013.
\bibitem{Gantmacher} F.R.Gantmacher, {\sl Applications of the Theory of Matrices}, Interscience Publishers, 1959.
\bibitem{Gilbert} E.N.Gilbert, Random graphs, {\sl Ann. Math. Stat.} \textbf{30}, 1141-1144, 1959.
\bibitem{im2} J.Goldenberg, Y.Shavitt, E.Shir, and S.Solomon, Distributive immunization of networks against viruses using the 'honey-pot' architecture, {\sl Nature Physics} {\bf 1}, 184, 2005.
\bibitem{Localization} A.V.Goltsev, S.N.Dorogovtsev, J.G.Oliveira, and J.F.F.Mendes, Localization and Spreading of Diseases in Complex Networks, {\sl Phys. Rev. Lett.} {\bf 109}, 128702, 2012.
\bibitem{FullM} S.G\'omez, A.Arenas, J.Borge-Holthoefer, S.Meloni, and Y.Moreno, Discrete-time Markov chain approach to contact-based disease spreading in complex networks, {\sl Europhys. Lett.} {\bf 89}, 38009, 2010.
\bibitem{NOOC} M.Grossglauser, P.Thiran, {\sl Networks out of Control: Models and Methods for Random Networks}, 
École Polytechnique Fédérale de Lausanne lecture notes, 2012.  
\bibitem{quakes} B.Gutenberg, and R.F.Richter, Frequency of earthquakes in California, {\sl B. Seismol. Soc. Am.} \textbf{34}, 185, 1944.
\bibitem{gt2} J.M.Harris, J.L.Hirst, M.J.Mossinghoff, {\sl Combinatorics and Graph Theory}, Springer Science+Business Media, 2008.
\bibitem{hy} H.W.Hethcote, and J.A.Yorke, Gonorrhea: Transmission Dynamics and Control, {\sl Lect. Notes Biomath.} \textbf{56}, 1, 1984.
\bibitem{im3} P.Holme, Efficient local strategies for vaccination and network attack, {Europhys. Lett.} {\bf 68}, 908, 2004.
\bibitem{frog} R.A.Howard, {\sl Dynamic Probabilistic Systems, vol. 1}, John Wiley and Sons, 1971.
\bibitem{protein} H.Jeong, S.Mason, A.-L.Barab\'asi, and Z.N.Oltvai, Lethality and centrality in protein networks, {\sl Nature} {\bf 411}, 41, 2001. 
\bibitem{klemm} K.Klemm, and V.M.Equ\'iluz, Highly clustered scale-free networks, {\sl Phys. Rev. E} {\sl 65}, 036123, 2002.
\bibitem{koralov} L.B.Koralov, Y.G.Sinai, {\sl Theory of Probability and Random Processes}, Springer-Verlag, 2007.  
\bibitem{sex} F.Liljeros, C.R.Edling, and L.A.N.Amaral, Sexual networks: implications for the transmission
of sexually transmitted infections, {\sl Microbes Infec.} {\bf 5}, 189, 2003.
\bibitem{malham} S.J.A.Malham, {\sl Differential Equations and Linear Algebra Lecture Notes}, Heriot-Watt University, 2008. 
\bibitem{Mil1} S.Milgram, The small world problem, {\sl Psychol. Today} \textbf{2}, 60, 1967. 
\bibitem{cor3} Y.Moreno, J.B.G\'omez, and A.F.Pacheco, Epidemic Incidence in Correlated Complex Networks, {\sl Phys. Rev. E} {\bf 68}, 035103, 2003.
\bibitem{moreno} Y.Moreno, and A.V\'azquez, Disease spreading in structured scale-free networks, {\sl Eur. Phys. J. B} {\bf 31}, 265, 2003.
\bibitem{gnp-nocor} M.E.J.Newman, Assortative mixing in networks, {\sl Phys. Rev. Lett.} {\bf 89}, 208701, 2002.
\bibitem{Networks} M.E.J.Newman, {\sl Networks: An introduction}, Oxford University Press, 2011.
\bibitem{PowerLawReview} M.E.J.Newman, Power laws, Pareto distributions and Zipf's law, {\sl Contemp. Phys.} \textbf{46}, 323, 2005.
\bibitem{NewmanSIAM} M.E.J.Newman, The Structure and Function of Complex Networks, {\sl SIAM Rev.} {\bf 45}, 167, 2003.
\bibitem{NewGirv} M.E.J.Newman, M.Girvan, Finding and evaluating community structure in networks, {\sl Phys. Rev. E} {\bf 69}, 026113, 2004.
\bibitem{SW2} M.E.J.Newman, and D.J.Watts, Scaling and percolation in the small-world network model, {\sl Phys. Rev. E} {\bf 60}, 7332, 1999.
\bibitem{PSV} R.Pastor--Satorras, and A.Vespignani, Epidemic Spreading in Scale-Free Networks, {\sl Phys. Rev. Lett.} {\bf 86}, 3200, 2001.
\bibitem{im1} R.Pastor--Satorras, and A.Vespignani, Immunization of complex networks, {\sl Phys. Rev. E} {\sl 65}, 036104, 2002.
\bibitem{finite} R.Pastor-Satorras, and A.Vespignani, Epidemic dynamics in finite size scale-free networks, {\sl Phys. Rev. E} {\bf 65}, 035108(R) 2002.
\bibitem{cor1} R.Pastor-Satorras, and A.Vespignani, {\sl Evolution and Structure of the Internet: A Statistical Physics Approach}, Cambridge University Press, 2004.
\bibitem{Paulo} T.Petermann, P.De Los Rios, Cluster approximations for epidemic processes: a systematic description of correlations beyond the pair level, {\sl J. Theor. Biol.} {\bf 229}, 1, 2004. 
\bibitem{StochMod} M.A.Pinsky, S.Karlin, {\sl An Introduction to Stochastic Modeling}, Elsevier, 2011.
\bibitem{citations} S.Redner, How Popular is Your Paper? An Empirical Study of the Citation Distribution, {\sl Eur. Phys. J. B} {\bf 4}, 131, 1998.
\bibitem{Boguna} A.S.Saumell-Mendiola, M.\'Angeles Serrano, and M.Bogu\~n\'a, Epidemic spreading on interconnected networks, {\sl Phys. Rev. E} {\bf 86}, 026106, 2012.
\bibitem{Seneta} E.Seneta, {\sl Non--negative Matrices and Markov Chains}, Springer Science+Business Media, 2006.
\bibitem{qstat-con} E.Seneta, and D.Vere--Jones, On quasi-stationary distributions in discrete-time Markov chains with a denumerable infinity of states, {\sl J. Appl. Prob.} {\bf 3}, 403, 1966.
\bibitem{hid2} M. A. Serrano, M.Bogu\~n\'a, F. Sagu\'es,  Uncovering the hidden geometry behind metabolic networks, {\sl Mol. BioSyst.} {\bf 8}, 843, 2012.
\bibitem{shah} D.Shah, Gossip Algorithms, {\sl Foundations and Trends in Networking} {\bf 3}, 1, 2009.
\bibitem{simon} H.Simon, On a Class of Skew Distribution Functions, {\sl Biometrika} {\bf 42}, 425, 1955.
\bibitem{simonsen} I.Simonsen, Diffusion and networks: A powerful combination!, {\sl Physica A} {\bf 357}, 317, 2005.
\bibitem{Con} A.Sinclair, {\sl Algorithms for random generation and counting: a Markov chain approach}, Birkhauser Verlag, 1993.
\bibitem{smoluch} M.Smoluchowski, On the mean free path of molecules of gas and its relationship to the theory of diffusion (in French), {\sl Bull. Int. De l'Acad. Des Sci. De Cracovie} 202, 1906.
\bibitem{stir} D.Stirzaker, {\sl Stochastic Processes \& Models}, Oxford University Press, 2005. 
\bibitem{SW3} S.H.Strogatz, Exploring complex networks, {\sl Nature} {\bf 410}, 268, 2001. 
\bibitem{economics}S.Thurner, P.Klimek, and R.Hanel, Schumpeterian economic dynamics as a quantifiable model of evolution, {\sl New J. Phys.}, \textbf{12}, 075029, 2010.
\bibitem{Mil2}J.Travers, S.Milgram, An experimental study of the small world problem, {\sl Sociometry} \textbf{32}, 425, 1969.
\bibitem{generalss} R.L.Tweedie, Truncation Approximations of Invariant Measures for Markov Chains, {\sl J. Appl. Probab.} {\bf 35}, 517, 1998.
\bibitem{SW}D.J.Watts and S.H.Strogatz, Collective dynamics of 'small-world' networks, {\sl Nature} {\bf 393}, 440, 1998.
\bibitem{yag}A.M.Yaglom, Certain limit theorems of the theory of branching stochastic processes (in Russian), {\sl Dokl. Acad. Nauk SSSR} {\bf 56}, 795, 1947.
\bibitem{yule}G.U.Yule, A Mathematical Theory of Evolution Based on the Conclusions of Dr. J. C. Willis, F.R.S., {\sl J. R. Stat. Soc.} \textbf{88}, 433, 1925.
\bibitem{word}G.K.Zipf, {\sl Human Behaviour and the Principle of Least Effort}, Addison-Wesley, 1949.
\end{thebibliography}
\end{document}